\newfont{\bbb}{bbm10 scaled 1100}         
\newcommand{\IN}{\mbox{\bbb N}}          
\DeclareSymbolFont{frenchscript}{OMS}{ztmcm}{m}{n}
\spnewtheorem{observation}[theorem]{Observation}{\bfseries}{\itshape}
\newcommand{\Thm}[1]{Theorem~\ref{thm:#1}}
\newcommand{\Cor}[1]{Corollary~\ref{cor:#1}}
\newcommand{\Prop}[1]{Proposition~\ref{prop:#1}}
\newcommand{\Lem}[1]{Lemma~\ref{lem:#1}}
\newcommand{\Def}[1]{Definition~\ref{df:#1}}
\newcommand{\Ex}[1]{Example~\ref{ex:#1}}
\newcommand{\Obs}[1]{Observation~\ref{obs:#1}}
\newcommand{\App}[1]{Appendix~\ref{sec:#1}}
\newcommand{\Sect}[1]{Section~\ref{sec:#1}}
\newcommand{\SSect}[1]{Section~\ref{ssec:#1}}
\newcommand{\Tab}[1]{Table~\ref{tab:#1}}
\newcommand{\Eq}[1]{\eqref{eq:#1}}
\newcommand{\mylabel}[1]{\hypertarget{lab:#1}{\ \mbox{{\scriptsize\sc (#1)}}}}
\newcommand{\myref}[1]{\hyperlink{lab:#1}{\sc (#1)}\xspace}
\DeclareMathSymbol{\A}{\mathord}{frenchscript}{65}   
\DeclareMathSymbol{\B}{\mathord}{frenchscript}{66}   
\DeclareMathSymbol{\C}{\mathord}{frenchscript}{67}   
\DeclareMathSymbol{\FS}{\mathord}{frenchscript}{70}  
\DeclareMathSymbol{\HC}{\mathord}{frenchscript}{72}  
\DeclareMathSymbol{\Lab}{\mathord}{frenchscript}{76} 
\newcommand{\T}{{\rm Ex}}                         
\newcommand{\E}{P}                                   
\newcommand{\F}{Q}                                   
\newcommand{\AI}{A}                                  
\newcommand{\BI}{B}                                  
\newcommand{\CI}{C}                                  
\def\moverlay{\mathpalette\mov@rlay}
\def\mov@rlay#1#2{\leavevmode\vtop{%
   \baselineskip\z@skip \lineskiplimit-\maxdimen
   \ialign{\hfil$\m@th#1##$\hfil\cr#2\crcr}}}
\newcommand{\charfusion}[3][\mathord]{
    #1{\ifx#1\mathop\vphantom{#2}\fi
        \mathpalette\mov@rlay{#2\cr#3}
      }
    \ifx#1\mathop\expandafter\displaylimits\fi}
\newcommand{\dcup}{\charfusion[\mathbin]{\cup}{\mbox{\Large$\cdot$}}}
\newcommand{\plat}[1]{\raisebox{0pt}[0pt][0pt]{#1}}  
\newcommand{\bis}{\raisebox{.3ex}{$\underline{\makebox[.7em]{$\leftrightarrow$}}$}}
\newcommand{\ABC}{ABC\xspace}
\newcommand{\en}{\textit{en}}
\newcommand{\src}{{\it src}}
\newcommand{\target}{{\it target}}
\newcommand{\shar}[1]{\mathord{\stackrel{#1}{\rightarrow}}}
\def\comesfrom{\@transition\leftarrowfill}
\def\goesto{\@transition\rightarrowfill}
\def\ngoesto{\@transition\nrightarrowfill}
\def\Goesto{\@transition\Rightarrowfill}
\def\nGoesto{\@transition\nRightarrowfill}
\def\xmapsto{\@transition\mapstofill}
\def\nxmapsto{\@transition\nmapstofill}
\def\@transition#1{\@@transition{#1}}
\newbox\@transbox
\newbox\@arrowbox
\newbox\@downbox
\def\@@transition#1#2%
\wd\@transbox{#1}
\@transbox\hbox{$\mathop{\box\@arrowbox}\limits^{\box\@transbox}$}
\def\nrightarrowfill{$\m@th\mathord-\mkern-6mu%
  \cleaders\hbox{$\mkern-2mu\mathord-\mkern-2mu$}\hfill
  \mkern-6mu\mathord\not\mkern-2mu\mathord\rightarrow$}
\def\Rightarrowfill{$\m@th\mathord=\mkern-6mu%
  \cleaders\hbox{$\mkern-2mu\mathord=\mkern-2mu$}\hfill
  \mkern-6mu\mathord\Rightarrow$}
\def\nRightarrowfill{$\m@th\mathord=\mkern-6mu%
  \cleaders\hbox{$\mkern-2mu\mathord=\mkern-2mu$}\hfill
  \mkern-6mu\mathord\not\mathord\Rightarrow$}
\def\mapstofill{$\m@th\mathord\mapstochar\mathord-\mkern-6mu%
  \cleaders\hbox{$\mkern-2mu\mathord-\mkern-2mu$}\hfill
  \mkern-6mu\mathord\rightarrow$}
\def\nmapstofill{$\m@th\mathord\mapstochar\mathord-\mkern-6mu%
  \cleaders\hbox{$\mkern-2mu\mathord-\mkern-2mu$}\hfill
  \mkern-6mu\mathord\not\mkern-2mu\mathord\rightarrow$}
\newcommand{\ar}[1]{\mathrel{\goesto{#1}}}            
\newcommand{\nar}[1]{\mathrel{\ngoesto{#1\;}}}        
\newcounter{Hequation}
\g@addto@macro\equation{\stepcounter{Hequation}}
\begin{document}

\title{Progress, Fairness and Justness
\texorpdfstring{\\}{}in Process Algebra}
\titlerunning{Progress, Fairness and Justness}
\authorrunning{R.J.\ van Glabbeek \& P.\ H\"ofner}
\author{
Rob van Glabbeek \inst{1}\inst{2}\and
Peter H\"ofner\inst{1}\inst{2}
}
\institute{
NICTA\thanks{NICTA is funded by the Australian Government through the
    Department of Communications and the Australian Research Council
    through the ICT Centre of Excellence Program.}, Sydney, Australia\and
Computer Science and Engineering, UNSW, Sydney, Australia
}
\maketitle

\setcounter{footnote}{0}

\begin{abstract}
To prove liveness properties of concurrent systems, it is often necessary to postulate progress, fairness and justness properties.
This paper investigates how the necessary progress, fairness and justness assumptions can be added to or incorporated in a standard
process-algebraic specification formalism. We propose a formalisation that can be applied to a wide range of
process algebras.  The presented formalism is used to reason about route discovery and packet delivery in the setting of wireless networks.
 \end{abstract}

\section{Introduction}\label{sec:introduction}
In a process-algebraic setting, safety properties of concurrent systems
are usually shown by the use of invariants on a labelled transition system (LTS).
This does not require any assumptions about the behaviour of
concurrent systems beyond their modelling as states in an LTS\@.
In order to prove liveness properties on the other hand
it is usually necessary to postulate certain progress, fairness and justness
properties as part of the specification of the systems under investigation.
This paper investigates how the necessary progress, fairness and justness
properties can be added to or incorporated in a standard process-algebraic specification formalism. 
Liveness properties are formalised in terms of a temporal logic interpreted on
complete paths in the LTS of the process algebra.
Progress, fairness and justness properties are captured by fine-tuning the definition of what
constitutes a complete path.

\Sect{ABC} introduces an Algebra of Broadcast Communication 
(\ABC)---a variant of the process algebra CBS \cite{CBS91}---that 
is essentially CCS \cite{Mi89} augmented with a
formalism for broadcast communication. \ABC is given a structural operational
semantics \cite{Pl04} that interprets expressions as states in an LTS\@.
We develop our approach for formalising liveness properties as
well as progress, fairness and justness assumptions in terms of this
process algebra. 
However, the presented approach can be applied to a wide range of process algebras.
\ABC is largely designed to be a convenient starting point for 
transferring the presented theory to such algebras; it
contains all the features for which we are aware that the application of our theory poses non-trivial problems,
and, at the same time, is kept as simple as possible.
In \cite{TR13} we apply the same approach to a more involved
process algebra called AWN (Algebra for Wireless Networks). 

\Sect{properties} recalls Linear-time Temporal Logic (LTL)~\cite{Pnueli77}
and describes a way to interpret it on a labelled transition system that
arises as the semantic interpretation of a process algebra like \ABC.
This yields a way to represent desirable properties of concurrent
systems specified in such a process algebra by means of LTL properties.
We illustrate this by formulating \emph{packet delivery},
a liveness property studied in \cite{TR13} in the context of wireless mesh network protocols.
The presented development applies just as well to desirable properties of concurrent
systems specified in branching time temporal logics such as CTL and CTL${}^{*}$.

In \SSect{progress} we formulate an elementary \emph{progress assumption} on the
behaviour of processes, without which no useful liveness property of a system
will hold. In the standard interpretation of temporal logic
\cite{Pnueli77,Emerson91} a stronger progress assumption is built in, but we
argue that this stronger version is not a valid assumption in the context of
reactive systems. In order to derive a progress assumption that is both
necessary and justifiable in the reactive context we envision, we introduce the
concept of an \emph{output action}, which cannot be blocked by the context in
which a process is running. Although output actions are commonplace in
many specification formalisms, their use in process algebra is limited at best,
and we have not seen them used to define progress properties.
The main reason for working with a language that is
richer than CCS, is that restricted to CCS the set of output actions would be empty.

In \SSect{fairness} we discuss \emph{weak} and \emph{strong} \emph{fairness assumptions} 
and propose a formalisation in the context of process algebras like \ABC by
augmenting a process-algebraic specification $P$ with a \emph{fairness specification},
which is given as a collection of temporal logic formulas. This follows the
traditional approach of TLA~\cite{TLA} and other formalisms \cite{Fr86}, ``in which
first the legal computations are specified, and then a fairness notion
is used to exclude some computations which otherwise would be legal'' \cite{AFK88}.
However, in order to do justice to the reactive nature of the systems under consideration,
we need a more involved consistency requirement between the
process-algebraic specification of a system and its fairness specification.

In \SSect{justness} we propose a \emph{justness assumption} for parallel-composed
transition systems, essentially assuming progress of all the component processes.
In the literature, such justness properties are typically seen as special cases of
weak fairness properties, and the term \emph{justice} is often used as a 
synonym for \emph{weak fairness}. Here we consider justness to be a notion
distinct from fairness, and propose a completely different formalisation.
Fairness is a property of schedulers that repeatedly choose between tasks, 
  whereas justness is a property of parallel-composed transition systems.
Nevertheless, we show that our notion of justness coincides with the original notion of justice of
\cite{LPS81}---a weak fairness property. This requires an interpretation of the work of
\cite{LPS81} applied to LTSs involving a more precise definition---and decision---%
of what it means for a transition to be continuously enabled.

Finally, \Sect{fairness implementations} addresses the question whether system
specifications consisting of a process-algebraic and a fairness specification
allow implementations that can be described entirely process-algebraic, i.e.,
without fairness component. Here an `implementation' allows the replacement of
nondeterministic choices by (more) deterministic choices following a particular
scheduling policy.
In the context of CCS we
conjecture a negative answer by showing an extremely simple
fair scheduling specification---given as a CCS expression augmented with a
fairness specification---that could not be implemented by any CCS expression alone.
This specification does allow an implementation in \ABC without fairness
 component, which takes advantage of justness properties for output actions.

\section{\ABC---An Algebra of Broadcast Communication}\label{sec:ABC}
\begin{table*}[t]
\normalsize
\centering
\caption{Structural operational semantics of \ABC}
\label{tab:ABC}
\framebox{
$\begin{array}{@{}ccc@{}}
\alpha.\E \ar{\alpha} \E  \mylabel{Act} &
\displaystyle\frac{\E\ar{\alpha} \E'}{\E+\F \ar{\alpha} \E'}  \mylabel{Sum-l}&
\displaystyle\frac{\F \ar{\alpha} \F'}{\E+\F \ar{\alpha} \F'}  \mylabel{Sum-r}
\\[3ex]
\displaystyle\frac{\E\ar{\eta} \E'}{\E|\F \ar{\eta} \E'|\F} \mylabel{Par-l}&
\displaystyle\frac{\E\ar{c} \E' ,~ \F \ar{\bar{c}} \F'}{\E|\F \ar{\tau} \E'| \F'} \mylabel{Comm}&
\displaystyle\frac{\F \ar{\eta} \F'}{\E|\F \ar{\eta} \E|\F'} \mylabel{Par-r}
\\[3ex]
\!\displaystyle\frac{\E\ar{b\sharp_1} \E' ,~ \F \nar{b?}}{\E|\F \ar{b\sharp_1} \E'| \F}  \mylabel{Bro-l}&
\displaystyle\frac{\E\ar{b\sharp_1} \E' ,~ \F \ar{b\sharp_2} \F'}{\E|\F \ar{b\sharp} \E'| \F'} \mylabel{Bro-c} &
\displaystyle\frac{\E\nar{b?} ,~ \F \ar{b\sharp_2} \F'}{\E|\F \ar{b\sharp_2} \E| \F'}  \mylabel{Bro-r}\\[-5pt]
& \scriptstyle \sharp_1\circ\sharp_2 = \sharp \neq \_ ~~\mbox{with}~~
    \begin{array}{c@{\ }|@{\ }c@{\ \ }c}
    \scriptstyle \circ & \scriptstyle ! & \scriptstyle ? \\
    \hline
    \scriptstyle ! & \scriptstyle \_ & \scriptstyle ! \\
    \scriptstyle ?& \scriptstyle ! & \scriptstyle ? \\
    \end{array}
\\[1ex]
\displaystyle\frac{\E \ar{\ell} \E'}{\E[f] \ar{f(\ell)} \E'[f]} \mylabel{Rel}&
\displaystyle\frac{\E \ar{\ell} \E'}{\E\backslash c \ar{\ell} \E'\backslash c}~(c\mathord{\neq}\ell\mathord{\neq}\bar{c}) \mylabel{Res}&
\displaystyle\frac{\E \ar{\ell} \E'}{A\ar{\ell}\E'}~(A \mathord{\stackrel{{\it def}}{=}} P) \mylabel{Rec}
\end{array}$}
\end{table*}

The Algebra of Broadcast Communication (\ABC) is parametrised with sets ${\A}$ of \emph{agent identifiers},
$\B$ of \emph{broadcast names} and $\C$ of \emph{handshake communication names};
each $\AI\in\A$ comes with a defining equation \plat{$\AI \stackrel{{\it def}}{=} P$}
with $P$ being a guarded \ABC expression as defined below.

The collections $\B!$ and $\B?$ of \emph{broadcast} and \emph{receive}
actions are given by $\B\sharp:=\{b\sharp \mid b\mathbin\in\B\}$ for $\sharp \in \{!,?\}$.
The set $\bar{\C}$ of \emph{handshake communication co-names} is $\bar\C:=\{\bar{c} \mid c\in {\C}\}$,
and
the set $\HC$ of \emph{handshake actions} is 
\plat{$\HC:=\C \dcup \bar\C$}, the disjoint union of the names and co-names.
The function
$\bar{\rule{0pt}{7pt}.}$ is extended to $\HC$ by
declaring $\bar{\bar{\mbox{$c$}}}=c$.

Finally, \plat{$Act := \B! \dcup \B? \dcup \HC\dcup \{\tau\}$} is the set of \emph{actions}.
Below, $\AI$, $\!\BI$, $\!\CI$ range over $\A\!$, $b$ over $\B$, $c$ over $\HC\!$,
$\eta$ over $\HC\cup\{\tau\}$ and $\alpha,\ell$ over $Act$.
A \emph{relabelling} is a function $f\!:(\B\mathbin\rightarrow \B) \cup (\C\mathbin\rightarrow \C)$.
It extends to $Act$ by $f(\bar{c})\mathbin=\overline{f(c)}$, $f(b\sharp)\mathbin=f(b)\sharp$ and $f(\tau):=\tau$.
The set $\T_{\rm \ABC}$ of \ABC expressions is the smallest set including:

\vspace{-1.2mm}
\begin{center}
\begin{tabular}{@{}l@{~~}l@{\qquad\quad}l@{~~}l@{\qquad\quad}l@{~~}l@{}}
$0$ & \emph{inaction}&
$\alpha.\E$  & \emph{prefixing}&
$\E+\F$  & \emph{choice} \\
$\E|\F$ & \emph{parallel composition}&
$\E\backslash c$  & \emph{restriction} &
$\E[f]$ &  \emph{relabelling} \\
$\AI$ &  \emph{agent identifier}\\
\end{tabular}
\end{center}
\vspace{-1.2mm}

\noindent for $\E,\F\in\T_{\rm \ABC}$ and relabellings $f$.
An expression is guarded if each agent identifier occurs within the scope of a prefixing operator.
The semantics of \ABC is given by the labelled transition relation
$\mathord\rightarrow_{\rm \ABC} \subseteq \T_{\rm \ABC}\times Act \times\T_{\rm \ABC}$, where the transitions 
\plat{$\E\ar{\ell}\F$}
are derived from the rules of \Tab{ABC}.

\ABC is basically the Calculus of Communicating Processes (CCS)~\cite{Mi89} augmented with a
formalism for broadcast communication taken from the Calculus of Broadcasting Systems (CBS)~\cite{CBS91}.
The syntax without the broadcast and receive actions and all rules except \myref{Bro-l},
\myref{Bro-c} and \myref{Bro-r} are taken verbatim from CCS\@. However, 
the rules now cover the different name spaces; \myref{Act} for example allows labels of
broadcast and receive actions. The rule \myref{Bro-c}---without rules
like \myref{Par-l} and \myref{Par-r} with label $b!$---implements a
form of broadcast communication where any broadcast $b!$
performed by a component in a parallel composition is guaranteed to be received by any other
component that is ready to do so, i.e., in a state that admits a $b?$-transition.
In order to ensure associativity of the parallel composition, one also needs this 
rule for components receiving at the same time ($\sharp_1\mathord=\sharp_2\mathord=\mathord{?}$).
The rules \myref{Bro-l} and \myref{Bro-r} are added to make
broadcast communication \emph{non-blocking}: without them a component could be delayed in
performing a broadcast simply because one of the other components is not ready to receive it.

\begin{theorem}
Strong bisimilarity \cite{Mi89} is a congruence for all operators of \ABC.
\end{theorem}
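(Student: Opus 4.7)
The plan is to show that strong bisimilarity $\bis$ is preserved by every operator of \ABC; congruence with respect to arbitrary contexts then follows by structural induction. For each operator $\mathit{Op}$ I would exhibit a relation $R$ containing all pairs of the form $(\mathit{Op}(\vec P), \mathit{Op}(\vec{P'}))$ with $P_i \bis P_i'$, and verify that $R \cup \bis$ is a bisimulation by case analysis on the SOS rules of \Tab{ABC} that can derive a transition of $\mathit{Op}(\vec P)$.

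For the operators inherited from CCS---prefixing, choice, restriction, relabelling, and recursive agent identifiers---the arguments will be the classical ones from Milner~\cite{Mi89}. In each such case the relevant rule has only positive premises and simply transmits a label from a subterm through the operator (possibly relabelled, or subject to a side condition), so the transfer property reduces immediately to the inductive hypothesis. The additional label spaces $\B!$ and $\B?$ introduced by \ABC do not disturb this reasoning, since the rules treat them uniformly as actions.

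The main obstacle will be parallel composition, specifically the broadcast rules \myref{Bro-l} and \myref{Bro-r}, which contain the negative premise \plat{$\F \nar{b?}$}. Suppose $P \bis P'$ and $Q \bis Q'$, and consider a derivation $P|Q \ar{b!} P_1|Q$ obtained from \myref{Bro-l} using $P \ar{b!} P_1$ and $Q \nar{b?}$. From $P \bis P'$ one gets a matching $P' \ar{b!} P_1'$ with $P_1 \bis P_1'$; the crucial extra step is the observation that $\bis$ preserves the \emph{absence} of transitions: if $Q'$ admitted a $b?$-transition, then $Q \bis Q'$ would force a matching $b?$-transition from $Q$, contradicting $Q \nar{b?}$. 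Hence $Q' \nar{b?}$, \myref{Bro-l} applies to $P'|Q'$, and the required move is obtained. The positive rules \myref{Par-l}, \myref{Par-r}, \myref{Comm} and \myref{Bro-c} would be handled exactly as in CCS.

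Alternatively, the theorem could be obtained as an instance of the Bloom--Istrail--Meyer meta-result that bisimilarity is a congruence for every transition system specification in GSOS format; the rules of \Tab{ABC} fit that format, with recursion treated in the standard way via defining equations. I expect the direct argument to be more transparent in this setting, and the preservation of negative premises by $\bis$ to be the only ingredient genuinely beyond Milner's proofs for CCS.
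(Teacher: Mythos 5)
Your proposal is correct, but your primary route is not the one the paper takes: the paper's entire proof is a one-line appeal to the meta-theorem of Bloom, Istrail \& Meyer that strong bisimilarity is a congruence for any operational specification in GSOS format, observing that all rules of the \ABC semantics (negative premises included) fit that format. That is exactly the ``alternative'' you mention in your closing paragraph. Your direct argument---exhibiting a bisimulation per operator and checking the transfer property against each rule---also goes through, and you correctly isolate the one genuinely new ingredient beyond Milner's CCS proofs: the negative premise $Q \nar{b?}$ in \myref{Bro-l} and \myref{Bro-r} is respected because bisimilar processes enable the same action labels, so $Q \bis Q'$ and $Q \nar{b?}$ force $Q' \nar{b?}$. (Minor point: the transition produced by \myref{Bro-l} may be labelled $b!$ or $b?$ depending on $\sharp_1$, but your argument is insensitive to this.) The trade-off is the usual one: the direct construction is self-contained and makes the role of the negative premise transparent, whereas the GSOS route is essentially free once the format check is done and scales automatically to any further operators one might add to the language---which is presumably why the authors chose it.
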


\begin{proof}
This follows immediately from the observation that all rules of \Tab{ABC} are in the GSOS format of
Bloom, Istrail \& Meyer, using \cite[Theorem 5.1.2]{BIM95}.
\qed\end{proof}
To establish the associativity of parallel composition of \ABC up to strong bisimilarity ($\bis{}$), we will employ a general result of
Cranen, Mousavi \& Reniers~\cite{CMR08}. However, for this result to apply, we need a structural
operational semantics of the language in the De Simone format \cite{dS85}---so without negative premises.

To this end, let $\B\mathord{:} :=\{b\mathord: \mid b\in\B\}$ be the set of \emph{broadcast discards}, and
\plat{$\Lab := \B\mathord: \dcup Act$} the set of \emph{transition labels}. We enrich the
transition relation of \ABC with transitions labelled with discard communications, by adding the rules\\[1mm]
\centerline{$\begin{array}{@{}ccc@{}}
0 \ar{b:} 0 \mylabel{Dis0}&\qquad
\alpha.\E \ar{b:} \alpha.\E ~~ \mbox{($\alpha \mathord{\neq} b?$)} \mylabel{Dis1} &\qquad
\displaystyle\frac{\E\ar{b:} \E' ,~ \F \ar{b:} \F'}{\E+\F \ar{b:} \E'+ \F'} \mylabel{Dis2}
\end{array}$}\\
to \Tab{ABC}, allowing $\sharp_1\mathord=\sharp_2\mathord=\sharp\mathord=\mathord:$ in
\myref{Bro-c},%
\footnote{The remaining cases are still undefined, i.e,\ 
$\sharp_1\circ : \ =\  :\circ \sharp_2 = \_$ (for $\sharp_1, \sharp_2\not=\ :$).
} and letting $\ell$ range over all of $\Lab$.

\begin{lemma} {\rm \cite{CBS91}}
$P \!\ar{b:}\! Q$ iff $Q\mathop=P \wedge P\! \nar{b?}$\,, 
for $P,Q\mathop\in\T_{\rm \ABC}$ and $b\mathop\in\B$.
\end{lemma}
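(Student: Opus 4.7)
The plan is to prove both directions by induction, the forward ($\Rightarrow$) by induction on the derivation of $P \ar{b:} Q$ and the converse ($\Leftarrow$) by structural induction on $P$, showing in fact the stronger statement that $P \nar{b?}$ implies $P \ar{b:} P$. Recursion is handled uniformly by appealing to the defining equation via rule \myref{Rec}, so that both directions transfer from the body of an agent identifier to the identifier itself.

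For ($\Rightarrow$), I would case-split on the last rule applied. The base cases \myref{Dis0} and \myref{Dis1} give $Q = P$ immediately, and the non-existence of a $b?$-transition follows either trivially (for $0$) or from the side condition $\alpha \neq b?$ (for prefixing, since $\alpha.E$ admits only an $\alpha$-transition). The rules \myref{Dis2} and the extended \myref{Bro-c} (instantiated with $\sharp_1 = \sharp_2 = \sharp = \mathord:$) both have two premises of the same shape, to which the induction hypothesis applies: this yields $E' = E$, $F' = F$, and $E \nar{b?}$, $F \nar{b?}$; the composite then inherits $Q = P$ and $P \nar{b?}$, since a $b?$-transition from $E + F$ must arise via \myref{Sum-l} or \myref{Sum-r}, and one from $E | F$ via \myref{Bro-l}, \myref{Bro-r}, or the $b?$-instance of \myref{Bro-c}. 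Restriction is handled by \myref{Res} (whose side condition is vacuous for discard labels). For ($\Leftarrow$), structural induction mirrors this case-split: apply \myref{Dis0} to $0$; use the forced $\alpha \neq b?$ to apply \myref{Dis1} to $\alpha.E$; for $E + F$ and $E | F$, lift $\nar{b?}$ to both subterms by the same rule analysis, apply the IH to get $b:$-self-loops on each, and combine via \myref{Dis2} or the extended \myref{Bro-c}.

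The main subtle point is the relabelling case $E[f]$: a $b:$-transition of $E[f]$ is derived from a $b'\mathord:$-transition of $E$ with $f(b') = b$, and the IH gives $E \nar{b'?}$; however, ruling out \emph{all} $b?$-transitions of $E[f]$ in principle requires $E \nar{b''?}$ for every $b'' \in f^{-1}(b)$, which is stronger than what a single preimage $b'$ provides. To make the equivalence go through cleanly one must either restrict to injective relabellings, or read the extension of $f$ to $\Lab$ so that a $b:$-derivation for $E[f]$ genuinely certifies the absence of every preimage $b?$-transition of $E$. Once this point is pinned down, the relabelling case collapses like the others and the proof goes through uniformly.
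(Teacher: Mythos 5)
Your overall strategy is the same as the paper's: the authors dispatch this lemma with the single sentence ``a straightforward induction on derivability of transitions'', and your case analysis for \myref{Dis0}, \myref{Dis1}, \myref{Dis2}, the extended \myref{Bro-c}, \myref{Res}, and the choice and parallel cases is precisely that induction, carried out correctly. The interesting part of your proposal is therefore the relabelling caveat, and it is a genuine one---in fact the defect is worse than you state. For a non-injective $f$ with $f(b_1)=f(b_2)=b$ and $b_1\neq b_2$, the process $(b_1?.0)[f]$ acquires the discard $(b_1?.0)[f]\ar{b:}(b_1?.0)[f]$ (via \myref{Dis1} with label $b_2\mathord{:}$ followed by \myref{Rel}) while also admitting $b?$, so the ``only if'' direction fails literally; and if $f^{-1}(b)=\emptyset$ then no premise of \myref{Rel} can ever produce the label $b\mathord{:}$, so the ``if'' direction fails as well. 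Your two proposed repairs only address the first defect; the clean fix is to require relabellings to be bijective on $\B$ (or to replace \myref{Rel}, for discard labels, by a rule whose premises are $E\ar{b''\mathord{:}}E$ for every $b''\in f^{-1}(b)$, discharged vacuously when that preimage is empty).

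There is, however, a second defect of exactly the same flavour that your proposal glosses over with the phrase that recursion ``transfers uniformly'' to the identifier. Rule \myref{Rec} concludes $A\ar{\ell}P'$ from a transition of the \emph{body} $P$, so even when the induction hypothesis gives $P'=P$, the derived discard is $A\ar{b:}P$ with target the body $P$, not $A$; the clause $Q=P$ of the lemma therefore fails for agent identifiers (and, dually, no derivation can ever produce $A\ar{b:}A$, so the ``if'' direction fails for them too) unless one identifies $A$ with its unfolding. Neither problem is of your making---the paper's one-line proof silently assumes both conventions, which it inherits from CBS---but since you set out to prove the literal statement over $\T_{\rm \ABC}$, the recursion case needs the same explicit caveat that you already supply for relabelling, and your resolution of the relabelling case itself should be pinned down to the bijectivity assumption rather than left as a disjunction of options.
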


\begin{proof}
A straightforward induction on derivability of transitions.
\qed\end{proof}

\noindent\begin{minipage}{0.84\textwidth}
Because of this, a negative premise $P \nar{b?}$ can be replaced by a positive premise $P \ar{b:} P'$
and all rules {\sc (Bro)} in \Tab{ABC} can be unified into the single rule \myref{Bro-c},
where $\sharp_1,\sharp_2,\sharp$ range over $\{!,?,:\}$
and $\circ$ is defined by the table on the right.
The resulting rules are all in the De Simone format.
\end{minipage}
\begin{minipage}{0.15\textwidth}
\vspace{-1mm}
$~~\begin{array}{c@{\ }|@{\ }c@{\ \ }c@{\ \ }c}
     \circ &  ! &  ? &  : \\
    \hline
     ! &  \_ &  ! &  ! \\
     ?&  ! &  ? &  ? \\
     : &  ! &  ? &  :
    \end{array}$
\end{minipage}

\begin{corollary}\label{cor:modified same}
$\!$The original and modified structural operational semantics of \ABC yield the same labelled transition
relation $\rightarrow_{\ABC}$ when transitions
labelled $b\mathord{:}$ are ignored.
\end{corollary}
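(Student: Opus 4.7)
The plan is to prove the corollary by induction on derivations in each system, using the preceding lemma as the dictionary between the negative premise $P \nar{b?}$ of the original system and the positive premise $P \ar{b:} P$ of the modified one. All rules other than those for broadcast communication are literally the same in both systems when labels are restricted to $Act$, so they contribute nothing new; the entire content of the corollary lies in the three original rules \myref{Bro-l}, \myref{Bro-c}, \myref{Bro-r} versus the single unified \myref{Bro-c} of the modified system together with the discard-generating rules \myref{Dis0}, \myref{Dis1}, \myref{Dis2}.

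For the direction original $\subseteq$ modified I would walk through each application of \myref{Bro-l} in a derivation of $P \ar{\ell} Q$ with $\ell \notin \B\mathord{:}$, and translate its negative premise $F \nar{b?}$, via the lemma, into a derivation of $F \ar{b:} F$ in the modified system; together with the other premise $E \ar{b\sharp_1} E'$ and the extended-table entry $\sharp_1\circ\ :\ =\ \sharp_1$ this yields the very same conclusion via the unified \myref{Bro-c}. The treatment of \myref{Bro-r} is symmetric, using $:\circ\ \sharp_2\ =\ \sharp_2$, and applications of the original \myref{Bro-c} translate verbatim, since the extended $\circ$ agrees with its original values on $\{!,?\}$. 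For the reverse inclusion, an application of the unified \myref{Bro-c} whose conclusion is non-discard but in which one or both premises carry a $b\mathord{:}$ label is translated back to \myref{Bro-l} or \myref{Bro-r} using the lemma in the opposite direction; applications with both premises non-discard correspond exactly to the original \myref{Bro-c}.

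The one place that genuinely requires care—and that I expect to be the main obstacle—is verifying that discard transitions cannot invisibly contribute to a non-discard conclusion elsewhere in a derivation of the modified system, so that the restriction to non-discard transitions really does yield a relation matching $\rightarrow_{\ABC}$. This amounts to checking, rule by rule, that every modified rule whose conclusion has a label in $Act$ either has all premises labelled in $Act$, or is the unified \myref{Bro-c} in which the $b\mathord{:}$ premise is immediately consumed rather than propagated upward through the derivation. Once this bookkeeping observation is in place the induction closes without further difficulty, and the corollary follows.
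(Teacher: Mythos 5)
Your proposal is correct and follows essentially the same route as the paper: the paper treats the corollary as an immediate consequence of the preceding lemma, which licenses exactly the replacement of the negative premise $P \nar{b?}$ by the positive premise $P \ar{b:} P$ that your rule-by-rule translation between \myref{Bro-l}/\myref{Bro-r}/\myref{Bro-c} and the unified rule carries out. Your additional bookkeeping check---that a discard premise can feed a non-discard conclusion only in the unified \myref{Bro-c}, where it is consumed on the spot---is a detail the paper leaves implicit but is easily verified and does not change the argument.
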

In fact, our `modified' operational semantics stems directly from CBS \cite{CBS91}.

\begin{theorem}
In \ABC, parallel composition is associative up to $\bis{}$.
\end{theorem}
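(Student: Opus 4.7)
The plan is to reduce this to a known associativity result for operators in De Simone format. Specifically, I would invoke the theorem of Cranen, Mousavi \& Reniers~\cite{CMR08} applied to the modified structural operational semantics introduced just before the statement, in which the negative premises of rules \myref{Bro-l} and \myref{Bro-r} have been internalised as discard transitions $\ar{b:}$ and the three broadcast rules have been merged into a single symmetric synchronisation rule \myref{Bro-c} using the extended label operator $\circ$ over $\{!,?,:\}$. In this modified system every rule is in De Simone format, which is the setting to which the CMR result applies.

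Next I would verify the specific preconditions of that theorem for the parallel composition operator. The rules \myref{Par-l} and \myref{Par-r} form the standard mirror-symmetric pair required by CMR; \myref{Comm} is manifestly symmetric; and the unified \myref{Bro-c} is symmetric thanks to the commutativity of $\circ$. The key algebraic ingredient is associativity of the label-combining operations: for handshake labels it reduces to the observation that the synchronisation of $c$ and $\bar c$ yields $\tau$, which cannot combine further, so the CCS case goes through as in~\cite{CMR08}; for broadcast labels, a short case analysis on the table defining $\circ$ shows that $(\sharp_1\circ \sharp_2)\circ \sharp_3 = \sharp_1\circ(\sharp_2\circ \sharp_3)$ whenever either side is defined, and that both sides are undefined precisely when at least two of $\sharp_1,\sharp_2,\sharp_3$ equal $!$ (at most one parallel component may simultaneously act as a broadcasting sender). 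Since handshake and broadcast labels come from disjoint name spaces, no single rule mixes the two sub-formalisms and the checks decouple.

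Once CMR yields $E\mid(F\mid G) \bis (E\mid F)\mid G$ in the modified LTS, transferring the result to the original LTS is immediate: any strong bisimulation for the modified transition relation is, upon restriction to labels in $Act$, still a strong bisimulation, and by \Cor{modified same} the $Act$-labelled transitions of the two semantics coincide. Hence $E\mid(F\mid G) \bis (E\mid F)\mid G$ also holds for $\rightarrow_{\ABC}$.

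The main obstacle I anticipate is the careful bookkeeping required to check that the unified broadcast rule really fits the precise format required by CMR---in particular, that discard labels $b:$ and the undefined entries of $\circ$ interact with the theorem's hypotheses in the expected way. Once that is in place, the argument is essentially a routine instantiation of a generic result.
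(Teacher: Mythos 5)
Your proposal follows essentially the same route as the paper: pass to the modified, negative-premise-free semantics with discard transitions, invoke the associativity rule format of Cranen, Mousavi \& Reniers, and transfer the result back to the original transition relation via \Cor{modified same} since bisimilarity is preserved when transitions labelled $b\mathord{:}$ are ignored. The paper merely defers the detailed format check (which you sketch, correctly, including the associativity of $\circ$ on $\{!,?,:\}$) to the analogous proof in \cite{TR13}.
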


\begin{proof}
The associativity depends on the generated transition relation only, and is preserved when
ignoring transitions with a particular label.
So by \Cor{modified same} it suffices to investigate the modified semantics.
The modified operational rules of \ABC  fit the \emph{\sf ASSOC-De Simone} rule format of
\cite{CMR08}, which guarantees associativity up to $\bis{}$.
The detailed proof that our rules fit this format is similar to the proof of Theorem 4.4 in \cite{TR13}.
\qed\end{proof}

\section{Formalising Temporal Properties}\label{sec:properties}

We will use Linear-time Temporal Logic (LTL)~\cite{Pnueli77} to specify properties that one would
like to establish for concurrent systems. For the purpose of this paper, any other temporal logic could have been used as well.

We briefly recapitulate the syntax and semantics of LTL; a thorough and formal 
introduction to this logic can be found e.g.\ in~\cite{HuthRyan04}. The logic is built from a set of
\emph{atomic propositions} that characterise facts that may hold in some state of a (concurrent) system.
A classical example is that a `transition $\nu$ is enabled', denoted by $\en(\nu)$.

LTL formulas are interpreted on paths in a transition system,\footnote{A 
\emph{transition system} is given by a set $S$ of \emph{states} and a set $T\subseteq S\times S$ of 
\emph{transitions}.} where each state is labelled with the atomic propositions that hold in that state.
A \emph{path} $\pi$ is a finite or infinite sequence of states such there is a transition from each
state in $\pi$ to the next, except the last one if $\pi$ is finite.
An atomic proposition $p$ holds for a path $\pi$ if $p$ holds in the first state of $\pi$.

LTL \cite{Pnueli77} uses the temporal operators $\mathbf{X}$, $\mathbf{G}$, $\mathbf{F}$ and $\mathbf{U}$.%
\footnote{$\mathbf{X}$ and $\mathbf{U}$ were not introduced in the original paper~\cite{Pnueli77};
  they were added later on.}
The formulas $\mathbf{X} \phi$, $\mathbf{G} \phi$ and $\mathbf{F} \phi$
mean that $\phi$ holds in the second state on a given path, \emph{globally} in all states,
and \emph{eventually} in some state, respectively;
$\phi\mathbf{U}\psi$ means that $\psi$ will hold eventually, and
$\phi$ holds in all states \emph{until} this happens.%
\footnote{$\mathbf{G}$ and $\mathbf{F}$ can be expressed in terms of $\mathbf{U}$:
  $\mathbf{F} \phi \equiv {\tt true}\mathbf{U}\phi$ and $\mathbf{G} \phi \equiv
  \neg\mathbf{F} \neg\phi$.} 
Here a formula $\phi$ is deemed to
\emph{hold in a state on a path $\pi$} iff it holds for the remainder of $\pi$ when starting from that state.
LTL formulas can be combined by the logical connectives \emph{conjunction} $\wedge$,
\emph{disjunction} $\vee$, \emph{implication} $\Rightarrow$ and \mbox{\emph{negation} $\neg$}.

An LTL formula holds for a process, modelled as a state in a transition system iff it holds
for all \emph{complete} paths in the system starting from that state.\footnote{A path staring
    from a state $s$ is also called a path \emph{of} $s$.}
A path is complete iff it leaves no transitions undone without a good reason;
in the original work on LTL \cite{MP92}
 the complete paths are exactly the infinite ones, but in
\Sect{progress} we propose a different concept of completeness:
a path will be considered complete iff it is \emph{progressing}, \emph{fair} and \emph{just},
as defined in Sections \ref{ssec:progress}, \ref{ssec:fairness} and \ref{ssec:justness}, respectively.%
\footnote{We declare a formula $\mathbf{X} \phi$ false on any path that lacks a second state.}

Below we will apply LTL to the labelled transition system\/ $\mathcal{T}$ generated by the structural
operational semantics of \ABC.\footnote{A \emph{labelled transition system} (LTS) is given by a set $S$ of
  states and a transition relation $T\subseteq S\times \Lab \times S$ for some set of labels $\Lab$.
  The LTS generated by \ABC has $S:=\T_{\rm \ABC}$ and $T:=\mathord{\rightarrow_{\rm \ABC}}$.
}
Here, the most natural atomic propositions are the transition labels: they tell when an action takes place.
These propositions hold for transitions rather than for states.
Additionally, one can consider state-based propositions such as $\en(\nu)$.
In languages that maintain data variables,\footnote{such as the algebra for wireless networks AWN \cite{TR13}; see below}
propositions such as `$x\leq7$' that report on the current value of such variables can also be associated to the states.

To incorporate the transition-based atomic propositions into the framework of temporal logic, we
perform a translation of the transition-labelled transition system\/ $\mathcal{T}$ into a
state-labelled transition system\/ $\mathcal{S}$, and apply LTL to the latter.  A suitable
translation, proposed in \cite{DV95}, introduces new states halfway the existing
transitions---thereby splitting a transition $\ell$ into $\ell;\tau$---and attaches transition
labels to the new `midway' states.
If we also have state-based atomic propositions stemming from\/ $\mathcal{T}$, we furthermore
declare any atomic proposition except $\en(\nu)$ that holds in state $Q$ to also hold for the new state midway a
transition $P\ar{\ell}Q$.
LTL formulas are interpreted on the paths in\/ $\mathcal{S}$. Such a path is a sequence of states of\/ $\mathcal{S}$,
and thus an alternating sequence of states and transitions from\/ $\mathcal{T}$. Here we will only
consider paths that are infinite or end in a state of\/ $\mathcal{T}$; paths ending `midway a
transition' will not be complete, progressing, fair or just.

Below we use LTL to formalise properties that say that whenever a precondition $\phi^{\it pre}$ holds
in a reachable state, the system will eventually reach a state satisfying the postcondition $\phi^{\it post}$.
Such a property is called an \emph{eventuality property} in \cite{Pnueli77}; it is formalised by the LTL formula
$\mathbf{G} \big(\phi^{\it pre} \Rightarrow \mathbf{F}\phi^{\it post}\big).$

\begin{example}
In a language like \ABC we can model a network by means of a parallel composition of
processes running on the nodes in the network. Each of those processes $A_i$, for $1\mathbin\leq i\mathbin\leq n$,
could be specified by a defining equation \plat{$A_i \stackrel{{\it def}}{=} P_i$}, where $P_i$ always ends
with a recursive call to $A_i$. This way, the behaviour specified by $P_i$ is repeated forever.
The processes $A_i$ send messages to each other along shared channels. Here a message $m$
transmitted along a broadcast or handshake channel is modelled by a name $c_m \mathbin\in \B$ or $c_m \mathbin\in \C$.

Suppose the process $A_0$ can receive messages $m \in \{1,...,k\}$ from the environment.
This could be modelled by \plat{$A_0 \stackrel{{\it def}}{=} c_1.P_0^1+\cdots + c_k.P_0^k$}.
The behaviour of the nodes in the network could be specified so as to guarantee that such a message
will eventually reach the node running the process $A_n$, which will deliver it to the environment by
performing the broadcast $d_m!$. We may assume that no other nodes can perform the actions $c_m$
or $d_m!$.

A useful property that this network should have is \emph{packet delivery}:
any message received from the environment by $A_0$ will eventually be delivered back to the
environment by $A_n$.
In LTL it can be formulated as $\textbf{G}(c_m \Rightarrow \textbf{F} d_m!)$.
\end{example}
In \cite{TR13} we model a routing protocol in a process algebra for wireless networks (AWN) that
captures dynamic topologies, where nodes drift in and out of transmission range, and communication
between two nodes is successful only when they are within transmission range of each other.
In this context a packet delivery property is formulated that can be obtained from a property like the
one above by incorporating a number of side conditions.

\section{Progress, Fairness and Justness}\label{sec:progress}
\newcommand{\Figprogress}{Figure~\ref{fig:PJF}(a)\xspace}
\newcommand{\Figjustness}{Figure~\ref{fig:PJF}(c)\xspace}
\newcommand{\Figfairness}{Figure~\ref{fig:PJF}(b)\xspace}

In \cite[Sect.~9]{TR13}, as well as above, we formalise properties that say that under certain conditions some desired activity
will eventually happen, or some desired state will eventually be reached. 
As a simple instance consider the transition systems in Figures~\ref{fig:PJF}(a)--(c), where
the double-circled state satisfies a desired property $\phi$.  The formula $\mathbf{G} (a \Rightarrow
\mathbf{F}\phi)$ says that once the action $a$ occurs, eventually we will reach a state where $\phi$
holds.  We investigate reasons why this formula might not hold, and formulate
assumptions that guarantee it does.

\subsection{Progress}\label{ssec:progress}
The first thing that can go wrong is that the process of \Figprogress\footnote{Following the approach
  of CCS \cite{Mi89} we identify processes and states, and do not use a notion of an initial state.
  When speaking of a process depicted graphically, by default we mean the state indicted by the short
  arrow in the figure.}
performs $a$, thereby reaching the state $s$,
and subsequently remains in the state $s$ without ever performing the internal action $\tau$ that leads to
the desired state $t$, satisfying $\phi$. If there is the possibility of remaining in a state
even when there are enabled internal actions, no useful liveness property about
processes will ever be guaranteed.
We therefore make an assumption that rules out this type of~behaviour.%
\begin{equation}\tag{$P_{1}$}\label{eq:progress1}
    \parbox{0.85\textwidth}{\textit{A process in a state that admits an internal
    action\footnotemark\ will eventually perform an action.}}
\end{equation}
\footnotetext{\ABC offers only one internal action $\tau$.
  Any other action is called \emph{external}.}%
\eqref{eq:progress1} is called a \emph{progress} property. It guarantees that the process depicted in
\Figprogress satisfies the LTL formula $\mathbf{G} (a \Rightarrow \mathbf{F}\phi)$.
We cannot assume progress when only external actions are possible. For instance, the process of
\Figprogress will not necessarily perform the action $a$,
and hence need not satisfy the formula $\mathbf{F}\phi$. The reason is that external actions
could be synchronisations with the environment, and the environment may not be ready to synchronise.
In \ABC this can happen if $a$ is a handshake action $c\in\HC$
or a receive action $b?\in\B?$.
Here it makes sense to distinguish two kinds of external actions: those whose
execution requires cooperation from the environment in which the process runs, and those that do not.
We call the latter kind \emph{output actions}. As far as progress properties go, output
actions can be treated just like internal actions:
\begin{equation}\tag{$P_{2}$}\label{eq:progress2}
    \parbox{0.85\textwidth}{\textit{A process in a state that admits an output action will
    eventually perform an action.}}
\end{equation}
In case $a$ is an output action, which can happen independent of the environment, the
formula $\mathbf{F}\phi$
 holds for the process of \Figprogress.
In the remainder we treat internal actions and output actions together; we call them \emph{non-blocking} actions.

We formalise \eqref{eq:progress1} and \eqref{eq:progress2} through a suitable modification of the definition of a
complete path.  In early work on
temporal logic, formulas were interpreted on Kripke structures: transition systems with unlabelled transitions,
subject to the condition of \emph{totality}, saying that each state admits at least one outgoing
transition. In this context, the complete paths are defined to be all infinite paths
of the transition system. When giving up totality, it is customary to deem complete
also those paths that end in a state from which no further transitions are possible~\cite{DV95}. 
Here, we go a step further, and consider paths that are either infinite
or end in a state from which no further non-blocking actions are possible. 
Those paths are called \emph{progressing}. This
definition exactly captures \eqref{eq:progress1} and~\eqref{eq:progress2}.

This proposal is a middle ground between two extremes.
Dropping all progress properties amounts to defining \emph{each} path to be complete.
This yields a temporal logic that is not powerful enough to establish
nontrivial eventuality properties.
Defining a path to be complete only when it cannot be further extended, on the other hand,
incorporates progress properties
that do not hold for reactive systems.
It would give rise to the unwarranted conclusion that the property $\mathbf{F}\phi$
holds for the process of \Figprogress, regardless of the nature of $a$.

\begin{figure}[t]
	\centering
		\begin{minipage}[b]{0.43\linewidth}
			\centering
				\includegraphics[scale=0.88]{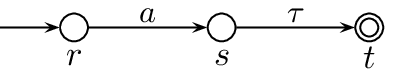}\\[-1mm]
				{\small (a)\ Progress}
		\end{minipage}
		\hspace{0.03\linewidth}
		\begin{minipage}[b]{0.43\linewidth}
			\centering
				\includegraphics[scale=0.88]{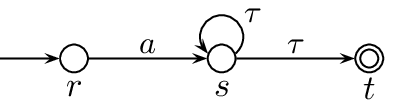}\\[-1mm]
				{\small (b)\ Fairness}
		\end{minipage}\\[1.3ex]
		\begin{minipage}[b]{0.97\linewidth}
			\centering
				\includegraphics[scale=0.88]{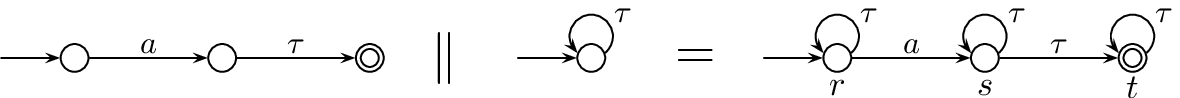}\\[-1mm]
				{\small (c)\ Justness}
		\end{minipage}
		\vspace{-1mm}
	\caption{Progress, Fairness and Justness}
	\label{fig:PJF}
	\vspace{-3mm}
  \end{figure}

As we will show, progressing paths do not capture fairness and justness properties;
hence they should not be called complete. In Sections~\ref{ssec:fairness} and
\ref{ssec:justness} we will propose a notion of a complete path
that is progressing and also captures such properties.
This restriction concerns the infinite paths only; for finite paths
complete will coincide with progressing.

It remains to decide which of the external actions generated by the structural operational semantics
of a language should be classified as output actions. 
Some actions cannot be output, since they can be blocked by the environment. 
In \ABC, any handshake action $c\in \HC$ can 
be blocked by restriction.
Since $	(c.0)\backslash c$ cannot perform any action, $c$ cannot
be output.
For the remaining external actions, the user can decide whether they are output or not.
For \ABC we decide that a broadcast ($b!$) is an output action, whereas a receive ($b?$)
is not.
Informal intuition explains the reason:  a process that broadcasts a message should be able to perform this action 
independent of
whether other processes are receiving it. 
On the other hand, a process should only be able to receive a message that \emph{is} sent by another process.

The above analysis assumes the original operational semantics of \ABC, with negative premises.
For the modified semantics with discard-transitions, the question arises whether
$b\mathord:$ should count as an output action. Here the right answer is that $b\mathord:$ is a
transition label that does not count as an action at all. The reason is that we want our progress
property \Eq{progress2} to imply that the \ABC process $b_1!.b_2!.0$ will eventually execute the
broadcast action $b_2$. However, a potentially complete path that invalidates this property consists
of $b_1!$ followed by infinitely many broadcast discards $b_1\mathord:$ (or $b_2\mathord:$).
In the original operational semantics such a path does not exist, and the property is satisfied.
To obtain the same result in the modified operational semantics, we classify $b\mathord:$ as a
non-action. This way, \Eq{progress2} says that after
$b_1\mathord:$ the process will eventually perform a transition that is not a discard; this must be $b_2\mathord{!}$.
Formally, this is achieved by excluding from the definition of progressing path all paths that end in
infinitely many discard-transitions (all looping in the final state), where the final state in the path
admits a non-blocking action.
To avoid further encounters with this complication, we will
 henceforth assume the original operational semantics.

\subsection{Fairness}\label{ssec:fairness}
With the progress requirements \Eq{progress1} and \Eq{progress2}
embedded in our semantics of LTL, the process
of \Figprogress satisfies the formula $\mathbf{G} (a \Rightarrow \mathbf{F}\phi)$.
Yet, the process of \Figfairness does not satisfy this formula.
The reason is that in state $s$ a choice is made between two internal transitions.
One leads to the desired state satisfying $\phi$, whereas the other gives the process a chance to
make the decision again. This can go wrong in exactly one way, namely if the
$\tau$-loop is chosen every time.

For some applications it is warranted to make a \emph{global fairness assumption}, saying that in
verifications we may simply assume our processes to eventually escape from a loop such as in
\Figfairness and do the right thing. A process-algebraic verification approach based on such an
assumption is described in~\cite{BBK87a}. Moreover, a global fairness assumption is
incorporated in the weak bisimulation semantics employed in~\cite{Mi89}.
Different global fairness assumptions in process algebra appear in \cite{CS87}.

An alternative approach, which we follow here, is to explicitly declare certain choices to be fair,
while leaving open the possibility that others are not.
A \emph{strong fairness}
assumption requires that if a task is enabled infinitely often,\footnote{or in the final
  state of a run, although for many tasks this is a logical impossibility} but
allowing interruptions during which it is not enabled, it will eventually be scheduled.
Such a property is expressed in LTL as
$\mathbf{G}(\mathbf{GF}\psi \Rightarrow \mathbf{F}\phi)$,%
\footnote{These properties were introduced in LTL in \cite{GPSS80} under the  name `responsiveness to persistence'.}
or equivalently $\mathbf{G}\mathbf{F}\psi \Rightarrow \mathbf{G}\mathbf{F}\phi$; here $\psi$ is the condition that states that
the task is enabled, whereas $\phi$ states that it is being
executed.
\hypertarget{weakfairness}{A \emph{weak fairness} assumption requires that if a
task, from some point onwards, is perpetually enabled, it will eventually be scheduled. In LTL
this is expressed as $\mathbf{G}(\mathbf{G}\psi \Rightarrow \mathbf{F}\phi)$,}\footnote{These properties were
  introduced in LTL in \cite{GPSS80} under the  name `responsiveness to insistence',
  and deemed `the minimal fairness requirement' for any scheduler.}
or equivalently  $\mathbf{F}\mathbf{G}\psi \Rightarrow \mathbf{G}\mathbf{F}\phi$.

If a formula $\psi$ holds in state $s$ of \Figfairness, and $\phi$ in $t$,
then the strong fairness as\-sumption $\mathbf{G}(\mathbf{GF}\psi \Rightarrow \mathbf{F}\phi)$ ensures
the choice at $s$ to be fair. If $\psi$ even holds in (or during) the transition that constitutes the
$\tau$-loop, the weak fairness assumption $\mathbf{G}(\mathbf{G}\psi \Rightarrow \mathbf{F}\phi)$ suffices.
If this property is part of the specification, the process of
 \Figfairness will satisfy the desired eventually property $\mathbf{G} (a \Rightarrow \mathbf{F}\phi)$.

In general, we propose a specification framework where a process is specified by a pair of a
process expression $P$ (for instance in the language \ABC) and a \emph{fairness specification} $\FS$, consisting of a
collection of LTL formulas (where for instance the actions of \ABC are allowed as atomic propositions).
Typically, $\FS$ contains strong or weak fairness properties.

The semantics of such a specification is again a pair.
The first component is the state $P$ in the
LTS generated by \ABC, and the second component, the set of \emph{fair paths}, is a subset of the
progressing paths starting from $P$, namely those that satisfy the formulas in $\FS$.

We require the state $P$ in the LTS and the fairness specification to be consistent with each other.
By this we mean that from $P$ one cannot reach a state from where, given a sufficiently uncooperative
environment, it is impossible to satisfy the fairness specification---in other words \cite{Lamport00},
`the automaton can never ``paint itself into a corner''\;\!'. In~\cite{Lamport00,TLA} this requirement is
called \emph{machine closure}, and demands that any finite path in the LTS, starting from $P$, can
be extended to a path satisfying $\FS\!$.
Since we deal with a reactive system here, we need a more involved consistency requirement,
taking into account all possibilities of the environment to allow or block transitions that
are not fully controlled by the specified system itself. This requirement can best be explained in
terms of a two player game between a \emph{scheduler} and the \emph{environment}.

The game begins with any finite path $\pi$ starting from $P$, ending in a state $Q\in\T_{\rm \ABC}$, chosen by the environment.
In each turn, first the environment selects a set $\textit{next}(Q)$ of transitions originating from
$Q$; this set has to include all transitions labelled with non-blocking actions originating
from $Q$, but can also include further transitions starting from $Q$. If $\textit{next}(Q)$ is
empty, the game ends; otherwise the scheduler selects a transition from this set, which is, together
with its
target state, appended to $\pi$, and a new turn starts with the prolonged finite path.
The \emph{result} of the game is the finite path in which the game ends, or---if it
does not---the infinite path that arises as the limit of all finite paths encountered
during the game. The game is \emph{won} by the scheduler iff the result is a
progressing\footnote{When adopting our proposal of \SSect{justness}, the resulting path should even be just.}
path that satisfies $\FS$.
Now $P$ is \emph{consistent} with $\FS$ iff there exists a winning strategy for the scheduler.

\subsection{Justness}\label{ssec:justness}
Now suppose we have two concurrent systems that work independently in parallel, such as two
completely disconnected nodes in a network. One of them is modelled by the transition system of
\Figprogress, and the other is doing internal transitions in perpetuity.  The parallel
composition is depicted on the left-hand side of \Figjustness. According to our structural operational
semantics, the overall transition system resulting from this parallel composition is the one
depicted on the right. In this transition system, the LTL formula $\mathbf{G} (a \Rightarrow
\mathbf{F}\phi)$ is not valid, because, after performing the action $a$, the process may do an
infinite sequence of internal transitions that stem from the `right' component in the parallel
composition, instead of the transition to the desired success state.  Yet the formula $\mathbf{G} (a
\Rightarrow \mathbf{F}\phi)$ does hold intuitively, because no amount of internal activity in the
right node should prevent the left component from making progress.  That this formula does not
hold can be seen as a pitfall stemming from the use of interleaving semantics.  The intended
behaviour of the process is captured by the following \emph{justness} property:%
\begin{equation}\tag{$J$}\label{eq:justness}
\parbox{0.85\textwidth}{\textit{%
If a combination of components in a parallel composition is in a state
    that admits a non-blocking action, then one (or more) of them  will eventually partake in an action.
}}
\end{equation}

Thus justness guarantees progress of all components in a parallel composition, and of all
combinations of such components. In the \ABC expression $((P|Q)\backslash a) | R$ for instance,
we might reach a state where $P$ admits an action $c\mathbin\in\HC$ with $c\mathbin{\neq} a$ and $R$ admits
$\bar c$. Thereby, the combination of these components admits an action $\tau$. Our justness
assumption now requires that the combination of $P$ and $R$ will eventually perform an action.
This could be the $\tau$-action obtained from synchronising $c$ and $\bar c$, but it also could be
any other action from either $P$ or $R$. 

Note that progress is a special case of justness, obtained by considering any process as the
combination of all its parallel components.

\newcommand{\startingfrom}{of\xspace}
We now formalise the justness requirement \eqref{eq:justness}.
\\
Any transition \plat{$P|Q \ar{\ell} R$} derives, through the
rules of \Tab{ABC}, from
\begin{itemize}
\vspace{-1ex}
\item a transition \plat{$P \ar{\ell} P'$} and a state $Q$, where $R=P'|Q$\,,
\item two transitions \plat{$P \ar{\ell_1} P'$ and $Q \ar{\ell_2} Q'$}, where $R=P'|Q'$\,,
\item or from a state $P$ and a transition \plat{$Q \ar{\ell} Q'$}, where $R=P|Q'$.
\vspace{-1ex}
\end{itemize}
This transition/state, transition/transition or state/transition pair is called a 
\emph{decomposition}
of \plat{$P|Q \ar{\ell} R$}; it need not be unique.
Now a \emph{decomposition} of a path $\pi$ \startingfrom $P|Q$ into paths $\pi_1$ and $\pi_2$
\startingfrom $P$ and $Q$, respectively, is obtained by decomposing each transition in the path, and
concatenating all left-projections into a path \startingfrom $P$ and all right-projections into a
path \startingfrom $Q$---notation $\pi \in \pi_1 | \pi_2$.
Here it could be that $\pi$ is infinite, yet either $\pi_1$ or $\pi_2$ (but not both) are finite.
Again, decomposition of paths need not be unique.

Likewise, any transition \plat{$P[f] \ar{\ell} R$} stems from a transition \plat{$P \ar{\ell_1} P'$}, where $R=P'[f]$.
This transition is called a decomposition of $P[f] \ar{\ell} R$. A \emph{decomposition} of a path
$\pi$ \startingfrom $P[f]$  is obtained by decomposing each transition in the path, and
concatenating all transitions so obtained into a path \startingfrom $P$.
In the same way one defines a decomposition of a path \startingfrom $P\backslash c$.

We now define a path of a process to be \emph{just} if it models a run that can actually occur in
some environment, even when postulating \Eq{justness}; we call it $Y\!$-\emph{just}, for $Y\subseteq\HC$, if it can
occur in an environment that from some point onwards blocks all actions in $Y\cup\B?$.
\begin{definition}\label{df:just path}\rm
\emph{$Y\!$-justness}, for $Y\mathbin\subseteq\HC$, is the largest family of predicates on the paths in the
transition system\/ $\mathcal{S}$ associated to the LTS\/ $\mathcal{T}$ of \ABC such that
\begin{itemize}
\vspace{-1ex}
\item a finite $\hspace{-.5pt}Y\!$-just path ends in a state of $\mathcal{T}\!$ that admits actions from $\hspace{-.5pt}Y\mathord\cup\B?$ only;
\item a $Y\!$-just path of a process $P|Q$ can be decomposed into an $X$-just path of $P$ and a $Z$-just
  path of $Q$ such that $Y\mathbin\supseteq X\mathord\cup Z$ and $X\mathord\cap \bar{Z}\mathbin=\emptyset$---here
  $\bar Z\mathbin{:=}\{\bar{c} \hspace{-.5pt}\mid\hspace{-.5pt} c\mathbin\in Z\}$;
\item a $Y\!$-just path of
  $P\backslash c$ can be decomposed into a $Y\mathord\cup\{c,\bar c\}$-just path of $P$;
\item a $Y\!$-just path of $P[f]$ can be decomposed into an
  $f^{-1}(Y)$-just path of $P$;
\item and each suffix of a $Y\!$-just path is $Y\!$-just.
\vspace{-1ex}
\end{itemize}
A path is \emph{just} if it is $Y\!$-just for some $Y\subseteq\HC$.
\end{definition}
The last clause in the second requirement prevents an $X$-just path of $P$ and a $Z$-just path of
$Q$ to compose into an $X\mathord\cup Z$-just path of $P|Q$ when $X$ contains an action $c$ and $Z$
the complementary action~$\bar c$. The reason is that no environment can block both actions for
their respective components, as nothing can prevent them from synchronising with each other.
The fifth requirement helps characterising processes of the form $b.0+(A|b.0)$ and $a.(A|b.0)$, with  \plat{$A \stackrel{{\it def}}{=} a.A$}.
Here, the first transition `gets rid of' the choice and of the leading action $a$, respectively, 
and reduces the justness of paths of such processes to their suffixes.

If $Y\subseteq Z$ then any $Y\!$-just path is also $Z\!$-just. As a consequence, 
a path is just iff it is  $\HC\!$-just.
In \App{just progressing} we show
that a finite path is just iff it does not end in a state
from which a non-blocking action is possible, i.e., iff it is progressing as defined in \SSect{progress}.

A path is called \emph{complete} if it is fair as well as just, and hence also progressing.

The above definition of a just path captures our (progress and) justness requirement, and
ensures that the formula $\mathbf{G} (a \Rightarrow \mathbf{F}\phi)$ holds for the process of \Figjustness.
For example, the infinite path $\pi$ starting from $r$ that after the $a$-transition keeps looping through the
$\tau$-loop at $s$ can only be derived as $\pi_1 | \pi_2$, where $\pi_1$ is a finite path ending
right after the $a$-transition. Since $\pi_1$ fails to be just (its end state
admits a $\tau$-transition), $\pi$ fails to be just too, and hence does not count when searching for
a complete path that fails to satisfy $\mathbf{G} (a \Rightarrow \mathbf{F}\phi)$.

\subsection{Justness versus Justice}\label{ssec:justice}

The concept of \emph{justice} was introduced in \cite{LPS81}: `A computation is said to be
\emph{just} if it is finite or if every transition which is continuously enabled beyond a certain
point is taken infinitely many times.\!' In LTL this amounts to
$\mathbf{F}\mathbf{G}\,\en(\nu) \mathbin\Rightarrow
\mathbf{G}\mathbf{F}\,\nu$ for each transition $\nu$,
thus casting justice as a weak fairness property.
\advance\textheight 3pt

In \cite{LPS81} the identity of a transition, when appearing in a parallel composition,
is not affected by the current state of the parallel component.
For instance, the two transitions $c.0|d.0 \ar{c} 0|d.0$ and $c.0|0 \ar{c} 0|0$\ ---they differ in
their source and target states---are seen as the same transition of the process $c.0|d.0$, stemming
from the left component and scheduled either before or after the $d$-transition of the right component.
In \App{enabling}, to be read after \ref{sec:derivations} and~\ref{sec:concurrency},
we introduce the notion of an \emph{abstract transition}---an
equivalence class of \emph{concrete transitions}---to formalise the transitions intended in \cite{LPS81}.

\advance\textheight -3pt
In the context of reactive systems, an (abstract) transition $\nu$ typically is a synchronisation between the
system and its environment. In case the environment does not synchronise,
$\nu$ cannot happen, even when it is continuously enabled. For this reason, here
justice is only reasonable for abstract
transitions $\nu$ labelled with non-blocking actions.

In applying the concept of justice from \cite{LPS81} to LTSs, there is potential ambiguity in
what counts as `continuously'. Consider the \ABC system specified by \plat{$B \stackrel{{\it def}}{=} c.B + b!.0$}.
By \Def{just path}, the computation consisting of $c$s only is just; it satisfies \Eq{justness}. 
However, it could be argued that $b!$ is continuously enabled. 
This would make the computation unjust in the sense of  \cite{LPS81}.
On the other hand, the choice between $c$ and $b!$ may be
non-deterministic, and could always be resolved in favour of~$c$.
Therefore we do not consider
this computation unjust, and adopt the principle of `noninstantaneous readiness'
\cite{AFK88}, stating that the enabledness of the
$b!$ is interrupted when performing the $c$-transition. In our model, this is implemented by the midway
states corresponding with transitions. As a result, we judge the specified execution just, and hence
do not claim that $b!$ will happen eventually.

\newcommand{\nub}{\ensuremath{\nu_{b!}}}
On the other hand, in our vision the enabledness of a transition cannot be interrupted by performing a
concurrent transition.
For instance, the execution $c^\omega$ of the process $C|b!.0$, where \plat{$C\stackrel{{\it def}}{=}c.C$},
is unjust, because the $b!$-transition $\nub$ is continuously enabled and never taken.
In Appendices~\ref{sec:concurrency} and~\ref{sec:enabling} we formalise this by a
novel definition of the predicates $\en(\nu)$,
such that $\en(\nub)$ holds during the transition $C|b!.0 \ar{c} C|b!.0$.

In doing so, we have to overcome a problem illustrated by the process $C|B$ with $B$ and $C$ as above.
Whether the path \plat{$C|B \ar{c} C|B \ar{c} \dots$} counts as being just by the mantra of \cite{LPS81}
depends on whether $\en(\nub)$ holds during each transition
\plat{$C|B \ar{c} C|B$} in that path. This, in turn, depends on whether these transitions
originate from $C$, so that they are concurrent with $\nub$, or from $B$.
We formalise this by using a richer transition system\/  $\mathcal{U}$ in which
the two transitions $C|B \ar{c} C|B$ are distinguished.
The states of\/ $\mathcal{U}$ are the states of\/ $\mathcal{T}$ together with the \emph{derivations} of
transitions of\/ $\mathcal{T}$ from the rules of \Tab{ABC}---the latter are the \emph{concrete transitions}
alluded to above. The transitions of\/ $\mathcal{U}$ are
$P \rightarrow \chi$ and $\chi \rightarrow Q$, for any
derivation $\chi$ of a transition $P\ar{\alpha}Q$.
The predicates $\en(\nu)$ are defined on the states of\/ $\mathcal{U}$.
The transition system\/ $\mathcal{S}$ associated to the LTS\/ $\mathcal{T}$ of \ABC
can be obtained from\/ $\mathcal{U}$ by consistently identifying multiple derivations of the same transition.
Now, any path $\pi$ in\/ $\mathcal{U}$ projects onto a path
$\widehat\pi$ in\/ $\mathcal{S}$,
and any path in\/ $\mathcal{S}$ is of the form $\widehat\pi$.
Details can be found in \App{derivations}.

In the literature \cite{Emerson91,MP92}, the concept of
\emph{weak fairness} often occurs as a synonym for ``justice''.
At the same time, the potential ambiguity in what counts as
being continuously enabled is resolved differently from the approach
we take here: a transition that from some point onwards is enabled \emph{in every
state} cannot be ignored forever. Under this notion of weak fairness, the
system $B$ discussed above will surely perform the $b!$-action.
It would be useful to have different names for a concept of justice or weak
fairness that adopts the principle of noninstantaneous readiness and
one that does not.

The following theorem states that
the former concept of justice is in perfect agreement with our
  notion of justness of \SSect{justness}.
Its proof can be found in
 \App{proof}.

\begin{theorem}\label{thm:just path}
A path of an \ABC process is just in the sense of \Def{just path} iff it is of the form
$\widehat\pi$ for a path $\pi$ in\/ $\mathcal{U}$ that satisfies
the LTL formulas $\mathbf{F}\mathbf{G}\,\en(\nu) \Rightarrow \mathbf{G}\mathbf{F}\, \nu$ for each
abstract transition $\nu$ with $\ell(\nu)\in Act$  a non-blocking  action.
\end{theorem}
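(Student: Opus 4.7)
The plan is to prove the equivalence by two separate implications, both following the structure of the process, with (co)induction on the definitions of $Y$-justness and of abstract enabledness. Throughout I rely on the decomposition apparatus of Appendices~\ref{sec:derivations}--\ref{sec:enabling}, in particular on the facts that each concrete transition of a composite process decomposes uniquely into its components and that an abstract transition inherits a corresponding decomposition.

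For the forward direction ($\Rightarrow$), assume $\widehat\pi$ is $Y$-just for some $Y\subseteq\HC$ and let $\nu$ be an abstract transition with $\ell(\nu)\in Act$ non-blocking such that $\mathbf{F}\mathbf{G}\,\en(\nu)$ holds along $\pi$. By truncating and invoking the fifth clause of Definition~\ref{df:just path}, I may assume $\en(\nu)$ holds throughout a $Y$-just $\pi$, and then argue by induction on the first state $P$. The leaves $0$ and $\alpha.P'$ are immediate, and choice collapses to one summand after the first transition. Relabelling and restriction reduce to the inductive hypothesis via the third and fourth clauses, using that an abstract transition surviving $\backslash c$ cannot carry the label $c$ or $\bar c$, so that enlarging the blocker to $Y\cup\{c,\bar c\}$ preserves non-blockingness. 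The parallel case is central: decompose $\pi$ as $\pi_1|\pi_2$ with $X\cup Z\subseteq Y$ and $X\cap\bar Z=\emptyset$, and decompose $\nu$ into either a component transition or a synchronisation pair. In each subcase the inductive hypothesis applied to the appropriate component yields infinitely many firings of the relevant projection of $\nu$, whence $\nu$ fires infinitely often in $\pi$. The side condition $X\cap\bar Z=\emptyset$ is exactly what rules out the only potential obstruction, namely a continuously-enabled synchronisation $\tau$-transition whose two constituent handshakes sit in complementary blocking sets.

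For the backward direction ($\Leftarrow$), given $\widehat\pi$ satisfying the weak fairness formula for every non-blocking abstract $\nu$, I build the witnessing $Y$ coinductively. The base clause is immediate: any enabled non-blocking action at the final state of a finite path would give an abstract transition that is continuously enabled but never taken, contradicting the hypothesis, so finite paths end in states whose enabled actions lie in $Y\cup\B?$. For parallel composition, fix any decomposition $\pi\in\pi_1|\pi_2$ and let $X$ (respectively $Z$) collect those $c\in\HC$ such that no $c$- or $\bar c$-transition of $P_1$ (respectively $P_2$) is eventually continuously enabled concurrently with $\pi_2$ (respectively $\pi_1$). The disjointness $X\cap\bar Z=\emptyset$ follows directly from the LTL hypothesis: a hypothetical $c\in X\cap\bar Z$ would furnish a continuously-enabled abstract $\tau$-synchronisation that is never performed, contradicting weak fairness at the non-blocking label $\tau$. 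The restriction and relabelling cases are analogous and easier, using the push-forward and pull-back of the candidate $Y$ along $\backslash c$ and $[f]$.

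The main obstacle will be transporting the notion of abstract enabledness faithfully across decomposition: one must verify that $\en(\nu)$ in the richer system $\mathcal{U}$ holds during a concrete transition of a parallel composition exactly when the corresponding abstract transition of the non-acting component is enabled in the companion derivation state, and that enabledness of a concurrent transition is not spuriously lost at the midway derivation states of $\mathcal{U}$. This is precisely where the principle of noninstantaneous readiness pays off, since the midway states remember which component is firing and therefore preserve concurrent enabledness of transitions in the other component. Assembling this technical fact with the recursive clauses of Definition~\ref{df:just path}, and keeping the bookkeeping of $X$ and $Z$ synchronised with the LTL information along every suffix of $\widehat\pi$, is where the bulk of the work will lie.
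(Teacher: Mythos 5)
Your plan for the ``$\Rightarrow$''-direction does not work, for two independent reasons. First, in the paper's setup the atomic propositions $\nu$ and $\en(\nu)$ are mutually exclusive at every state of $\mathcal{U}$: $\nu$ holds only at a derivation $\zeta$ with $[\zeta]_\equiv=\nu$, whereas $\en(\nu)$ holds at $\zeta$ only if some representative of $\nu$ is concurrent with $\zeta$, and concurrency excludes equivalence (Lemma~\ref{lm:conc_and_equiv}). Hence $\mathbf{F}\mathbf{G}\,\en(\nu)\wedge\mathbf{G}\mathbf{F}\,\nu$ is unsatisfiable and the weak-fairness formula is equivalent to $\neg\mathbf{F}\mathbf{G}\,\en(\nu)$; it can only hold vacuously. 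Your strategy of deriving ``infinitely many firings of $\nu$'' from justness plus perpetual enabledness therefore aims at an impossible conclusion---already for $\nu$ the $b!$-transition contributed by a component $b!.0$, firing infinitely often is absurd; what justness actually yields is a refutation of $\mathbf{F}\mathbf{G}\,\en(\nu)$. Second, the formula need not hold for an \emph{arbitrary} lift $\pi$ of a just path $\widehat\pi$, while your argument quantifies over any such $\pi$; the theorem only asserts that \emph{some} lift works. For $C|B$ with $B\stackrel{{\it def}}{=}c.B+b!.0$ and $C\stackrel{{\it def}}{=}c.C$, the path $c^\omega$ is just (witnessed by attributing all transitions to $B$), yet the lift in which every $c$ stems from $C$ keeps the abstract $b!$-transition perpetually enabled and never taken. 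The paper therefore has to \emph{construct} a suitable $\pi$, which it does by enumerating the countably many non-blocking abstract transitions, each infinitely often, and diagonalising: at stage $i$ the lift is extended up to a derivation not enabling $\nu_i$, whose existence is guaranteed by Lemma~\ref{lem:nu-enabled} (a $Y\!$-just path of $\mathcal{S}$ is $\nu$-enabled only if $\ell(\nu)\in Y$). None of this machinery appears in your sketch.

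Your ``$\Leftarrow$''-direction is closer to the paper's: reduce the hypothesis to ``no non-blocking abstract transition is eventually perpetually enabled'' and feed a candidate family of predicates into the coinductive \Def{just path}. But your parallel clause has $X$ and $Z$ inverted: they must collect the labels of transitions that \emph{are} eventually perpetually enabled along the respective component---these are exactly the actions the environment must block---not those that are not; as written, a $c\in X\cap\bar Z$ gives you \emph{no} enabled transitions, so the $\tau$-synchronisation you invoke to derive a contradiction does not arise, and the containment $X\cup Z\subseteq Y$ needed against the finite-path clause fails as well. Finally, the content you defer as ``the bulk of the work''---that $\en(\nu)$ is preserved and reflected by composition with $|$, $\backslash c$ and $[f]$ at both process- and derivation-states (Lemmas~\ref{lem:compose parallel}--\ref{lem:compose relabelling} and \ref{lem:parallel derivation}--\ref{lem:relabelling derivation}), plus the transfer of $Y\!$-justness from the unique decompositions in $\mathcal{U}$ to $\mathcal{S}$ (\Prop{projection})---is indeed where most of the proof lives, so the proposal cannot be credited with it.
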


\section{Implementing Fairness Specifications}\label{sec:fairness implementations} 
\newcommand{\II}{I}
\renewcommand{\AI}{\II_{1}}                            
\renewcommand{\BI}{G}                                  
\renewcommand{\CI}{\II_{2}}                            
\renewcommand{\d}[1]{c_{#1}}
\newcommand{\db}[1]{c_{{#1}}!}
\newcommand{\dr}[1]{c_{#1}?}
\newcommand{\tb}[1]{t_{#1}}
\newcommand{\tr}[1]{t_{#1}?}
\newcommand{\e}{e}
\newcommand{\BB}[1]{\BI_{#1}}

For certain properties of the form $(\bigvee_{i}\mathbf{G}\mathbf{F}a_{i}) \Rightarrow (\bigvee_{j}\mathbf{G}\mathbf{F}b_{j})$
where the $a_i$ and $b_i$ are action occurrences%
---hence for specific strong fairness properties---one can
define a \emph{fairness operator} that
transforms a given LTS into a LTS that satisfies the property~\cite{PuhakkaValmari01}.
This is done by eliminating all the paths that do not satisfy the property via a carefully designed parallel composition. 
In the same vein, we ask whether any process specification involving a fairness specification can
be implemented by means of a process-algebraic expression without fairness component. Here
we give an example that we believe cannot be implemented in
standard process algebras like CCS\@. To make this more precise, let CCS$^!$ be the fragment of \ABC
without receive actions; equivalently, this is the fragment of CCS in which certain names $b$ induce
no co-names $\bar b$ and no restriction operators $\backslash b$. These actions are deemed output
actions, meaning that we do not consider environments that can prevent them from occurring.

Consider the CCS$^!$ process
$
(\AI\,|\,\BI\,|\,\CI)\backslash \d1\backslash\d2
$, 
where\\[0mm]
\centerline{$
\II_{i}\stackrel{\it def}{=} r_i.\bar{c_i}.\II_{i}\,\quad (i\in\{1,2\})
\ \mbox{ and}\qquad
\BI\stackrel{\it def}{=}c_1.\tb1.\e.\BI  \,+\, c_2.\tb2.\e.\BI 
$}\\[1mm]
augmented with the fairness specification
$
\bigwedge_{i=1,2}\mathbf{G}(r_i \Rightarrow \mathbf{F}(t_i!))
$.

Here $t_1$, $t_2$, $e$ are output actions.
This process could be called a \emph{fair scheduler}. The actions $r_1$ and $r_2$ can be seen as
requests received from the environment to perform tasks $\tb1$ and $\tb2$, respectively.
Each $r_i$ triggers a 
task $t_i$.
Moreover, between each two occurrences of
$t_i$ and $t_j$
an action $\e$ needs to be scheduled.

\begin{trivlist} \item[\hspace{\labelsep}{\bf Conjecture 1.}]\it
There does not exist a CCS\/${}^!$ expression $\BI$ such that the process
$
(\AI\,|\,\BI\,|\,\CI)\backslash \d1\backslash\d2
$,
with {$\AI$ and $\CI$} as above, has the following properties:
\begin{enumerate}
\vspace{-1ex}
\item On each complete (= just) path, each $r_i$ is followed by a $t_i$.
\item On each finite path no more $t_i$s than $r_i$s occur.
\item Between each pair of occurrences of $t_i$ and $t_j$ ($i,j\mathbin\in\{1,2\}$) an action $\e$ occurs.%
\vspace{-1ex}
\end{enumerate}
\end{trivlist}
We use CCS$^!$ rather than CCS to prevent the environment invalidating \textit{1.}\ by disallowing $t_i$.
We believe that there is no way to encode a fair scheduler with these properties in CCS$^!$ without the help of a
fairness specification.

However, we can do it in \ABC:\\[1mm]
\centerline{$
\begin{array}{l@{}cl@{\qquad\qquad}l@{}cl}
\AI&\stackrel{\it def}{=}&r_1.\db1.\AI &
\CI&\stackrel{\it def}{=}&r_2.\db2.\CI\\[1ex]
\BI &\stackrel{\it def}{=}&\dr1.\BB1 + \dr2.\BB2&
\BI' &\stackrel{\it def}{=}&\e.\BI + \dr1.\BB1' + \dr2.\BB2' \\
\BB{i} &\stackrel{\it def}{=}&\dr{j}.\BB{ij} + \tb{i}.\BI'&
\BB{i}' &\stackrel{\it def}{=}&\e.\BB{i} + \dr{j}.\BB{ij}'\\
\BB{ij} &\stackrel{\it def}{=}&\tb{i}.\BB{j}'&
\BB{ij}' &\stackrel{\it def}{=}&\e.\BB{ij}\\
\end{array}
$}\\[1mm]
with $i,j\in\{1,2\}$ and $i\not= j$.
This scheduler satisfies the fairness specification since the justness properties for output
actions require that once $r_i$ occurs, $c_{i}!$ must follow, and then
$t_i$ will eventually happen, at the latest when $\BI_{ij}$ is reached.

Currently, it is an open question whether arbitrary fairness specifications can be implemented in \ABC.

\section{Conclusion and Outlook}\label{sec:conclusion}
In this paper we have investigated how progress, fairness and justness
assumptions
can be handled within a process-algebraic specification formalism. 
Our semantics of a process is a state $P$ in an LTS together with a set of complete paths:
paths \startingfrom $P$ that are \emph{progressing}, \emph{fair} and \emph{just}.
We specify the fair paths by means of temporal logic, using a \emph{fairness specification} in
addition to a process-algebraic expression $P$. The progressing and just paths, on the other hand, are completely
determined by the syntax of $P$.

To demonstrate that the introduced approach is not only a theoretical result, we have applied 
the formalism  to a more involved process algebra called AWN (Algebra for Wireless Networks) and 
analysed the IETF-standardised \emph{Ad hoc On-demand Distance Vector (AODV) routing protocol}~\cite{RFC3561}.
We investigated two fundamental properties of routing protocols: \emph{route discovery} and \emph{packet delivery}.
Route discovery---a property that every routing protocol ought to satisfy---states that if a route discovery process is initiated in a state where the source is connected to the destination and no (relevant) link breaks, then the source will eventually discover a route to the destination.
Surprisingly, using the presented mechanism we could show that this property does not hold. The second property, packet delivery, was already sketched in \Sect{properties}; 
it has been shown that this property does not hold either. As a consequence, AODV does not satisfy two of the most crucial properties of routing protocols.
Details can be found in \cite{TR13}.
The  formalisation of progress, fairness and justness presented here was crucial for these results; 
without making these assumptions, no routing protocol would satisfy the route discovery and
packet delivery properties.

Future work will include the definition of suitable semantic equivalences on an LTS 
together with a set of complete paths, and their algebraic characterisations.

\newpage
\def\SSort#1{}\def\NSort#1{}

\newpage
\begin{appendix}
\setcounter{secnumdepth}{3}

\section*{\Large Appendices}
\section{Finite Paths are Just iff they are Progressing}\label{sec:just progressing}

\begin{proposition}\label{prop:progressing}
A finite path in\/ $\mathcal{S}$ is $Y\!$-just, for $Y\mathbin\subseteq \HC$, iff its last state is a state $Q\in\T_{\rm\ABC}$
of\/ $\mathcal{T}$ and all transitions enabled in $Q$ are labelled with actions from $Y\cup\B?$.
\end{proposition}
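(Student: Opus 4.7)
The forward direction is immediate from the first clause of \Def{just path}: any finite $Y$-just path, by definition, ends in a state $Q \in \T_{\rm\ABC}$ admitting only actions from $Y \cup \B?$.

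For the converse I exploit that $Y$-justness is the largest family satisfying the five closure clauses of \Def{just path}. For each $Y \subseteq \HC$, let $\mathcal{G}_Y$ be the family of finite paths in $\mathcal{S}$ whose final state $Q$ lies in $\T_{\rm\ABC}$ and admits only actions from $Y \cup \B?$; verifying the five closure clauses for $\{\mathcal{G}_Y\}_{Y \subseteq \HC}$ then yields $\mathcal{G}_Y \subseteq \{Y\text{-just paths}\}$ by maximality. The first clause holds by definition of $\mathcal{G}_Y$, and the suffix clause is immediate since every suffix of a finite path in $\mathcal{G}_Y$ has the same final state. For restriction, take $\pi \in \mathcal{G}_Y$ ending in $P' \backslash c$: any action $\ell \notin \{c,\bar c\}$ admitted by $P'$ lifts via \myref{Res} to an identically labelled transition of $P' \backslash c$, so $\ell \in Y \cup \B?$; hence $P'$ admits only actions from $(Y \cup \{c,\bar c\}) \cup \B?$, placing the structurally decomposed path of $P$ in $\mathcal{G}_{Y \cup \{c,\bar c\}}$. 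Relabelling is entirely analogous, using that $f$ respects the partition of the action set into $\B!$, $\B?$, $\HC$, and $\{\tau\}$, so $f^{-1}(Y) \subseteq \HC$ and $f^{-1}(\B?) = \B?$.

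The main step is parallel composition. Given $\pi \in \mathcal{G}_Y$ of $P|Q$ ending in $P'|Q'$, structurally decompose $\pi \in \pi_1 | \pi_2$ so that $\pi_1, \pi_2$ are finite paths of $P, Q$ ending in $P', Q'$ respectively; let $X$ collect the handshake actions admitted by $P'$ and $Z$ those admitted by $Q'$. A case analysis on the SOS rules for parallel composition yields three facts: (i) $P'$ admits no $\tau$ and no broadcast $b!$---otherwise \myref{Par-l}, \myref{Bro-l}, or \myref{Bro-c} would produce a $\tau$- or $b!$-labelled transition out of $P'|Q'$, but the hypothesis $Y \subseteq \HC$ places neither label in $Y \cup \B?$; (ii) each $c \in X$ lifts via \myref{Par-l} to a $c$-transition from $P'|Q'$, forcing $c \in Y$ and hence $X \subseteq Y$; (iii) a pair $c \in X$, $\bar c \in Z$ would synchronise via \myref{Comm} into $P'|Q' \ar{\tau} P''|Q''$, contradicting (i). Symmetric statements apply to $Q'$, so $\pi_1 \in \mathcal{G}_X$, $\pi_2 \in \mathcal{G}_Z$, $X \cup Z \subseteq Y$, and $X \cap \bar Z = \emptyset$, as the clause requires.

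The main delicacy lies in the asymmetry of the broadcast rules: a broadcast admitted by $P'$ already forces a broadcast transition out of $P'|Q'$ regardless of whether $Q'$ can receive the message, so ruling out broadcasts at $P'$ must be argued by dispatching both the case in which $Q'$ can receive (handled by \myref{Bro-c}) and the case in which it cannot (handled by \myref{Bro-l}). The indispensable leverage throughout is the hypothesis $Y \subseteq \HC$, which at the outset bars $\tau$ and any $b!$ from the enabling set $Y \cup \B?$ of $P'|Q'$ and thereby turns their hypothetical enabledness at $P'$ or $Q'$ into immediate contradictions.
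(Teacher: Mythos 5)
Your proof is correct and follows essentially the same route as the paper: the forward direction from the first clause of \Def{just path}, and the converse by exhibiting the family $\mathcal{G}_Y$ and invoking maximality of $Y\!$-justness. The paper merely asserts that this family satisfies the five requirements, whereas you verify them explicitly (and correctly, including the broadcast subtlety in the parallel case), so your write-up is a more detailed version of the same argument.
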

\begin{proof}
\hspace{-2pt}``$\Rightarrow$'': This follows immediately from the first requirement of \Def{just path}.

\newcommand{\fjust}{just$_{\text{fin}}$}
``$\Leftarrow$'': Define a path in\/ $\mathcal{S}$ to be $Y\!$-\emph{\fjust} if it is finite, its last state
is a state $Q\in\T_{\rm\ABC}$ of\/ $\mathcal{T}$, and all transitions enabled in $Q$ are labelled with actions from $Y\cup\B?$.
Then the family of predicates $Y\!$-justness${}_{\text{fin}}$, for $Y\mathbin\subseteq \HC$,
satisfies the five requirements of \Def{just path}. Since $Y\!$-justness is the largest family of
predicates satisfying those requirements, $Y\!$-justness${}_{\text{fin}}$ implies $Y\!$-justness.
\qed\end{proof}
It follows that a finite path is just iff it is progressing.

\section{A Concrete Kripke Structure for ABC}\label{sec:derivations}

In \Sect{properties} we extracted a transition system\/ $\mathcal{S}$ with unlabelled transitions---a
\emph{Kripke structure} \cite{HuthRyan04}, but without the condition of totality---out of the LTS\/
$\mathcal{T}$ generated by the structural operational semantics of ABC\@.
The states of\/ $\mathcal{S}$ are the states of\/ $\mathcal{T}$, that is, $\T_{\rm\ABC}$, together with
the transitions $P \ar{\alpha} Q$ of\/ $\mathcal{T}$. The transitions of\/ $\mathcal{S}$ are
$P \rightarrow (P \ar{\alpha} Q)$ and $(P \ar{\alpha} Q)\rightarrow Q$, for any transition $P \ar{\alpha} Q$ of\/ $\mathcal{T}$.
Next, we would like to define predicates $\en(\nu)$ on the states of\/
$\mathcal{S}$ indicating whether an (abstract) transition $\nu$ is enabled in a state $s$ of\/ $\mathcal{S}$.
If $s$ is actually a state of\/ $\mathcal{T}$, this is the case if $s$ is the source of $\nu$.
If $s$ is a transition $\zeta$ of\/ $\mathcal{T}$, this should be the case if $\nu$ is enabled in the source
of $\zeta$, and moreover $\nu$ and $\zeta$ are concurrent, in the sense that they stem from
different parallel components. A problem with this plan has been illustrated in \SSect{justice} by
the process $C|B$, with \plat{$B \stackrel{{\it def}}{=} c.B + b!.0$} and \plat{$C\stackrel{{\it def}}{=}c.C$}.
The $b!$-labelled transition $\nu$ is enabled in (or during) the transition \plat{$C|B\ar{c} C|B$} if this
transition stems from $C$, but not if it stems from $B$. However, our transition system\/
$\mathcal{S}$ fails to distinguish transitions based on the components from which they stem.

For this reason, we here define a different Kripke structure\/ $\mathcal{U}$ that makes the required distinctions.
The states of\/ $\mathcal{U}$ are the states $\T_{\rm\ABC}$ of\/ $\mathcal{T}$ together with the \emph{derivations} of
transitions of\/ $\mathcal{T}$ from the rules of \Tab{ABC}---the latter are the \emph{concrete transitions}
alluded to in \SSect{justice}. 
The transitions of\/ $\mathcal{U}$ are of the form
$P \rightarrow \chi$ and $\chi \rightarrow Q$, for\/ $\mathcal{T}$-states $P,Q$ and derivations $\chi$ corresponding
to a transition $P\ar{\alpha}Q$.
The Kripke structure\/ $\mathcal{S}$ can be
obtained from\/ $\mathcal{U}$ by consistently identifying multiple derivations of the same transition.

We start by giving a name to every derivation of an \ABC transition from the rules of \Tab{ABC}.
The unique derivation of the transition $\alpha.P \ar{\alpha} P$ using the rule \myref{Act} is
called $\shar{\alpha}P$. The derivation obtained by application of \myref{Comm} or \myref{Bro-c} on
the derivations $\chi$ and $\zeta$ of the premises of that rule is called $\chi|\zeta$. The
derivation obtained by application of \myref{Par-l} or \myref{Bro-l} on the derivation $\chi$ of the
(positive) premise of that rule, and using process $Q$ at the right of
$|$, is $\chi|Q$. In the same way, \myref{Par-r} and \myref{Bro-r} yield $P|\zeta$,
whereas \myref{Sum-l}, \myref{Sum-r}, \myref{Rel}, \myref{Res} and \myref{Rec} yield $\chi{+}Q$, $P{+}\chi$,
$\chi[f]$, $\chi\backslash c$ and \plat{$A\mathord:\chi$}.
For a derivation $\chi$ of a transition $P \ar{\alpha} Q$ write $\src(\chi):=P$, $\target(\chi):=Q$
and $\ell(\chi):=\alpha$.

It remains to define atomic propositions on\/ $\mathcal{U}$.
Following \Sect{properties} we have an atomic proposition $\alpha$ for each $\alpha\in Act$, and a
state $u$ of\/ $\mathcal{U}$ is labelled with $\alpha$ iff $u$ is a derivation of a transition $P\ar{\alpha}Q$.
Additionally, \SSect{justice} announced atomic propositions $\nu$ and $\en(\nu)$ for each
\emph{abstract transition} $\nu$; this is the subject of \App{enabling}.

Let $\widehat\cdot$ be the mapping from the states of\/ $\mathcal{U}$ to the states of\/ $\mathcal{S}$
given by $\widehat P = P$ for any process $P\in\T_{\rm\ABC}$, and $\widehat{\chi} = (P\ar{\alpha}Q)$
for any derivation $\chi$ of a transition $P\ar{\alpha}Q$ in\/ $\mathcal{T}$.
Then each state of\/ $\mathcal{S}$ is of the form $\widehat{u}$, and there is a transition $\widehat
u\rightarrow s$ in\/ $\mathcal{S}$ iff there is a transition $u\rightarrow v$ in\/ $\mathcal{U}$ with $\widehat v = s$.
A \emph{path} $\pi$ in\/ $\mathcal{U}$ is a finite or infinite sequence $u_0 u_1 u_2 \dots$ of
states of\/ $\mathcal{U}$ such that $u_i\rightarrow u_{i+1}$ for all $i$.
This amounts to an alternating sequence of processes $P\in\T_{\rm\ABC}$ and derivations $\chi$ of
transitions from\/ $\mathcal{T}$.
For any such path $\pi$ let $\widehat\pi := \widehat u_0 \widehat u_1 \widehat u_2 \dots$.
Then $\widehat\pi$ is a path in\/ $\mathcal{S}$, and furthermore any path in\/ $\mathcal{S}$ is of this form.
Both in\/ $\mathcal{U}$ and in\/ $\mathcal{S}$ we only consider paths that are infinite or end in a state of\/ $\mathcal{T}$.

\section{An Asymmetric Concurrency Relation between Transitions}\label{sec:concurrency}

\newcommand{\conc}{\mathrel{\mbox{$\smile\hspace{-1.15ex}\raisebox{2.5pt}{$\scriptscriptstyle\bullet$}$}}}
\newcommand{\nconc}{\mathrel{\mbox{$\not\smile\hspace{-1.15ex}\raisebox{2.5pt}{$\scriptscriptstyle\bullet$}$}}}
We define a \emph{concurrency} relation $\conc$ between the derivations of the outgoing transitions
of a process $P\in\T_{\rm\ABC}$. With $\chi \conc\zeta$ we mean that the possible occurrence of
$\chi$ is unaffected by the possible occurrence of $\zeta$. More precisely, $\chi$ and $\zeta$
need to be enabled in the state $P$, and $\chi \nconc\zeta$ indicates that the occurrence of $\zeta$
ends or interrupts the enabledness of $\chi$, whereas $\chi\conc\zeta$ indicates that $\chi$ remains
enabled during the execution of $\zeta$.

\begin{example}
Let $P$ be the process $A$ with \plat{$A\stackrel{{\it def}}{=}a.A + c.A$}, and let $\chi$ and
$\zeta$ be the derivations of the $a$- and $c$-transitions of $P$. Then $\chi \nconc\zeta$, because the
occurrence of $\zeta$ interrupts the enabledness of $\chi$, even though right after $\zeta$ has
occurred we again reach a state where $\chi$ is enabled.
\end{example}

\begin{example}
Let $P$ be the process $a.0|c.0$, and let $\chi$ and
$\zeta$ be the derivations of the $a$- and $c$-transitions. Then $\chi \conc\zeta$, because the
occurrence of $\zeta$ does not affect the (parallel) occurrence of $\chi$ in any way.
\end{example}

\begin{example}\label{ex:concurrency broadcast}
Let $P$ be the process $b!.0|(b?.0+c.0)$, and let $\chi$ and
$\zeta$ be the derivations of the $b!$- and $c$-transitions of $P$.
The broadcast $b!$ is in our view completely under the control of the left component; it will occur
regardless of whether the right component listens to it or not. It so happens that if $b!$ occurs in
state $P$, the right component will  listen to it, thereby disabling the possible occurrence of $c$.
For this reason we have $\chi \conc\zeta$ but $\zeta \nconc\chi$.
\end{example}

\begin{definition}\label{df:concurrency}\rm
\emph{Concurrency} is the smallest relation $\conc$ on derivations such that
\begin{itemize}
\vspace{-1ex}
\item $\chi|Q\conc P|\zeta$ and $P|\zeta\conc\chi|Q$ if $\src(\chi)=P$ and $\src(\zeta)=Q$,
\item $\chi|\varsigma \conc P|\zeta$ and $\varsigma|\chi \conc \zeta|P$ if $\src(\chi)=P$,
  $\src(\varsigma)=\src(\zeta)$ and $\ell(\varsigma)\in\B?$,
\item $\chi\conc\zeta$ implies
$\chi{+}P \mathbin{\conc} \zeta{+}P$,
$P{+}\chi \mathbin{\conc} P{+}\zeta$,
$\chi{|}P \mathbin{\conc} \zeta{|}P$ and
$P{|}\chi \mathbin{\conc} P{|}\zeta$,
\item  $\chi\mathbin{\conc}\zeta$ implies
$\chi{|}P \mathbin{\conc} \zeta{|}\xi$,
$\chi{|}\xi \mathbin{\conc} \zeta{|}P$,
$P{|}\chi \mathbin{\conc} \xi{|}\zeta$ and
$\xi{|}\chi \mathbin{\conc} P{|}\zeta$
if $P\mathbin=\src(\xi)$,
\item$\chi\conc\zeta$ implies
$\chi{|}\varsigma \conc \zeta{|}\xi$ and
$\varsigma{|}\chi \conc \xi{|}\zeta$ if $\src(\varsigma)=\src(\xi)$ and $\ell(\varsigma)\in\B?$,
\item $\chi\conc\zeta \wedge \varsigma \conc \xi$ implies
$\chi{|}\varsigma \conc \zeta{|}\xi$, and
\item $\chi\mathbin{\conc}\zeta$ implies
$\chi\backslash c \mathbin{\conc} \zeta\backslash c$,\,
$\chi[f] \mathbin{\conc} \zeta[f]$,\,
$A{:}\chi \mathbin{\conc} A{:}\zeta$ for any $c\mathbin\in\HC\!$, relabelling $f\!$ and $A\mathbin\in\A\!$,
\vspace{-1ex}
\end{itemize}
for arbitrary derivations $\chi$, $\zeta$, $\varsigma$, $\xi$,
and expressions $P,Q\in\T_{\rm\ABC}$, provided that the composed derivations exist.
We say that $\chi$ and $\zeta$ are \emph{concurrent} (in signs $\chi \smile \zeta$) if  $\chi\conc\zeta$ and $\zeta\conc\chi$.
\end{definition}

\begin{observation}\rm\label{obs:source}
The relation $\conc$ is irreflexive. Moreover, if $\chi\conc\zeta$ then $\src(\chi)=\src(\zeta)$.
\end{observation}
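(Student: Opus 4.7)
I would prove both claims simultaneously by rule induction on the inductive definition of $\conc$ in \Def{concurrency}. That is, since $\conc$ is the least relation closed under the seven clauses listed there, it suffices to verify the two properties $(a)$ $\chi\neq\zeta$ and $(b)$ $\src(\chi)=\src(\zeta)$ whenever a clause of \Def{concurrency} yields $\chi\conc\zeta$, assuming the corresponding properties already hold for any instances of $\conc$ appearing in the hypothesis of that clause. Irreflexivity then follows from $(a)$: if one had $\chi\conc\chi$, then $(a)$ would give $\chi\neq\chi$, a contradiction.

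For the base clauses, the key observation is that the derivation-naming convention in \App{derivations} makes the three ways of composing at a parallel operator syntactically distinguishable: $\chi'|Q$ arises from \myref{Par-l}/\myref{Bro-l} (derivation $\mathbin{\vert}$ state), $P|\zeta'$ arises from \myref{Par-r}/\myref{Bro-r} (state $\mathbin{\vert}$ derivation), and $\chi'|\zeta'$ arises from \myref{Comm}/\myref{Bro-c} (derivation $\mathbin{\vert}$ derivation). So in the first clause, $\chi|Q\neq P|\zeta$ holds for free, and in the second, $\chi|\varsigma\neq P|\zeta$ similarly. For $(b)$, one just unfolds the definition of $\src$ on compound derivations and uses the side conditions: $\src(\chi|Q)=P|Q=\src(P|\zeta)$, and in the second clause $\src(\chi|\varsigma)=P|\src(\varsigma)=P|\src(\zeta)=\src(P|\zeta)$. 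The symmetric variants are handled identically.

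The inductive clauses (the last five bullets of \Def{concurrency}) are routine: each asserts $\chi'\conc\zeta'\Rightarrow C[\chi']\conc C[\zeta']$ for a common context $C[\cdot]$ (possibly parameterised by a state, or by another concurrent pair $\varsigma\conc\xi$). Property $(a)$ is immediate from the inductive hypothesis $\chi'\neq\zeta'$, since the corresponding slot of the context differs; and $(b)$ follows by computing $\src(C[\chi'])$ and $\src(C[\zeta'])$ in each case and using $\src(\chi')=\src(\zeta')$ (and $\src(\varsigma)=\src(\xi)$ where relevant). For the sixth bullet, both $\chi|\varsigma$ and $\zeta|\xi$ are \myref{Comm}/\myref{Bro-c} derivations and their sources agree by combining the two inductive hypotheses.

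The main obstacle, modest as it is, is bookkeeping: in the clauses that mix different composition forms (such as $\chi|P\conc\zeta|\xi$ with $P=\src(\xi)$), one must check that the syntactic distinctness of the two compound derivations and the equality of their sources really do hold in every sub-case, and one has to rely on the naming convention from \App{derivations} to ensure that derivations $\chi|Q$ (with $Q$ a process) and $\chi|\xi$ (with $\xi$ a derivation) are never confused. Once the list of cases is set up, each verification is a one-line calculation, and no additional lemmas are needed.
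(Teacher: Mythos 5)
Your proof is correct and matches the paper's approach: the paper simply states that the observation ``follows by a straightforward induction on the definition of $\conc$'', which is exactly the rule induction you carry out, strengthening irreflexivity to $\chi\neq\zeta$ and checking source equality clause by clause. No discrepancy to report.
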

This follows by a straightforward induction on the definition of $\conc$.

\begin{example}\label{ex:concurrency}
Let \plat{$A\mathop{\stackrel{{\it def}}{=}}c.A$} and
\plat{$B\mathop{\stackrel{{\it def}}{=}}\bar c.B + (\tau.B+b!.0)$}.\vspace{-2pt}
Transition $A|B \mathbin{\ar{\tau}} A|B$ has 2 derivations:
$(A\mathord{:}{\shar{c}}A)|B\mathord:\big(({\shar{\bar c}}B){+}(\tau.B+b!.0)\big)$
and $A|B\mathord:\big(\bar c.B {+} ((\shar{\tau}B){+}b!.0)\big)$.
Only the latter is concurrent with \plat{$(A\mathord{:}{\shar{c}}A)|B$}
(using the first clause above).
\end{example}

\begin{example}\label{ex:concurrency 2}
One has $(((\shar{a}0)|c.0)+d.0)[f] \smile ((a.0|(\shar{c}0))+d.0)[f]$,
using the first, third and seventh clauses above.
Both are derivations of transitions with source $((a.0|c.0)+d.0)[f]$.
\end{example}

\begin{example}\label{ex:concurrency 3}
One has $((\shar{a}0)|c.0)|((\shar{\bar a}0)|\bar c.0) \smile (a.0|(\shar{c}0))|(\bar a.0|(\shar{\bar c}0)))$,
using the first and sixth clauses above.
Both are derivations of transitions with source $(a.0|c.0)|(\bar a.0|\bar c.0)$.
\end{example}

\begin{example}\label{ex:concurrency 4}
One has $((\shar{a}0)|c.0)|(\shar{\bar a}0) \smile (a.0|(\shar{c}0))|\bar a.0$,
using the first and fourth clauses above.
Both are derivations of transitions with source $(a.0|c.0)|\bar a.0$.
\end{example}

\begin{example}\label{ex:concurrency 5}
One has $\shar{b!}0\,|((\shar{b?}0)+c.0) \conc b!.0\,|(b?.0+(\shar{c}0))$,
using the second clause above.
Both are derivations of transitions with source $b!.0|(b?.0+c.0)$.
However, $b!.0\,|(b?.0+(\shar{c}0))\not\conc\shar{b!}0\,|((\shar{b?}0)+c.0)$.
See \Ex{concurrency broadcast} for motivation.
\end{example}

\begin{example}\label{ex:concurrency 6}
One has $((\shar{b!}0)|c.0)|((\shar{b?}0)+\bar c.0) \conc (b!.0|(\shar{c}0))|(b?0+(\shar{\bar c}0))$,
using the first and fifth clauses above.
Both are derivations of transitions with source $(b!.0|c.0)|(b!.0 + \bar c.0)$.
\end{example}

\newpage

\section{Enabling Abstract Transitions}\label{sec:enabling}

Below we define the concept of an \emph{abstract transition} as an equivalence class of
\emph{concrete} transitions, the latter being the derivations of ABC transitions from the rules of \Tab{ABC}.
The main idea is that a transition $\nu$ stemming from one side of a parallel composition yields only one
abstract transition of the parallel composition itself, regardless of the state of the other component,
and, if $\nu$ performs a broadcast action, regardless of whether the other component performs a
receive action synchronising with $\nu$.

Henceforth, $\nu$, $\nu_1$ and $\nu_2$, ranges over abstract transitions, $\chi$,
  $\zeta$, $\varsigma$ and $\xi$ over derivations, $P,Q$ over ABC expressions, and $u$, $v$ over
  states of $\mathcal{U}$, which are either derivations or ABC expressions.

\begin{definition}\label{df:equiv}\rm
Let $\equiv$ be the smallest equivalence relation on derivations $\chi$ with $\ell(\chi)\notin\B?$, satisfying
\begin{itemize}
\vspace{-1ex}
\item $\chi{|}P \equiv \chi{|}Q$ and $P{|}\chi \mathbin{\equiv} Q{|}\chi$,
\item $\chi{|}\varsigma \equiv \chi{|}P$ and $\varsigma{|}\chi \mathbin{\equiv} P{|}\chi$ if
  $\ell(\chi)\in\B!$ (and thus $\ell(\varsigma)\in\B?$),
\item $\chi{+}P \equiv \chi \equiv P{+}\chi$,~ $A\mathord:\chi\equiv\chi$ for $A\in\A$,
\item $\chi\equiv\zeta$ implies
$\chi\backslash c \mathbin{\equiv} \zeta\backslash c$,
$\chi[f] \mathbin{\equiv} \zeta[f]$,
$\chi{|}P \equiv \zeta{|}P$ and $P{|}\chi \mathbin{\equiv} P{|}\zeta$,
and moreover
\item $\chi\equiv\zeta \wedge \varsigma \equiv \xi$ implies
$\chi{|}\varsigma \equiv \zeta{|}\xi$,
\vspace{-1ex}
\end{itemize}
for arbitrary derivations $\chi$, $\zeta$, $\varsigma$, $\xi$,
and expressions $P,Q\in\T_{\rm\ABC}$, provided that the composed derivations exist.
An equivalence class $[\chi]_\equiv$ is called an \emph{abstract transition}; it 
can uniquely be denoted by leaving out $A\mathord:$, $P{+}$ and ${+}P$ and writing
$\zeta|\_$ for $\zeta|P$ or for $\zeta|\varsigma$ with $\ell(\zeta)\in\B!$---and likewise
$\_|\zeta$ for $P|\zeta$ or for $\varsigma|\zeta$ with $\ell(\zeta)\in\B!$---%
in all subexpressions of $\chi$. 
If $\nu=[\chi]_\equiv$ then the derivation $\chi$ is called a \emph{representative} of the abstract
transition $\nu$.
\end{definition}

By definition $\chi\equiv \zeta$ implies $\ell(\chi)=\ell(\zeta)$.
Setting $\ell([\chi]_\equiv):=\ell(\chi)$, we note that receive-actions are excluded, i.e.,
for any abstract transition $\nu$ we have $\ell(\nu)\notin\B?$.

\begin{observation}\label{obs:equiv_parallel}
If $\chi| u\equiv  v_1|v_2$ with $\ell(\chi)\not\in\B?$
then $\chi\equiv v_1$. 
As a consequence, since no derivation is related to a process, $\chi|u\not\equiv Q|\zeta$, for
any derivation~$\zeta$ and $Q\in\T_{\rm\ABC}$.
\end{observation}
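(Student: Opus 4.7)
The plan is to prove the first assertion by induction on the generation of $\equiv$, strengthened so as to survive the transitivity closure. Let the \emph{essential outermost constructor} of a derivation be the outermost symbol that remains after repeatedly peeling off wrappers of the form $\chi{+}P$, $P{+}\chi$ and $A\mathord{:}\chi$, which are absorbed by the third clause of \Def{equiv}. The invariant to establish is: whenever $\xi \equiv \xi'$ and $\xi$ has essential outermost constructor $|$ with left operand $\chi$ satisfying $\ell(\chi)\notin\B?$, then $\xi'$ does too, and its left operand $\chi'$ satisfies $\chi \equiv \chi'$. A symmetric statement for right operands is proved simultaneously, to accommodate the right-handed parallel clauses. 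The observation is then the literal instance $\xi = \chi|u$, $\xi' = v_1|v_2$.

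The induction runs through one case per defining clause of \Def{equiv}, together with reflexivity, symmetry and transitivity. The clauses $\chi|P \equiv \chi|Q$ and $\chi|\varsigma \equiv \chi|P$ (when $\ell(\chi)\in\B!$) preserve the left operand verbatim; the wrapper-absorbing clauses $\chi{+}P \equiv \chi$, $P{+}\chi \equiv \chi$ and $A\mathord{:}\chi \equiv \chi$ preserve the essential outermost constructor by definition; the right-handed parallel clauses feed the symmetric version of the invariant; and the congruence clauses (the fourth and fifth bullets) propagate the invariant through constructors, with the invariant for the composite following immediately from the induction hypothesis applied to the premises. Reflexivity and symmetry are trivial from the symmetric phrasing of the invariant.

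The main obstacle is transitivity: given $\xi \equiv \xi'' \equiv \xi'$, we must thread the invariant through an intermediate $\xi''$ of unknown syntactic shape. This is precisely why the invariant is stated in terms of the essential outermost constructor rather than the literal syntactic root --- being ``essentially left-parallel with left operand in a given $\equiv$-class'' is a property preserved by $\equiv$ by the induction hypothesis, so it transports from $\xi$ through $\xi''$ to $\xi'$. Care is needed at Comm-style decompositions $\chi|\zeta$ (both sides derivations): no absorption clause collapses $\zeta$ to a process here, since clause 2 applies only when $\ell(\chi)\in\B!$. The invariant makes no attempt at such a collapse; it only tracks the $\equiv$-class of the left operand, which is all the conclusion requires.

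For the second assertion, suppose for contradiction that $\chi|u \equiv Q|\zeta$ for some $Q\in\T_{\rm\ABC}$, some derivation $\zeta$, and $\ell(\chi)\notin\B?$. Applying the first assertion with $v_1 := Q$ yields $\chi \equiv Q$. But $\equiv$ is defined on derivations only, and every defining clause of \Def{equiv} relates derivations to derivations; the relation therefore contains no pair one of whose members is a process, so $\chi \equiv Q$ is impossible.
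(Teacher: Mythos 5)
Your proof is correct. The paper states this observation without proof, treating it as immediate from \Def{equiv}; your induction on the generation of $\equiv$, with the invariant phrased in terms of the \emph{essential} outermost constructor so that it survives transitivity through intermediates that need not be syntactically parallel at the root (such as $(\chi|u){+}P$ or $A{:}(\chi|u)$), is precisely the argument being left implicit, and you correctly single out transitivity as the only step requiring care while the defining clauses themselves either preserve the left operand verbatim, preserve the core, or supply the required equivalence of operands as a premise.
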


\begin{observation}\label{obs:equiv_rename_restrict}
If $\chi[f]\equiv\zeta[f]$ or $\chi\backslash c\equiv\zeta\backslash c$ then $\chi\equiv\zeta$.
\end{observation}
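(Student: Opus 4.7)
The plan is to read off the statement from the uniqueness of canonical representatives for equivalence classes that was spelled out immediately after \Def{equiv}. That normal form is obtained by recursively removing every $A{:}$, $P{+}\_$, $\_{+}P$ wrapper and replacing certain arguments of $|$ by the placeholder $\_$. The key syntactic observation is that $[f]$ and $\backslash c$ are \emph{not} among the operators the canonicalisation touches, so if $\widehat{\cdot}$ denotes the canonical representative one has $\widehat{\chi[f]}=\widehat{\chi}[f]$ and $\widehat{\chi\backslash c}=\widehat{\chi}\backslash c$. Once this is in hand, $\chi[f]\equiv\zeta[f]$ gives $\widehat{\chi}[f]=\widehat{\zeta}[f]$ as identical syntactic trees, hence $\widehat{\chi}=\widehat{\zeta}$ and therefore $\chi\equiv\zeta$; the argument for $\backslash c$ is verbatim the same.

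If one prefers to avoid invoking that canonical-form claim, I would instead prove by induction on the derivation of $\equiv$ a strengthened statement: whenever $D\equiv D'$ and $D$ has the ``$[f]$-head'' property --- meaning that after stripping away any outer $A{:}$, $P{+}\_$, $\_{+}P$ the result is of shape $\chi'[f]$ --- then $D'$ also has that property (for the same $f$), and the cores $\chi'$ at the head of $D$ and $D'$ are themselves $\equiv$-related. The two parts are verified by a case analysis on the generating clauses: clauses~1, 2 and~5 only produce pairs whose top operator is $|$, so they cannot apply at the top of anything with $[f]$-head; clause~3 merely adds or removes outer wrappers without changing the head; and clause~4 for $[f]$ rewrites $\chi'[f]$ to $\zeta'[f]$ using a \emph{shorter} derivation of $\chi'\equiv\zeta'$, which is where the induction hypothesis is used. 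Reflexivity, symmetry and transitivity are then handled uniformly. Applying the strengthened statement to the given premise $\chi[f]\equiv\zeta[f]$ yields $\chi\equiv\zeta$.

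The only nontrivial obstacle is the transitivity step: from $\chi[f]\equiv D\equiv\zeta[f]$ one has no a priori control over the intermediate $D$, whose outermost operator could in principle be anything. The whole point of the top-shape invariant is precisely to rule this out --- forcing every $D$ appearing in any chain of $\sim$-steps linking $\chi[f]$ to $\zeta[f]$ to have $[f]$-head as well --- so that the induction hypothesis applies to both halves of the split. Everything else (base cases, congruence compatibility, and the completely analogous $\backslash c$ case, which works because $\backslash c$ shares the same clause~3/clause~4 profile with $[f]$) is routine.
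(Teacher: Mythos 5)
Your proposal is correct. The paper itself offers no explicit proof of this observation --- it is stated as evident, implicitly resting on the remark after \Def{equiv} that every equivalence class ``can uniquely be denoted'' by the normal form obtained by erasing $A{:}$, $P{+}$, ${+}P$ and replacing the passive sides of $|$ by $\_$. Your first route is exactly that implicit argument made explicit: since the canonicalisation never touches the $[f]$ and $\backslash c$ constructors, it commutes with them, and injectivity of the normal form on equivalence classes yields the claim. This is the cheapest argument, but it presupposes the uniqueness claim, which the paper also does not prove and which is itself a confluence-type statement at least as strong as the observation. Your second route --- induction on the derivation of $\equiv$ with a head-shape invariant --- is therefore the one that genuinely does the work, and it is sound: the unconditional clauses either preserve the stripped head (the $\chi{+}P\equiv\chi\equiv P{+}\chi$ and $A{:}\chi\equiv\chi$ identifications) or only relate $|$-headed derivations, and the congruence clause for $[f]$ (resp.\ $\backslash c$) hands you a subderivation for the cores. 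The one small repair needed is in the symmetry case of the induction: as stated, your invariant is an implication (``if $D$ has $[f]$-head then $D'$ does''), and the induction hypothesis for $D'\equiv D$ gives you the converse direction, not the one you need. Strengthen the invariant to a biconditional --- $D$ has $[f]$-head iff $D'$ does, and then the cores are $\equiv$-related --- which costs nothing since every generating clause preserves head shape in both directions; with that adjustment the transitivity case also goes through exactly as you describe.
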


The abstract transitions, with their labels, can be seen as the smallest set such that
\newcommand{\abstr}{abstract\ }
\newcommand{\transition}{tr.\ }
\newcommand{\transitions}{trs.\ }

\setlength\leftmargini{\labelwidth}
\begin{itemize}
\vspace{-1ex}
\item $\shar{\alpha}P$ is an \abstr \transition for $\alpha\in \B!\cup\HC\cup\{\tau\}$ and
  $P\in\T_{\rm \ABC}$, and $\ell(\shar{\alpha}P)=\alpha$,
\item if $\nu$ is an \abstr \transition then so are $\nu|\_$ and $\_|\nu$, with $\ell(\nu|\_)\mathbin=\ell(\_|\nu)\mathbin=\ell(\nu)$,
\item if $\nu_1$ and $\nu_2$ are \abstr \transitions with $\ell(\nu_1)\mathbin=\overline{\ell(\nu_2)}$
  then so is $\nu_1|\nu_2$, with $\ell(\nu_1|\nu_2)\mathbin=\tau$,
\item if $\nu$ is an \abstr \transition with $c\mathbin{\neq}\ell(\nu)\mathbin{\neq} \bar c$ then so is $\nu\backslash c$,
  with $\ell(\nu\backslash c)\mathbin=\ell(\nu)$, and
\item if $\nu$ is an \abstr \transition and $f$ a relabelling then so is $\nu[f]$, with $\ell(\nu[f])=f(\ell(\nu))$.
\vspace{-1ex}
\end{itemize}
Abstract transitions only reflect the syntactical structure of derivations; 
they do not take semantics into account. Hence, $\nu|\_\not=\_|\nu$ 
and $(\_|\nu)|\_\not=\_|(\nu|\_)$.

For each abstract transition $\nu$ we introduce atomic propositions $\nu$ and $\en(\nu)$.
The former says that $\nu$ occurs. It holds for a state $u$ of\/ $\mathcal{U}$ iff $u$ is a derivation
$\zeta$ such that $\nu=[\zeta]_\equiv$. 
The latter is defined by a case distinction on the type of state $u$.
An abstract transition $\nu$ is \emph{enabled} (denoted by $\en(\nu))$ in $P\mathbin\in\T_{\rm\ABC}$ iff $P=\src(\chi)$ for
a representative $\chi$ of $\nu$.
It is enabled in (or during) a derivation $\zeta$ iff $\nu$ has a representative $\chi$ with
$\chi \conc \zeta$.
As we shall see, in that case $\nu$ is also enabled in $\src(\zeta)$ as well as $\target(\zeta)$.
We write $u\models p$ if the atomic proposition $p$ holds in the state $u$ of\/ $\mathcal{U}$.

\begin{example}
The abstract transition $\nu=\_ | (\shar{c}0)$, with $c\mathbin\in\HC$ and representatives
$\chi_1=0|(\shar{c}0)$ and
$\chi_2=a.0+\big(e.0|(\shar{c}0)\big)$,
is enabled during the derivation $\zeta = a.0+\big((\shar{e}0)|c.0)\big)$
of the transition \plat{$a.0+(e.0|c.0) \ar{e} 0 | c.0$}.
This is the case because $\chi_2\smile \zeta$.
Accordingly, $\nu$ is also enabled in the source $\src(\zeta)= a.0+(e.0|c.0)$ as well as in the target
$\target(\zeta) = 0 | c.0$ of that transition.
This example would break down without the identification $\chi \equiv P{+}\chi$.
\end{example}

\begin{example}
Let $C \stackrel{{\it def}}= d.0 | e.0$. Then the abstract transition $\nu = \_|(\shar{d}0|\_)$ is enabled
during the derivation $\zeta = c.0 | C{:}(d.0 | \shar{e}0)$ of the transition $c.0 | C \ar{e} c.0 | (d.0 | 0)$.
Accordingly, it is enabled in the source $\src(\zeta)= c.0 | C$ as well as in the target
$\target(\zeta) = c.0 | (d.0 | 0)$ of that transition.
This example would break down without the identification $C\mathord:(d.0 | \_ )\equiv(d.0 |\_)$.
\end{example}

\begin{example}\label{ex:broadcast enabling}
Let $D \stackrel{{\it def}}= c.(b?.0 + e.D)$. In our view, the infinite path labelled $(ce)^\omega$
of the process $b!.0|D$ is unjust, because the output $b!$ is continuously enabled, yet never
taken. The idea is that the component $b!.0$ will perform this output regardless of whether the
other component is listening. For this reason, we need to formalise this output as a single abstract
transition $\nu$ such that the path labelled $(ce)^\omega$ satisfies $\mathbf{F}\mathbf{G}\,\en(\nu)$.
However, in the state $b!.0|D$---and during the execution of $b!.0|D{:}\shar{c}(b?.0 + e.D)$---the
derivation \plat{$(\shar{b!}0) | D$} is enabled, yet in the state $b!.0|(b?.0 + e.D)$ the derivation
\plat{$\shar{b!}0 | ((\shar{b?}0)+c.D)$} is enabled. In order to regard these two derivations as
representatives of the same abstract transition \plat{$(\shar{b!}0)|\_$} we employ the equivalence
$\chi|Q \equiv \chi|\varsigma$ when $\ell(\chi)\in\B!$.

Furthermore, during the execution of $b!.0|(b?.0 + \shar{e}D)$, a representative $\chi$ of
\plat{$(\shar{b!}0)|\_$} with $\src(\chi)=b!.0|(b?.0 + e.D)$ needs to be enabled as well.\vspace{2pt}
The only candidate is \plat{$\chi\,=\,\shar{b!}0 | ((\shar{b?}0)+c.D)$}, so
\plat{$\shar{b!}0 | ((\shar{b?}0)+c.D) \conc b!.0|(b?.0 + \shar{e}D)$}.
This is further motivation for the second clause in the \Def{concurrency} above.
\end{example}

\begin{lemma}\label{lem:compose parallel}
If $u\models\en(\nu)$ for a state $u$ of\/ $\mathcal{U}$
and an abstract transition $\nu$
then $u|v\models\en(\nu|\_)$ for any state $u|v$ of\/ $\mathcal{U}$.
\end{lemma}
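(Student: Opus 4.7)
The plan is to unpack the definition of $\en(\cdot)$ supplied by \App{enabling}, and then to exhibit an explicit representative of $\nu|\_$ that witnesses $\en(\nu|\_)$ at $u|v$. The hypothesis yields a representative $\chi_0$ of $\nu$ with $\src(\chi_0)=u$ when $u$ is an ABC expression, and $\chi_0\conc u$ when $u$ is a derivation; by \Obs{source} the latter also forces $\src(\chi_0)=\src(u)$, so in both situations $\src(\chi_0)$ coincides with the left component of $u|v$. I would then case split on whether each of $u$ and $v$ is an expression or a derivation, giving four cases. In each one, the obligation is either that $\src(\chi')=u|v$ (when $u|v$ is itself an expression) or that $\chi'\conc u|v$ (when $u|v$ is a derivation), for a suitable witness $\chi'$.

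The default choice of witness is $\chi'=\chi_0|r$, with right companion $r=v$ when $v$ is an expression and $r=\src(v)$ when $v$ is a derivation. The equivalence $[\chi']_\equiv=\nu|\_$ is immediate from the first clause of \Def{equiv}. For the concurrency requirement, I would invoke the clause of \Def{concurrency} matched to the shape of $u|v$: the first clause when $u$ is an expression and $v$ a derivation (giving $\chi_0|\src(v)\conc u|v$), the third clause when $u$ is a derivation and $v$ an expression (lifting $\chi_0\conc u$ to $\chi_0|v\conc u|v$), and the fourth clause when both are derivations (using $\src(v)=\src(v)$ to obtain $\chi_0|\src(v)\conc u|v$). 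The remaining case, $u=P$ and $v=Q$ both expressions, is trivial, since $\src(\chi_0|Q)=P|Q=u|v$.

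The only genuine wrinkle---and the part that forces a second candidate---is the broadcast asymmetry of \ABC. When $\ell(\chi_0)=b!\in\B!$ and the parallel partner (either $v$ itself, if it is an expression, or $\src(v)$, if $v$ is a derivation) admits a receive derivation $\varsigma$ of label $b?$, the derivation $\chi_0|\src(v)$ fails to exist because of the negative premise in rule \myref{Bro-l}. I would then instead take $\chi'=\chi_0|\varsigma$, which is a legal derivation via \myref{Bro-c} with $\sharp_1\circ\sharp_2\,=\,!$. The second clause of \Def{equiv}, namely $\chi|\varsigma\equiv\chi|P$ when $\ell(\chi)\in\B!$, still delivers $[\chi']_\equiv=\nu|\_$, and the concurrency $\chi'\conc u|v$ is now supplied by the second, fourth, or fifth clause of \Def{concurrency}, respectively, depending on which of the three non-trivial cases we are in. Once this broadcast bookkeeping is discharged, no further ingredients are needed: the lemma falls out of the fact that \Def{equiv} and \Def{concurrency} were calibrated---via \Ex{concurrency broadcast} and \Ex{broadcast enabling}---precisely so that a broadcast transition remains enabled regardless of what the parallel partner is doing.
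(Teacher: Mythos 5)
Your proposal is correct and follows essentially the same route as the paper's own proof: the same four-way case split on whether $u$ and $v$ are expressions or derivations, the same default witness $\chi|Q$ (justified by the first, third and fourth clauses of \Def{concurrency}), and the same fallback witness $\chi|\varsigma$ in the broadcast subcase where the partner admits a $b?$-derivation. The only differences are presentational (you factor out the broadcast wrinkle rather than repeating it in each case), plus a harmless notational slip where you write $\chi_0|\src(v)$ for what should be $\chi_0|v$ when $v$ is an expression.
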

\begin{proof}
We make case distinctions based on whether $u$ and $v$ are processes $P,Q$ or derivations.
\begin{itemize}
	\item Suppose $u=P\models\en(\nu)$. Then $P\mathbin=\src(\chi)$ for a representative $\chi$ of $\nu$.
		\begin{itemize}
			\item Let $v=Q\in\T_{\rm\ABC}$.
			In case $\ell(\chi)=b!$ and $Q\ar{b?}$, let $\varsigma$ be a derivation of a transition $Q\ar{b?}Q'$.
			Then there exists a derivation $\chi|\varsigma$, with $\src(\chi|\varsigma)=P|Q$, which is a representative of
			the abstract transition $\nu|\_$.
			In all other cases there exists a derivation $\chi|Q$, with $\src(\chi|Q)\mathbin= P|Q$, which is a
			representative of the abstract transition $\nu|\_$.
			Either way $P|Q\models\en(\nu|\_)$.			
			\item Now let $v=\xi$ be a derivation with $Q:=\src(\xi)$.
			In case $\ell(\chi)=b!$ and $Q\ar{b?}$, let $\varsigma$ be a derivation of a transition $Q\ar{b?}Q'$.
			Then there exists a derivation $\chi|\varsigma$, with
			$\chi|\varsigma \conc P|\xi$, which is a representative of
			the abstract transition $\nu|\_$.
			In all other cases there exists a derivation $\chi|Q$, with
			$\chi|Q \smile P|\xi$, which is a
			representative of the abstract transition $\nu|\_$.
			Either way $P|\xi\models\en(\nu|\_)$.
		\end{itemize}
	\item Suppose $u=\zeta\models\en(\nu)$. Then $\chi\conc\zeta$ for a representative $\chi$ of $\nu$.
		\begin{itemize}
			\item Let $v=Q\in\T_{\rm\ABC}$.
			In case $\ell(\chi)=b!$ and $Q\ar{b?}$, let $\varsigma$ be a derivation of a transition $Q\ar{b?}Q'$.
			Then there exists a derivation $\chi|\varsigma$, with $\chi|\varsigma \conc \zeta|Q$, which is a representative of
			the abstract transition $\nu|\_$.
			In all other cases there exists a derivation $\chi|Q$, with $\chi|Q \conc \zeta|Q$, which is a
			representative of the abstract transition $\nu|\_$.
			Either way $\zeta|Q\models\en(\nu|\_)$.
			
			\item Now let $v=\xi$ be a derivation with $Q:=\src(\xi)$.
			In case $\ell(\chi)=b!$ and $Q\ar{b?}$, let $\varsigma$ be a derivation of a transition $Q\ar{b?}Q'$.
			Then there exists a derivation $\chi|\varsigma$, with
			$\chi|\varsigma \conc \zeta|\xi$, which is a representative of
			the abstract transition $\nu|\_$.
			In all other cases there exists a derivation $\chi|Q$, with
			$\chi|Q \conc \zeta|\xi$, which is a
			representative of the abstract transition $\nu|\_$.
			Either way $\zeta|\xi\models\en(\nu|\_)$.
                        \qed
		\end{itemize}
	\end{itemize}
\end{proof}

\begin{lemma}\label{lem:compose parallel 2}
Let $u_i\mathbin{\models}\en(\nu_i)$ for $i\mathord=1,2$ with $\ell(\nu_1)=\overline{\ell(\nu_2)}\in
\HC$. Then $u_1|u_2\mathbin{\models}\en(\nu_1|\nu_2)$, provided $u_1|u_2$ is a state of\/ $\mathcal{U}$.
\end{lemma}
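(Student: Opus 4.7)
The plan is to prove the lemma by case distinction on whether each $u_i$ is a process or a derivation, paralleling the proof of \Lem{compose parallel}. The crucial observation is that since $\ell(\nu_1)$ and $\ell(\nu_2)$ are complementary handshake names in $\HC$, any representative $\chi_1$ of $\nu_1$ and $\chi_2$ of $\nu_2$ combine via the rule \myref{Comm} into a valid derivation $\chi_1|\chi_2$; moreover the last clause of \Def{equiv} ensures $[\chi_1|\chi_2]_\equiv = \nu_1|\nu_2$, so $\chi_1|\chi_2$ is a representative of $\nu_1|\nu_2$. In each of the four cases I will produce such $\chi_i$ with appropriate sources or concurrency relations to $u_1$ and $u_2$, and then verify that $\chi_1|\chi_2$ witnesses $\en(\nu_1|\nu_2)$ at $u_1|u_2$.

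I would then dispatch the four subcases as follows. If $u_1 = P_1$ and $u_2 = P_2$ are both processes, pick representatives with $\src(\chi_i) = P_i$; then $\chi_1|\chi_2$ is a derivation with source $P_1|P_2$, so $P_1|P_2 \models \en(\nu_1|\nu_2)$. If $u_1 = P_1$ and $u_2 = \zeta_2$, pick representatives with $\src(\chi_1) = P_1$ and $\chi_2 \conc \zeta_2$; \Obs{source} yields $\src(\chi_2) = \src(\zeta_2)$, making $\chi_1|\chi_2$ well-formed, and the fourth clause of \Def{concurrency} (instantiated with $\xi = \chi_1$ and $P = P_1$) gives $\chi_1|\chi_2 \conc P_1|\zeta_2$. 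The case $u_1 = \zeta_1$, $u_2 = P_2$ is symmetric. Finally, if both $u_i = \zeta_i$ are derivations, pick $\chi_i$ with $\chi_i \conc \zeta_i$; the sixth clause of \Def{concurrency} then yields $\chi_1|\chi_2 \conc \zeta_1|\zeta_2$, and $\zeta_1|\zeta_2$ is a legitimate state of $\mathcal{U}$ by the hypothesis of the lemma.

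The main point to verify in each case is that the invoked clause of \Def{concurrency} is indeed applicable and all its side conditions hold; this is less delicate than in \Lem{compose parallel}, because labels in $\HC$ admit no broadcast-discard alternative, so I will not need the auxiliary representatives of the form $\chi|\varsigma$ with $\varsigma$ a receive derivation that complicated the earlier argument. The existence of $\chi_1|\chi_2$ as a bona fide derivation is immediate from label complementarity (hence \myref{Comm} applies), so once the representatives are in hand the argument reduces to a direct inspection of the concurrency clauses.
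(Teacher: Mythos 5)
Your proof is correct and follows essentially the same route as the paper's: the same four-way case distinction on whether each $u_i$ is a process or a derivation, with the composite derivation $\chi_1|\chi_2$ (formed via \myref{Comm} and recognised as a representative of $\nu_1|\nu_2$ through the last clause of \Def{equiv}) serving as the witness in every case. The paper's proof is merely terser, leaving the specific clauses of \Def{concurrency} uncited, whereas you correctly identify the fourth and sixth clauses as the ones doing the work in the mixed and derivation--derivation cases.
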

\begin{proof}
We make case distinctions based on whether $u_1$ and $u_2$ are processes or derivations.

Suppose $u_i=P_i\models\en(\nu_i)$ for $i\mathord=1,2$. Then $P_i\mathbin=\src(\chi_i)$ for
representatives $\chi_i$ of $\nu_i$.
Now $\src(\chi_1|\chi_2)\mathbin= P_1|P_2$ and $\chi_1|\chi_2$ is a representative of
$\nu_1|\nu_2$. So $P_1|P_2\models\en(\nu_1|\nu_2)$.

Suppose $u_i\mathbin=\zeta_i\mathbin{\models}\en(\nu_i)$ for $i\mathord=1,2$. Then $\chi_i\mathbin{\conc}\zeta_i$ for
representatives $\chi_i$ of $\nu_i$.\\
So $\chi_1|\chi_2 \conc\zeta_1|\zeta_2$. Moreover, $\chi_1|\chi_2$ is a representative of $\nu_1|\nu_2$,
and thus $\zeta_1|\zeta_2 \models\en(\nu_1|\nu_2)$.

Suppose $P_1\models\en(\nu_1)$ and $\zeta_2\models\en(\nu_2)$. Then $P_1\mathbin=\src(\chi_1)$
and $\chi_2\conc\zeta_2$ for representatives $\chi_i$ of $\nu_i$.
Now $\chi_1|\chi_2 \conc P_1|\zeta_2$ and $\chi_1|\chi_2$ is a representative of $\nu_1|\nu_2$.
So $P_1|\zeta_2 \models\en(\nu_1|\nu_2)$.

The remaining case follows by symmetry.
\qed\end{proof}

\begin{lemma}\label{lem:compose restriction}
If $u\models\en(\nu)$, $c\mathbin\in\HC$ and $c\mathbin{\neq} \ell(\nu)\mathbin{\neq} \bar c$
then $u\backslash c\models\en(\nu\backslash c)$.
\end{lemma}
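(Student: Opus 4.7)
The plan is to mirror the case-analytic style used for \Lem{compose parallel} and \Lem{compose parallel 2}: split on whether $u$ is an \ABC expression $P$ or a derivation $\zeta$, unpack the definition of $u\models\en(\nu)$ to obtain a witnessing representative $\chi$ of $\nu$, and then produce the restricted derivation $\chi\backslash c$ as a witness for $u\backslash c\models\en(\nu\backslash c)$. Throughout I shall implicitly assume that $u\backslash c$ is a state of $\mathcal{U}$, in the same spirit as the $u_1|u_2$-caveat in \Lem{compose parallel 2}; this is automatic when $u$ is an \ABC expression and amounts to the side-condition $c\neq\ell(\zeta)\neq\bar c$ when $u=\zeta$.

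The key observation, done up front, is that the restricted derivation $\chi\backslash c$ exists and still represents $\nu\backslash c$. Indeed, any representative $\chi$ of $\nu$ has $\ell(\chi)=\ell(\nu)$, so the hypothesis $c\neq\ell(\nu)\neq\bar c$ is exactly the side-condition of rule \myref{Res}, ensuring that $\chi\backslash c$ is a well-formed derivation with $\src(\chi\backslash c)=\src(\chi)\backslash c$. The fact that $\chi\backslash c$ is a representative of $\nu\backslash c$ follows from the fourth clause of \Def{equiv}: $\chi\equiv\chi'$ implies $\chi\backslash c\equiv\chi'\backslash c$, so the abstract transition $[\chi\backslash c]_\equiv$ is well-defined and coincides with $\nu\backslash c$.

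The two cases are then immediate. If $u=P$, then $P=\src(\chi)$, hence $P\backslash c=\src(\chi\backslash c)$, witnessing $P\backslash c\models\en(\nu\backslash c)$. If $u=\zeta$ is a derivation, then $\chi\conc\zeta$; the last clause of \Def{concurrency} (the one for $\backslash c$) yields $\chi\backslash c\conc\zeta\backslash c$, so $\zeta\backslash c\models\en(\nu\backslash c)$.

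I do not expect any real obstacle here; the lemma is pure bookkeeping, and all the substantive work has already been done in \Def{equiv} and \Def{concurrency}, where the restriction operator was explicitly built to be compatible with both $\equiv$ and $\conc$. The only point requiring a modicum of care is the side-condition for \myref{Res}, which is precisely the hypothesis $c\neq\ell(\nu)\neq\bar c$ transferred from $\nu$ to its representative via $\ell(\chi)=\ell(\nu)$.
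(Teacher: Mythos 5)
Your proof is correct and follows essentially the same route as the paper's: the same case split on whether $u$ is a process or a derivation, with $\chi\backslash c$ serving as the witnessing representative of $\nu\backslash c$ via $\src(\chi\backslash c)=\src(\chi)\backslash c$ in the first case and $\chi\backslash c\conc\zeta\backslash c$ in the second. The extra remarks on the \myref{Res} side-condition and on $u\backslash c$ being a state of $\mathcal{U}$ are sound bookkeeping that the paper leaves implicit.
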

\begin{proof}
We make a case distinction based on whether $u$ is processes $P$ or a derivations $\zeta$.

Suppose $P\models\en(\nu)$. Then $P\mathbin=\src(\chi)$ for a representative $\chi$ of $\nu$.
Now $\src(\chi\backslash c)\mathbin= P\backslash c$ and $\chi\backslash c$ is a representative of
$\nu\backslash c$. So $P\backslash c\models\en(\nu\backslash c)$.

Suppose $\zeta\models\en(\nu)$. Then $\chi\conc\zeta$ for a representative $\chi$ of $\nu$.
So $\chi\backslash c \conc\zeta\backslash c$. Moreover, $\chi\backslash c$ is a representative of
$\nu\backslash c$, and thus $\zeta\backslash c\models\en(\nu\backslash c)$.
\qed\end{proof}

\begin{lemma}\label{lem:compose relabelling}
If $u\models\en(\nu)$ then $u[f]\models\en(\nu[f])$.
\end{lemma}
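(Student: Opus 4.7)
The plan is to mirror the structure of the preceding three composition lemmas (\Lem{compose parallel}, \Lem{compose parallel 2} and \Lem{compose restriction}) and proceed by a case distinction on whether the state $u$ of $\mathcal{U}$ is an \ABC process or a derivation. The relabelling operator is unary, so there is no interaction with a second state to worry about, which makes this case the simplest of the four composition lemmas.

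In the first case, suppose $u = P \in \T_{\rm\ABC}$ and $P \models \en(\nu)$. By the definition of $\en$ on process states, there is a representative $\chi$ of $\nu$ with $\src(\chi) = P$. The operational rule \myref{Rel} guarantees that $\chi[f]$ is a well-formed derivation with $\src(\chi[f]) = P[f]$, and by the fourth clause of \Def{equiv} ($\chi \equiv \zeta$ implies $\chi[f] \equiv \zeta[f]$) the derivation $\chi[f]$ is a representative of the abstract transition $\nu[f]$. Hence $P[f] \models \en(\nu[f])$.

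In the second case, suppose $u = \zeta$ is a derivation with $\zeta \models \en(\nu)$. Then there is a representative $\chi$ of $\nu$ with $\chi \conc \zeta$. By the last clause of \Def{concurrency}, $\chi \conc \zeta$ implies $\chi[f] \conc \zeta[f]$. As in the previous case, $\chi[f]$ is a representative of $\nu[f]$, so by the definition of $\en$ on derivation states we conclude $\zeta[f] \models \en(\nu[f])$.

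There is no real obstacle here: both the equivalence relation $\equiv$ of \Def{equiv} and the concurrency relation $\conc$ of \Def{concurrency} were set up to be preserved by $[f]$, and the operational rule \myref{Rel} lifts representatives to representatives in the obvious way. The only thing to double-check is that the side-condition $\ell(\chi) \notin \B?$ required for $\chi$ to be a representative of an abstract transition is preserved under relabelling: since $f(b\sharp) = f(b)\sharp$, a label in $\B!\cup\HC\cup\{\tau\}$ is mapped to a label in the same family, so $\ell(\chi[f]) = f(\ell(\chi)) \notin \B?$ whenever $\ell(\chi)\notin\B?$, and the construction is legitimate.
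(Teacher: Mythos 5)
Your proof is correct and follows essentially the same route as the paper, which simply notes that the argument is analogous to that of \Lem{compose restriction}: a case split on whether $u$ is a process or a derivation, using that $[f]$ preserves sources of representatives, the equivalence $\equiv$, and the concurrency relation $\conc$. The extra check that $\ell(\chi[f])\notin\B?$ is a harmless but sensible addition.
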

\begin{proof}
Similar to the proof of Lemma~\ref{lem:compose restriction}.
\qed\end{proof}

\begin{proposition}\label{prop:target}
If an abstract transition $\nu$ is enabled during a derivation $\zeta$
then $\nu$ is also enabled in $\src(\zeta)$ as well as $\target(\zeta)$.
\end{proposition}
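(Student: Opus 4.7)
The source half is essentially immediate. Unfolding $\zeta\models\en(\nu)$ produces a representative $\chi$ of $\nu$ with $\chi\conc\zeta$; by \Obs{source}, $\src(\chi)=\src(\zeta)$, so the very same $\chi$ witnesses $\src(\zeta)\models\en(\nu)$ via the definition of enabledness at a process state. Hence all work goes into the target.

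For the target half, the plan is to prove, by induction on the derivation of $\chi\conc\zeta$ per \Def{concurrency}, the following stronger statement: \emph{whenever $\chi\conc\zeta$, there is a derivation $\chi'\equiv\chi$ with $\src(\chi')=\target(\zeta)$.} Applied to the $\chi$ produced above, such a $\chi'$ is itself a representative of $\nu$ enabled in $\target(\zeta)$.

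The inductive cases follow the seven clauses of \Def{concurrency}. For the pure parallel clause, with $\chi=\chi_0|Q$, $\zeta=P|\zeta_0$, $\src(\chi_0)=P$, $\src(\zeta_0)=Q$, I would take $\chi'=\chi_0|\target(\zeta_0)$; the first closure rule of \Def{equiv} supplies $\chi'\equiv\chi$, and $\src(\chi')=\target(\zeta)$. For the broadcast clause, where $\chi=\chi_0|\varsigma$ with $\ell(\varsigma)\in\B?$, one first observes that, since $\ell(\nu)\notin\B?$, the label compatibility built into \myref{Bro-c} forces $\ell(\chi_0)\in\B!$; this is exactly the premise that the second closure rule of \Def{equiv} needs in order to replace $\varsigma$ by any valid right operand. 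Splitting on whether $\target(\zeta_0)\ar{b?}$, the witness is either $\chi_0|\varsigma''$ for some freshly chosen $b?$-derivation from $\target(\zeta_0)$ (derived by \myref{Bro-c}) or $\chi_0|\target(\zeta_0)$ (derived by \myref{Bro-l}); either way $\chi'\equiv\chi$ follows by chaining the first two clauses of \Def{equiv}. The congruence clauses (sum, one-sided parallel, restriction, relabelling, recursion) invoke the induction hypothesis on the inner premise and lift through the relevant constructor using a matching closure rule of \Def{equiv}, with the same broadcast splitting whenever a parallel context is involved. The two-sided parallel clause applies the induction hypothesis to both premises and combines the two witnesses.

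The main obstacle is not conceptual but bookkeeping in the broadcast-related cases: at each lifting step one must verify that the constructed $\chi'$ is actually a legal derivation—i.e., that the side conditions of \myref{Par-l}, \myref{Bro-l} or \myref{Bro-c} are satisfied at the new source—and then pick the representative of $[\chi]_\equiv$ that fits this choice. Once one notices that the second clause of \Def{concurrency} and the second clause of \Def{equiv} have been designed to mirror each other, this threading goes through mechanically, and the induction closes.
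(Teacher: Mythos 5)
Your argument is correct and takes essentially the same route as the paper's: the source half via \Obs{source}, and the target half by induction over the structure of $\chi$ (equivalently, over the derivation of $\chi\conc\zeta$ from \Def{concurrency}), reconstructing a representative of $[\chi]_\equiv$ whose source is $\target(\zeta)$. The only organisational difference is that the paper factors the reconstruction step into Lemmas~\ref{lem:compose parallel}--\ref{lem:compose relabelling} (which already contain the \myref{Bro-l} versus \myref{Bro-c} case split on whether the new right-hand operand admits a $b?$-transition), whereas you inline that bookkeeping at the derivation level.
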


\begin{proof}
If $\nu$ is enabled during $\zeta$ then
there is a representative $\chi$ of $\nu$ such that $\chi\conc\zeta$.
By \Obs{source} $\src(\chi)=\src(\zeta)$, so $\nu$ is also enabled in $\src(\zeta)$.

For the other statement, we apply structural induction on $\chi$.

Let $\chi=\shar{\alpha}\chi'$. By Definition~\ref{df:concurrency} there is 
no derivation $\zeta$ with $\chi\conc \zeta$, which is a contradiction to our
assumptions.

Let $\chi=A{:}\chi'$. Since $\chi\conc \zeta$, $\zeta$ has the form $A{:}\zeta'$ with $\chi'\conc\zeta'$.
As $\nu=[\chi]_\equiv=[\chi']_\equiv$, $\nu$ is also enabled during $\zeta'$, and 
by induction $\nu$ is enabled in $\target(\zeta')=\target(\zeta)$.

Let $\chi=\chi'{+}P$. Since $\chi\conc \zeta$, $\zeta$ has the form $\zeta'{+}P$ with $\chi'\conc\zeta'$.
As $\nu=[\chi]_\equiv=[\chi']_\equiv$, $\nu$ is also enabled during $\zeta'$, and 
by induction $\nu$ is enabled in $\target(\zeta')=\target(\zeta)$.

The case $\chi=P{+}\chi'$  follows by symmetry.

Let $\chi=\chi'\backslash c$.  Since $\chi\conc \zeta$, $\zeta$ has the form $\zeta'\backslash c$
with $\chi'\conc\zeta'$.
So $\nu':=[\chi']_\equiv$ is enabled during $\zeta'$, and
by induction $\nu'$ is enabled in $\target(\zeta')$. Moreover, $\ell(\nu')=\ell(\chi')\neq c,\bar c$.
Consequently, using \Lem{compose restriction},
$\nu=\nu'\backslash c$ is enabled in $\target(\zeta')\backslash c = \target(\zeta)$.

Let $\chi=\chi'[f]$.  Since $\chi\conc \zeta$, $\zeta$ has the form $\zeta'[f]$ with $\chi'\conc\zeta'$.
So $\nu':=[\chi']_\equiv$ is enabled during $\zeta'$, and
by induction $\nu'$ is enabled in $\target(\zeta')$.
Consequently, using \Lem{compose relabelling}, $\nu=\nu'[f]$ is enabled in $\target(\zeta')[f] = \target(\zeta)$.

Let $\chi=\chi_1|P$. Since $\chi\conc \zeta$, \Def{concurrency} offers three possibilities for $\zeta$:
\begin{itemize}
\vspace{-1ex}
\item
Suppose that $\zeta$ has the form $Q|\zeta_2$ with $\src(\chi_1)=Q$ and
$\src(\zeta_2)=P$. Then $\nu_1:=[\chi_1]_\equiv$ is enabled in $Q$.
Hence, by \Lem{compose parallel}, $\nu=([\chi_1]_\equiv)|\_$ is enabled in
$P|\target(\zeta_2)=\target(\zeta)$.
\item
Suppose that $\zeta$ has the form $\zeta_1|P$ with $\chi_1\conc\zeta_1$.
Then $\nu_1:=[\chi_1]_\equiv$ is enabled during $\zeta_1$, and
by induction $\nu_1$ is enabled in $\target(\zeta_1)$.
Consequently, using \Lem{compose parallel}, $\nu=\nu_1|\_$ is enabled in $\target(\zeta_1)|P = \target(\zeta)$.
\item
Suppose that $\zeta$ has the form $\zeta_1|\zeta_2$ with
$\chi_1\conc\zeta_1$ and $P=\src(\zeta_2)$.
Then $\nu_1:=[\chi_1]_\equiv$ is enabled during $\zeta_1$, and
by induction $\nu_1$ is enabled in $\target(\zeta_1)$.
Consequently, using \Lem{compose parallel}, $\nu=\nu_1|\_$ is enabled in $\target(\zeta_1)|\target(\zeta_1) = \target(\zeta)$.
\vspace{-1ex}
\end{itemize}
The case $\chi= P|\chi_2$ follows by symmetry.

Let $\chi=\chi_1|\chi_2$ with $\ell(\chi)=\tau$. Then $\ell(\chi_1)=\overline{\ell(\chi_2)}\in\HC$.
Since $\chi\conc \zeta$, \Def{concurrency} offers three possibilities for $\zeta$:
\vspace{-1ex}
\begin{itemize}
\item
Suppose $\zeta$ has the form $\zeta_1|P$ with $\chi_1\conc\zeta_1$ and $P=\src(\chi_2)$.
Then $\nu_1:=[\chi_1]_\equiv$ is enabled during $\zeta_1$, and
by induction $\nu_1$ is enabled in $\target(\zeta_1)$.
Consequently, using \Lem{compose parallel}, $\nu=\nu_1|\_$ is enabled in $\target(\zeta_1)|P = \target(\zeta)$.
\item
The case that $\zeta$ has the form $P|\zeta_2$ with $\chi_2\conc\zeta_2$ and $P=\src(\chi_1)$
follows by symmetry.
\item
Suppose $\zeta$ has the form $\zeta_1|\zeta_2$ with $\chi_i\conc\zeta_i$ for $i=1,2$.
Then $\nu_i:=[\chi_i]_\equiv$ is enabled during $\zeta_i$, and
by induction $\nu_i$ is enabled in $\target(\zeta_i)$.
Consequently, using \Lem{compose parallel 2},
$\nu=\nu_1|\nu_2$ is enabled in $\target(\zeta_1)|\target(\zeta_2) = \target(\zeta)$.
\vspace{-1ex}
\end{itemize}
Let $\chi=\chi_1|\chi_2$ with $\ell(\chi)\neq\tau$.  Then $\ell(\chi)\in\B!$, since the case
$\ell(\chi) = \ell(\nu)\in\B?$ cannot occur. So $\ell(\chi_1)=b!$ and $\ell(\chi_2)=b?$ for some
$b\in\B$, or vice versa. W.l.o.g.\ we assume the first of these cases.
Since $\chi\conc \zeta$, \Def{concurrency} offers five possibilities for $\zeta$:
\begin{itemize}
\vspace{-1ex}
\item
Suppose $\zeta$ has the form $P|\zeta_2$ with $P=\src(\chi_1)$ and $\src(\chi_2)=\src(\zeta_2)$.
Then $\nu_1:=[\chi_1]_\equiv$ is enabled in $P$.
Hence, by \Lem{compose parallel}, $\nu=([\chi_1]_\equiv)|\_$ is enabled in
$P|\target(\zeta_2)=\target(\zeta)$.
\item
The possibility that $\zeta \mathbin= P|\zeta_2$ with $\chi_2\mathbin{\conc}\zeta_2$ and $P\mathbin=\src(\chi_1)$
is a special case of the last~one.
\item
Suppose $\zeta$ has the form $\zeta_1|P$ with $\chi_1\conc\zeta_1$ and $P=\src(\chi_2)$.
Then $\nu_1:=[\chi_1]_\equiv$ is enabled during $\zeta_1$, and
by induction $\nu_1$ is enabled in $\target(\zeta_1)$.
Consequently, using \Lem{compose parallel}, $\nu=\nu_1|\_$ is enabled in $\target(\zeta_1)|P = \target(\zeta)$.

\item
Suppose $\zeta$ has the form $\zeta_1|\zeta_2$ with $\chi_1\conc\zeta_1$ and $\src(\chi_2)=\src(\zeta_2)$.
Then $\nu_1:=[\chi_1]_\equiv$ is enabled during $\zeta_1$, and
by induction $\nu_1$ is enabled in $\target(\zeta_1)$.
Consequently, using \Lem{compose parallel}, $\nu=\nu_1|\_$ is enabled in
$\target(\zeta_1)|\target(\zeta_2) = \target(\zeta)$.

\item
The possibility $\zeta = \zeta_1|\zeta_2$ with $\chi_i\conc\zeta_i$ for $i=1,2$ is a special case of the previous one.
\qed
\end{itemize}
\end{proof}

\begin{lemma}\label{lm:conc_and_equiv}
For derivations $\chi$ and $\zeta$, $\chi\conc\zeta$ implies $\chi\not\equiv\zeta$.
\end{lemma}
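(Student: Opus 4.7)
I would prove this by induction on the derivation of $\chi\conc\zeta$ using the clauses of \Def{concurrency}. In each case I aim to rule out $\chi\equiv\zeta$. The two preceding observations do most of the work: \Obs{equiv_parallel} peels off the leading derivation from an equivalence between two parallel compositions (whenever that leading derivation has a non-receive label), and \Obs{equiv_rename_restrict} does the same for restriction and relabelling. For the congruence clauses, the induction hypothesis $\chi'\not\equiv\zeta'$ on the smaller derivations provides the final contradiction.

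The two ``ground'' clauses $\chi|Q\conc P|\zeta$ and $\chi|\varsigma\conc P|\zeta$ (the latter with $\ell(\varsigma)\in\B?$, which via the Bro-c typing forces $\ell(\chi)\in\B!$) have a derivation-over-process shape on one side and a process-over-derivation shape on the other. When $\ell(\chi)\notin\B?$, the second half of \Obs{equiv_parallel} directly rules out $\chi|Q\equiv P|\zeta$ and $\chi|\varsigma\equiv P|\zeta$, since no derivation is related to a process. When $\ell(\chi)\in\B?$ in the first clause, the outer label of $\chi|Q$ lies in $\B?$, placing that derivation outside the domain of $\equiv$ and making the non-equivalence vacuous. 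The symmetric variants $P|\zeta\conc\chi|Q$ and $\varsigma|\chi\conc\zeta|P$ follow from symmetry of $\equiv$.

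For the congruence clauses, the restriction and relabelling rules are immediate from \Obs{equiv_rename_restrict}: any hypothesised $\chi\backslash c\equiv\zeta\backslash c$ or $\chi[f]\equiv\zeta[f]$ yields $\chi\equiv\zeta$, contradicting the induction hypothesis. The choice and recursion rules $\chi+P\conc\zeta+P$, $P+\chi\conc P+\zeta$ and $A{:}\chi\conc A{:}\zeta$ reduce to the hypothesis via the $\equiv$-identities $\chi+P\equiv\chi\equiv P+\chi$ and $A{:}\chi\equiv\chi$ together with transitivity of $\equiv$. The parallel-composition congruence clauses~3--6 of \Def{concurrency} are handled by the same dichotomy as the ground cases: whenever $\ell(\chi)\in\B?$ the outer derivation has a $\B?$-label and the non-equivalence is vacuous, while otherwise \Obs{equiv_parallel} applied to a hypothesised $\chi|\cdots\equiv\zeta|\cdots$ extracts $\chi\equiv\zeta$, against the induction hypothesis. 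The main obstacle is bookkeeping: \Def{concurrency} packs many shapes (plain state, added state, broadcast-synchronising receive) into each clause and has several symmetric variants, so one must verify systematically for each configuration to which side \Obs{equiv_parallel} is applied, and that the $\B?$-case is genuinely vacuous; once this case analysis is complete the argument follows uniformly.
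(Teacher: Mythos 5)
Your proof takes essentially the same route as the paper's: the paper phrases it as a structural induction on $\chi$ with a case split on which clause of \Def{concurrency} produced $\chi\conc\zeta$, which yields exactly your case analysis, and it likewise discharges restriction/relabelling via \Obs{equiv_rename_restrict}, choice/recursion via the identities $\chi{+}P\equiv\chi\equiv P{+}\chi$ and $A{:}\chi\equiv\chi$, and the parallel cases via \Obs{equiv_parallel}, treating $\B?$-labelled derivations as outside the domain of~$\equiv$.

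One detail of your uniform dichotomy for the parallel clauses is off. When the component driving the concurrency has a label in $\B?$, the outer derivation need \emph{not} have a $\B?$-label, because broadcast synchronisation promotes a receive composed with a broadcast to a broadcast. For instance, clause~4 can yield $\chi|\xi\conc\zeta|P$ with $\ell(\chi)\in\B?$ and $\ell(\xi)\in\B!$, so that $\ell(\chi|\xi)\in\B!$ and the claim is not vacuous; your recipe as stated does not cover this configuration. It is easily closed with the tools already in hand: apply the mirror image of \Obs{equiv_parallel} to the \emph{other} component, extracting $\xi\equiv P$, which is impossible since no derivation is equivalent to a process. This is precisely what the paper's ``w.l.o.g.\ assume $\ell(\chi_1)\notin\B?$'' in the case $\chi=\chi_1|\chi_2$ achieves: it always peels off a parallel component carrying a non-receive label. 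With that repair your argument goes through.
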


\begin{proof}
In case $\ell(\chi) \in\B?$ the statement is trivial by \Def{equiv}; so assume $\ell(\chi) \not\in\B?$.
We apply structural induction on $\chi$.

Let $\chi=\shar{\alpha}\chi'$. By Definition~\ref{df:concurrency} there is 
no derivation $\zeta$ with $\chi\conc \zeta$, which is a contradiction to the antecedent.

Let $\chi=A{:}\chi'$. Since $\chi\conc \zeta$, $\zeta$ has the form $A{:}\zeta'$ with $\chi'\conc\zeta'$.
Assume $A{:}\chi'\equiv A{:}\zeta'$. Then, by Definition~\ref{df:equiv}, $\chi'\equiv
A{:}\chi'\equiv A{:}\zeta' \equiv \zeta'$, a contradiction to the induction hypothesis.

Let $\chi=\chi'{+}P$. Since $\chi\conc \zeta$, $\zeta$ has the form $\zeta'{+}P$ with $\chi'\conc\zeta'$.
Assume $\chi'{+}P\equiv \zeta'{+}P$. Then, by Definition~\ref{df:equiv},
$\chi'\equiv \chi'{+}P\equiv \zeta'{+}P \equiv \zeta'$, a contradiction 
to the induction hypothesis.

The case $\chi=P{+}\chi'$  follows by symmetry.

Let $\chi=\chi'\backslash c$.  Since $\chi\conc \zeta$, $\zeta$ has the form $\zeta'\backslash c$
with $\chi'\conc\zeta'$.
Assume $\chi'\backslash c\equiv \zeta'\backslash c$. Then, by \Obs{equiv_rename_restrict}, $\chi'\equiv\zeta'$, a contradiction 
to the induction hypothesis.

Let $\chi=\chi'[f]$.  Since $\chi\conc \zeta$, $\zeta$ has the form $\zeta'[f]$ with $\chi'\conc\zeta'$.
Assume $\chi'[f]\equiv \zeta'[f]$. Then, by \Obs{equiv_rename_restrict}, $\chi'\equiv\zeta'$, a contradiction 
to the induction hypothesis.

Let $\chi=\chi_1|Q$. Note that $\ell(\chi_1)=\ell(\chi)\not\in\B?$.
Since $\chi\conc \zeta$, \Def{concurrency} offers three possibilities for $\zeta$:
\begin{itemize}
\vspace{-1ex}
\item
Suppose that $\zeta$ has the form $P|\zeta_2$.
 By \Obs{equiv_parallel}, $\chi_1|Q\not\equiv P|\zeta_2$.

\item
Suppose that $\zeta$ has the form
$\zeta_1|u$ with $\chi_1\conc\zeta_1$ (combining the cases $\zeta_1|Q$ and $\zeta_1|\zeta_2$).
Assume $\chi_1|Q\equiv \zeta_1|u$. Then, by \Obs{equiv_parallel}, $\chi_1\equiv\zeta_1$, a contradiction 
to the induction hypothesis.
\vspace{-1ex}
\end{itemize}
The case $\chi= P|\chi_2$ follows by symmetry.

Let $\chi=\chi_1|\chi_2$ .
As $\ell(\chi)\not\in\B?$ either $\ell(\chi_1)\not\in\B?$ or $\ell(\chi_2)\not\in\B?$.
W.l.o.g.\ we assume the first.
Since $\chi\conc \zeta$, \Def{concurrency} offers seven possibilities for $\zeta$, 
which can be summarised by the following 4 cases.
\begin{itemize}
\vspace{-1ex}
\item 
Suppose $\zeta$ has the form $P|\zeta_2$. Then by \Obs{equiv_parallel} $\chi_1|\chi_2\not\equiv P|\zeta_2$.

\item The case where $\zeta$ has the form $\zeta_1|P$ with $\src(\chi_2)=P$, $\src(\chi_1)=\src(\zeta_1)$ and $\ell(\chi_1)\in\B?$ cannot occur since $\ell(\chi_1)\not\in\B?$.

\item The case where $\zeta$ has the form $\zeta_1|\zeta_2$ with $\chi_2\conc\zeta_2$, $\src(\chi_1)=\src(\zeta_1)$ and $\ell(\chi_1)\in\B?$
cannot occur either since $\ell(\chi_1)\not\in\B?$.

\item
Finally, suppose $\zeta$ has the form $\zeta_1|u$ with $\chi_1\conc\zeta_1$.
Assume $\chi_1|\chi_2\equiv \zeta_1|u$. Then, by \Obs{equiv_parallel}, $\chi_1\equiv\zeta_1$, a contradiction 
to the induction hypothesis.
\qed
\end{itemize}
\end{proof}

\section{Proof of \Thm{just path}}\label{sec:proof}

\Thm{just path} makes a connection between the $Y\!$-just paths in\/ $\mathcal{T}$ (or equivalently\/
$\mathcal{S}$) and a weak fairness property for paths in\/ $\mathcal{U}$.
To establish its ``$\Leftarrow$''-direction, we introduce two intermediate concepts:
the $Y\!$-just paths in\/ $\mathcal{U}$, for $Y\subseteq\HC$, and the $\nu$-enabled paths in\/
$\mathcal{U}$, for abstract transitions $\nu$. For the ``$\Rightarrow$''-direction we introduce one
intermediate concept: the $\nu$-enabled paths in\/ $\mathcal{S}$.

\subsection{The ``$\Leftarrow$''-direction}

\begin{observation}\label{obs:decompose parallel}\rm
For a derivation $\chi$ with $\src(\chi)= P_1|P_2$ we have either that
\begin{itemize}
\vspace{-1ex}
\item $\chi$ has the form $\chi_1|P_2$ with $\src(\chi_1)=P_1$ and $\target(\chi)=\target(\chi_1)|P_2$, or
\item $\chi$ has the form $\chi_1|\chi_2$ with $\src(\chi_i)\mathbin=P_i$ for $i\mathord=1,2$\\ and $\target(\chi)\mathbin=\target(\chi_1)|\target(\chi_2)$,~or
\item $\chi$ has the form $P_1|\chi_2$ with $\src(\chi_2)=P_2$ and $\target(\chi)=P_1|\target(\chi_1)$.
\vspace{-1ex}
\end{itemize}
\end{observation}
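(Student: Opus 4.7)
My plan is to prove this observation by direct case analysis on the last rule applied in the derivation $\chi$. Since the source of $\chi$ is a parallel composition $P_1|P_2$, I first inspect the operational rules of \Tab{ABC} and note that only six rules have a conclusion whose source has the form $E|F$, namely \myref{Par-l}, \myref{Par-r}, \myref{Comm}, \myref{Bro-l}, \myref{Bro-c} and \myref{Bro-r}. The other rules \myref{Act}, \myref{Sum-l}, \myref{Sum-r}, \myref{Rel}, \myref{Res} and \myref{Rec} produce conclusions whose source is, respectively, a prefixing, a sum, a relabelling, a restriction or an agent identifier, none of which can be syntactically identified with $P_1|P_2$. Hence the final step of the derivation $\chi$ must be one of the six parallel-composition rules, and I just read off the syntactic form of the associated named derivation from \App{derivations}.

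For the three ``left'' rules \myref{Par-l} and \myref{Bro-l}, the named derivation is $\chi_1|P_2$ where $\chi_1$ is the derivation of the (positive) premise, whose source must be $P_1$; the target, by the conclusion of the rule, is $\target(\chi_1)|P_2$. This gives the first bullet. Symmetrically, the two ``right'' rules \myref{Par-r} and \myref{Bro-r} yield a derivation of the form $P_1|\chi_2$ with $\src(\chi_2)=P_2$ and $\target(\chi)=P_1|\target(\chi_2)$, which is the third bullet (modulo the typo in the observation where $\chi_1$ should read $\chi_2$). For the ``synchronising'' rules \myref{Comm} and \myref{Bro-c}, both premises are positive and the named derivation is $\chi_1|\chi_2$ with $\src(\chi_i)=P_i$ and $\target(\chi)=\target(\chi_1)|\target(\chi_2)$, which is the second bullet.

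There is really no obstacle here: the observation is a bookkeeping statement about the shape of derivations, which follows immediately once one has fixed the naming convention for derivations introduced in \App{derivations}. The only mild subtlety is that \myref{Bro-l} and \myref{Bro-r} have a negative premise $\F \nar{b?}$ and $\E \nar{b?}$ respectively; under the convention of the original operational semantics adopted at the end of \SSect{progress}, this negative premise contributes no sub-derivation, so the named derivation is still $\chi_1|P_2$ (respectively $P_1|\chi_2$) rather than a ternary combination—had we worked with the modified semantics of \Sect{ABC}, a fourth form $\chi_1|\varsigma$ with $\ell(\varsigma)\in\{b\mathord:\}$ would need to be considered, but this is explicitly excluded by our choice.
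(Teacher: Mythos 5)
Your proof is correct and matches the justification the paper leaves implicit: the observation is immediate from the naming convention for derivations in \App{derivations} together with a case analysis on the last rule applied, and you correctly handle the negative premises of \myref{Bro-l}/\myref{Bro-r} under the original semantics (your count ``three left rules'' for the two rules \myref{Par-l}, \myref{Bro-l} is a harmless slip). You are also right that the third bullet contains a typo and should read $\target(\chi)=P_1|\target(\chi_2)$.
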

Hence all processes and derivations on a path $\pi$ starting from a state $u_1|u_2$ of\/ $\mathcal{U}$ have the form $\_|\_$.
Let $\pi_1$ be the sequence of left- and $\pi_2$ the sequence of right-components of these
processes and derivations, after (finite or infinite) subsequences of repeated elements are contracted
to single elements. Then $\pi_i$ is a path of $u_i$ ($i\mathord=1,2$) and together they constitute
the \emph{decomposition} of $\pi$, denoted $\pi\Rrightarrow\pi_1|\pi_2$. 

\begin{observation}\label{obs:decompose restriction}\rm
If $\src(\chi)\mathbin= P\backslash c$ then $\chi \mathbin= \chi'\backslash c$ with $\src(\chi')\mathbin=P$ and
$\target(\chi)\mathbin=\target(\chi') \backslash c$.%
\end{observation}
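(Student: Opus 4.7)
My plan is to establish this observation by a case analysis on the final rule applied in the derivation $\chi$---that is, the operational rule from Table~\ref{tab:ABC} whose conclusion matches the root transition of $\chi$. The key point is that the syntactic shape of the source in each rule's conclusion determines which rule could possibly have been applied last.

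First I would go through each rule of Table~\ref{tab:ABC} and record the shape of the source in its conclusion: \myref{Act} has source $\alpha.P$; \myref{Sum-l} and \myref{Sum-r} have source $P+Q$; the rules \myref{Par-l}, \myref{Par-r}, \myref{Comm}, \myref{Bro-l}, \myref{Bro-c} and \myref{Bro-r} all have source $P|Q$; \myref{Rel} has source $P[f]$; \myref{Rec} has source an agent identifier $A\in\A$; and \myref{Res} is the unique rule whose conclusion has source of the form $P\backslash c$. Hence the assumption $\src(\chi)=P\backslash c$ forces the last rule in $\chi$ to be \myref{Res}.

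It then remains to read off the structure of $\chi$ from the naming convention for derivations introduced at the beginning of Appendix~\ref{sec:derivations}: the derivation obtained by applying \myref{Res} to a premise derivation $\chi'$ is denoted $\chi'\backslash c$. Since the premise of \myref{Res} has source $P$ whenever its conclusion has source $P\backslash c$, and its conclusion has target $P'\backslash c$ whenever the premise has target $P'$, we obtain $\chi=\chi'\backslash c$ with $\src(\chi')=P$ and $\target(\chi)=\target(\chi')\backslash c$, as claimed.

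I do not anticipate any genuine obstacle here: the statement is a purely syntactic consequence of the form of the inference rules, and requires neither induction nor semantic reasoning. The only point worth being explicit about is that we are working with the original operational semantics of \ABC (as stipulated at the end of \SSect{progress}), so that the discard rules \myref{Dis0}--\myref{Dis2} do not need to be considered---but in any case none of them has a conclusion with source of the form $P\backslash c$, so they would not affect the argument even if included.
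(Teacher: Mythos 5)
Your argument is correct and matches the paper's intent: the paper states this as an unproved observation precisely because it follows immediately from the fact that \myref{Res} is the only rule of \Tab{ABC} whose conclusion has a source of the form $P\backslash c$, together with the naming convention for derivations. Your explicit case analysis over the rules, and your remark that the discard rules are irrelevant here, add nothing contentious and close the argument cleanly.
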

Hence all processes and derivations on a path $\pi$ of a state $u\backslash c$ in\/ $\mathcal{U}$ have the form $\_\backslash c$.
Let $\pi'$ be the sequence obtained from $\pi$ by stripping off these outermost occurrences of $\backslash c$.
Then $\pi'$ is a path of $u$, called the \emph{decomposition} of $\pi$, denoted $\pi\Rrightarrow\pi'\backslash c$.
In the same way one defines the decomposition of a path $\pi$ of a state $u[f]$; notation $\pi\Rrightarrow\pi'[f]$.

\begin{observation}\label{obs:decompose projection}\rm
Let $\pi$ be a path in\/ $\mathcal{U}$. Then $\pi\Rrightarrow\pi_1|\pi_2$ implies $\widehat\pi\in\widehat\pi_1|\widehat\pi_2$.
Likewise, if $\pi\Rrightarrow\pi'\backslash c$ or $\pi\Rrightarrow\pi'[f]$ then $\widehat\pi'$ is a
decomposition of $\widehat\pi$.
\end{observation}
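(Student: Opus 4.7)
The proposal has two parts to verify—the parallel clause and the restriction/relabelling clause—and my approach is the same for both: use the shape constraint on the derivations occurring in $\pi$ provided by \Obs{decompose parallel} and \Obs{decompose restriction} to match the $\mathcal{U}$-side projection against the transition-by-transition decomposition in $\mathcal{S}$ described in \Sect{justness}. The key conceptual point is that the same syntactic decomposition of a derivation $\chi$ that is used to define the $\mathcal{U}$-projection also gives the $\mathcal{S}$-decomposition of the corresponding transition $\widehat\chi$, so that the commutation is essentially enforced by construction.

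For the parallel case I would fix a path $\pi$ in $\mathcal{U}$ starting from $u_1|u_2$ with $\pi\Rrightarrow\pi_1|\pi_2$. By \Obs{decompose parallel}, each derivation $\chi$ on $\pi$ has exactly one of the three shapes $\chi_1|P_2$, $\chi_1|\chi_2$, or $P_1|\chi_2$. This gives for each transition $\widehat\chi$ appearing in $\widehat\pi$ a canonical choice of $\mathcal{S}$-decomposition: transition plus state, pair of transitions, or state plus transition, respectively. I would then verify that collecting the left-hand syntactic components along $\pi$, turning them into a $\mathcal{U}$-path by retaining $\src(\chi_1)$ and $\target(\chi_1)$ around each $\chi_1$ and leaving $P_1$ stationary when $\chi$ has the form $P_1|\chi_2$, and then contracting repeated consecutive states, produces precisely $\pi_1$; applying $\widehat\cdot$ converts this into the left concatenation prescribed by the $\mathcal{S}$-side definition of decomposition. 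The right-hand argument is symmetric, and together they yield $\widehat\pi\in\widehat\pi_1|\widehat\pi_2$.

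The restriction and relabelling cases are strictly simpler. By \Obs{decompose restriction}, every state and derivation on a path $\pi$ from $u\backslash c$ carries an outermost $\backslash c$, so stripping it uniformly produces $\pi'$ in $\mathcal{U}$ with no contraction required; the single-component $\mathcal{S}$-decomposition of $\widehat{\chi'\backslash c} = (\src(\chi')\backslash c \ar{\ell} \target(\chi')\backslash c)$ is visibly the transition $\widehat{\chi'}$ in $\widehat{\pi'}$, so the statement is immediate. The relabelling case is identical with $[f]$ replacing $\backslash c$.

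The only real obstacle lies in the parallel case, and it is bookkeeping rather than mathematical: one must check that the contraction step in the definition of $\pi\Rrightarrow\pi_1|\pi_2$ corresponds correctly to the $\mathcal{S}$-side decomposition's silent omission of a component when a derivation contributes to only one side, and that it handles the boundary case from \Sect{justness} where $\pi$ is infinite but one of $\pi_1$, $\pi_2$ is finite. The latter arises precisely when from some point on every derivation in $\pi$ takes the one-sided shape $P_1|\chi_2$ (or symmetrically $\chi_1|P_2$), so that the opposite projection settles into a constant sequence which contracts to a finite path ending at $P_1$; matching this against the $\mathcal{S}$-definition amounts to observing that the transitions from the tail of $\pi$ all have empty left (resp.\ right) decomposition component.
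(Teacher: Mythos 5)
Your proposal is correct. The paper states this as an \emph{Observation} and offers no proof at all, treating it as immediate from the definitions; your argument---matching the syntactic shape of each derivation given by \Obs{decompose parallel} (resp.\ \Obs{decompose restriction}) against the corresponding $\mathcal{S}$-decomposition of $\widehat\chi$, and checking that contraction of repeated components commutes with $\widehat\cdot$, including the case where one projection of an infinite path is finite---is exactly the definition-unfolding the authors are implicitly relying on.
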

Although in\/ $\mathcal{S}$ the decomposition of a path from $P|Q$ need not be unique, in\/
$\mathcal{U}$ it is. Armed with these definitions of decompositions of paths in\/ $\mathcal{U}$, we
define $Y\!$-justness, for $Y\subseteq\HC$, on the paths of\/ $\mathcal{U}$ in the exact same way as
on\/ $\mathcal{S}$ (see \Def{just path}).

\begin{proposition}\label{prop:projection}
If a path $\pi$ in\/ $\mathcal{U}$ is $Y\!$-just, for $Y\mathbin\subseteq\HC\!$, then so is the path
$\widehat\pi$ in\/ $\mathcal{S}$.
\end{proposition}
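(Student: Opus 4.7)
The plan is a standard coinduction on the largest-family definition of $Y\!$-justness in $\mathcal{S}$. Define, for each $Y\subseteq\HC$, the predicate
\[
R_Y(\sigma) \;:\Longleftrightarrow\; \text{there exists a $Y\!$-just path } \pi \text{ in } \mathcal{U} \text{ with } \widehat\pi=\sigma.
\]
It suffices to show that the family $(R_Y)_{Y\subseteq\HC}$ satisfies the five closure conditions of \Def{just path} on paths in $\mathcal{S}$; then by maximality of $Y\!$-justness in $\mathcal{S}$, every $R_Y$-path is $Y\!$-just in $\mathcal{S}$, which is exactly what we need. Note that, by construction of $\widehat\cdot$, the mapping preserves length, starting state, and last state (whenever the latter is a $\mathcal{T}$-state), and commutes with taking suffixes.

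I would verify the five clauses one by one. For clause~1, if $\sigma=\widehat\pi$ is finite with $\pi$ $Y\!$-just, then $\pi$ is finite, so its last state is some $Q\in\T_{\rm\ABC}$ admitting only actions from $Y\cup\B?$; this last state equals the last state of $\sigma$. For clauses 2--4, suppose $\sigma=\widehat\pi$ starts from $P|Q$, $P\backslash c$ or $P[f]$. Then so does $\pi$. Applying the corresponding clause of \Def{just path} in $\mathcal{U}$ yields decompositions $\pi\Rrightarrow\pi_1|\pi_2$, $\pi\Rrightarrow\pi'\backslash c$ or $\pi\Rrightarrow\pi'[f]$ whose components are $X\!$-/$Z\!$-/$Y\mathord\cup\{c,\bar c\}$-/$f^{-1}(Y)$-just in $\mathcal{U}$ and satisfy the required side conditions on the index sets. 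By \Obs{decompose projection}, $\widehat\pi$ has the projected paths as decompositions in $\mathcal{S}$, and these projected paths lie in the corresponding $R_{(\cdot)}$ by definition. For clause~5, any suffix of $\widehat\pi$ is of the form $\widehat{\pi'}$ for a suffix $\pi'$ of $\pi$; since $\pi'$ is $Y\!$-just in $\mathcal{U}$ by the fifth clause applied there, $\widehat{\pi'}\in R_Y$.

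The only step that is not purely bookkeeping is checking that the decompositions transfer correctly, and this is exactly what \Obs{decompose projection} gives us: the image in $\mathcal{S}$ of a $\mathcal{U}$-decomposition is a $\mathcal{S}$-decomposition. The side conditions $Y\supseteq X\cup Z$ and $X\cap\bar Z=\emptyset$ involve only the index sets and are therefore inherited unchanged from the $\mathcal{U}$-level decomposition. No new reasoning is required about the enabledness relation or about derivations, since the structural observations in \App{derivations} already let us move between $\pi$ and $\widehat\pi$ level-wise.

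I expect the main (minor) obstacle to be a careful treatment of clause~1, namely ensuring that a finite $\pi$ in $\mathcal{U}$ whose last state admits only blocking/$Y$-actions projects to a $\widehat\pi$ whose last state does the same. This is immediate because the final state of a $Y\!$-just finite path in $\mathcal{U}$ is already a state of $\mathcal{T}$, and the outgoing transitions of that state in $\mathcal{S}$ correspond exactly to the outgoing transitions of $\mathcal{T}$, irrespective of how derivations are identified by $\widehat\cdot$. With this, all five clauses hold and the coinductive conclusion follows.
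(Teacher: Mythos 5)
Your proposal is correct and follows essentially the same route as the paper: you define the image family $R_Y$ (the paper calls it $Y\!$-justness${}_{\,\mathcal{U}}$), verify the five clauses of \Def{just path} using \Obs{decompose projection} to transfer decompositions from $\mathcal{U}$ to $\mathcal{S}$, and conclude by maximality. The clause-by-clause details, including the treatment of the finite case and of suffixes, match the paper's argument.
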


\newcommand{\ujustn}{just${}_{\,\mathcal{U}}$}
\newcommand{\ujust}{just${}_{\,\mathcal{U}}$\xspace}
\begin{proof}
  Define a path in\/ $\mathcal{S}$ to be $Y\!$-\emph{\ujustn}, for $Y\mathbin\subseteq\HC$, if it
  has the form $\widehat\pi$ for a $Y\!$-just path $\pi$ in\/ $\mathcal{U}$.
  We show that the family of predicates $Y\!$-justness${}_{\,\mathcal{U}}$
  satisfies the five requirements of \Def{just path}.
  \begin{itemize}
\vspace{-1ex}
  \item A finite $Y\!$-\ujust path $\widehat\pi$, with $\pi$ a just path in\/ $\mathcal{U}$, ends in some state $\widehat P$.
  Since $\widehat{P}=P$, $\pi$ ends in the same state.
    Hence that state admits actions from $Y\cup\B?$ only.
  \item Let $\pi$ be a $Y\!$-just path in\/ $\mathcal{U}$, so that $\widehat\pi$, and hence $\pi$,
    starts from a process $P|Q$. Then $\pi \Rrightarrow \pi_1|\pi_2$
    for an $X$-just path $\pi_1$ of $P$ and a $Z$-just path $\pi_2$ of $Q$ such that $Y\supseteq X\mathord\cup Z$
    and $X\mathord\cap \bar{Z}\mathbin=\emptyset$.
    By \Obs{decompose projection} $\widehat\pi\in\widehat\pi_1|\widehat\pi_2$, where
    $\widehat\pi_1$ is $X$-\ujust and $\widehat\pi_2$ is $Z$-\ujust.
  \item Let $\pi$ be a $Y\!$-just path in\/ $\mathcal{U}$ starting from a process $P\backslash c$.
    Then $\pi\Rrightarrow\pi'\backslash c$ for a $Y\mathord\cup\{c,\bar c\}$-just path $\pi'$ of $P$.
    By \Obs{decompose projection} $\widehat\pi'$ is a decomposition of $\widehat\pi$, where
    $\widehat\pi'$ is $Y\mathord\cup\{c,\bar c\}$-\ujust.
  \item The case that $\widehat\pi$ is a path of $P[f]$ proceeds in exactly the same way.
  \item Each suffix of $\widehat\pi$ has the form $\widehat\pi'$ for $\pi'$ a suffix of $\pi$.
    So if $\widehat\pi$ is $Y\!$-\ujust because $\pi$ is $Y\!$-just then $\pi'$ must be $Y\!$-just,
    and hence $\widehat\pi$ is $Y\!$-\ujust.
\vspace{-1ex}
  \end{itemize}
  Since $Y\!$-justness is the largest family of predicates on paths in\/ $\mathcal{S}$ that satisfies
  those requirements, $Y\!$-justness${}_{\,\mathcal{U}}$ of paths in\/ $\mathcal{S}$ implies $Y\!$-justness of paths in\/ $\mathcal{S}$.
\qed\end{proof}

\begin{definition}\label{df:nu-enabled path}\rm
\emph{$\nu$-enabledness}, for $\nu$ an abstract transition, is the smallest family of predicates on
the paths in\/ $\mathcal{U}$ such that
\begin{itemize}
\vspace{-1ex}
\item a finite path is $\nu$-enabled if its last state $Q\in\T_{\rm\ABC}$ enables $\nu$, i.e.\ $Q\models\en(\nu)$;
\item a path $\pi \Rrightarrow \pi_1|\pi_2$ is $\nu$-enabled if either $\nu$ has the form $\nu_1|\_$ and $\pi_1$ is 
  $\nu_1$-enabled, or $\nu=\_|\nu_2$ and $\pi_2$ is $\nu_2$-enabled, or $\nu=\nu_1|\nu_2$ and
  $\pi_i$ is $\nu_i$-enabled for $i=1,2$;
\item a path $\pi\Rrightarrow\pi'\backslash c$ is $\nu$-enabled if $\nu$ has the form $\nu'\backslash c$
  and $\pi'$ is $\nu'$-enabled;
\item a path $\pi\Rrightarrow\pi'[f]$ is $\nu$-enabled if $\nu$ has the form $\nu'[f]$ and
  $\pi'$ is $\nu'$-enabled;
\item and a path is $\nu$-enabled if it has a suffix that is $\nu$-enabled. 
\end{itemize}
\end{definition}

\begin{proposition}\label{prop:enabled just}
Let $\pi$ be a path in\/ $\mathcal{U}$ and $Y\subseteq\HC$. 
If, for all abstract transitions $\nu$ with $\ell(\nu)\not\in Y$, $\pi$ is not $\nu$-enabled, then $\pi$ is $Y\!$-just.
\end{proposition}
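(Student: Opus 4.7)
The plan is to proceed by coinduction. For each $Y\subseteq\HC$ define a predicate
\[
\mathcal{F}_Y(\pi) \ :\Leftrightarrow\ \text{for every abstract transition $\nu$ with $\ell(\nu)\notin Y$, $\pi$ is not $\nu$-enabled.}
\]
I will verify that the family $\{\mathcal{F}_Y\}_{Y\subseteq\HC}$ satisfies the five closure clauses of \Def{just path} read on paths in $\mathcal{U}$. Since $Y$-justness is the greatest such family, this yields the desired implication $\mathcal{F}_Y(\pi)\Rightarrow \pi$ is $Y$-just.

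For the finite-path clause, let $\pi$ be a finite path satisfying $\mathcal{F}_Y$, ending in $Q\in\T_{\rm \ABC}$. For any outgoing transition $Q\ar{\ell}Q'$ with derivation $\chi$, either $\ell\in\B?$ or $\nu:=[\chi]_\equiv$ is an abstract transition; in the latter case $Q\models\en(\nu)$, so the finite-path clause of \Def{nu-enabled path} makes $\pi$ $\nu$-enabled, forcing $\ell=\ell(\nu)\in Y$. The suffix clause is immediate by contrapositive from the suffix clause of \Def{nu-enabled path}. Restriction and relabelling are analogous: if $\pi\Rrightarrow\pi'\backslash c$ and $\pi'$ is $\nu'$-enabled with $c\neq\ell(\nu')\neq\bar c$, then $\nu'\backslash c$ is a bona fide abstract transition and the restriction clause of \Def{nu-enabled path} makes $\pi$ a $(\nu'\backslash c)$-enabled path, whence $\ell(\nu')\in Y$; the $[f]$-case uses $\ell(\nu'[f])=f(\ell(\nu'))$. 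This gives $(Y\cup\{c,\bar c\})$- and $f^{-1}(Y)$-safety of the respective decompositions.

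The main step, and the one I expect to be the chief obstacle, is the parallel-composition clause. Given $\pi\Rrightarrow\pi_1|\pi_2$ with $\mathcal{F}_Y(\pi)$, I will take
\[
X := \{\ell(\nu_1) : \pi_1 \text{ is } \nu_1\text{-enabled}\}, \qquad
Z := \{\ell(\nu_2) : \pi_2 \text{ is } \nu_2\text{-enabled}\},
\]
so that $\mathcal{F}_X(\pi_1)$ and $\mathcal{F}_Z(\pi_2)$ hold by construction. Any $\nu_1$-enabledness of $\pi_1$ lifts, via the parallel clause of \Def{nu-enabled path}, to $(\nu_1|\_)$-enabledness of $\pi$, so $\mathcal{F}_Y(\pi)$ forces $\ell(\nu_1)\in Y$; hence $X\subseteq Y\subseteq\HC$, and symmetrically $Z\subseteq Y$. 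The delicate condition $X\cap\bar Z=\emptyset$ comes from the synchronisation subclause: a hypothetical $c\in X\cap\bar Z$ would yield witnesses $\nu_1,\nu_2$ with $\ell(\nu_1)=c=\overline{\ell(\nu_2)}$, making $\pi$ a $(\nu_1|\nu_2)$-enabled path labelled $\tau$, and since $\tau\notin\HC\supseteq Y$ this contradicts $\mathcal{F}_Y(\pi)$. The essential ingredient is that $Y\subseteq\HC$ keeps $\tau$ outside $Y$, which is what makes the synchronisation case go through.
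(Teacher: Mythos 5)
Your proposal is correct and follows essentially the same route as the paper's proof: you define the same auxiliary family of predicates (the paper calls it $Y\!$-justness${}_{en}$), verify the five clauses of \Def{just path} with the same choice of $X$ and $Z$ in the parallel case, and conclude by maximality of $Y\!$-justness. The key observations---that enabledness lifts along the clauses of \Def{nu-enabled path} while preserving labels, and that a pair $c\in X\cap\bar Z$ would produce a $\tau$-labelled enabled abstract transition contradicting $Y\subseteq\HC$---are exactly those used in the paper.
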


\newcommand{\ejustn}{just${}_{en}$}
\newcommand{\ejust}{just${}_{en}$\xspace}
\begin{proof}
  Define a path $\pi$ in\/ $\mathcal{U}$ to be $Y\!$-\emph{\ejustn}, for $Y\subseteq\HC$,
  if it is $\nu$-enabled for no abstract transition $\nu$ with $\ell(\nu)\not\in Y$.
  Note that if $\pi$ is $Y\!$-\ejust, it is also $Y'\!$-\ejust for any $Y\subseteq Y'\subseteq\HC$.
  We show that the family of predicates $Y\!$-justness${}_{en}$, for $Y\mathbin\subseteq \HC\!$,
  satisfies the five requirements of \Def{just path}.
  \begin{itemize}
\vspace{-1ex}
  \item Let $\pi$ be a finite $Y\!$-\ejust path. Suppose the last state $Q$ of $\pi$ admits an action $\alpha\notin Y\cup\B?$.
  Then $Q\models \en(\nu)$ for an abstract transition $\nu$ with $\ell(\nu)\mathop=\alpha\mathbin{\not\in} Y\!\mathbin{\subseteq} Y\mathord{\cup}\B?$.
  So $\pi$ is $\nu$-enabled, contradicting the $Y\!$-justness${}_{en}$ of $\pi$.
 
  \item Suppose $\pi$ is a $Y\!$-\ejust path of a process $P|Q$.
  Then $Y$ includes all labels of abstract transitions $\nu$ for which $\pi$ is $\nu$-enabled.
  By \Obs{decompose parallel} there are paths $\pi_i$ for $i\mathord=1,2$ with $\pi\Rrightarrow\pi_1|\pi_2$.
  Let $X$ be the set of labels of abstract transitions $\nu$ for which $\pi_1$ is $\nu$-enabled,
  and let $Z$ be the set of labels of abstract transitions $\nu$ for which $\pi_2$ is $\nu$-enabled.
  If $\pi_1$ is $\nu$-enabled then $\pi$ is $\nu|\_$-enabled by \Def{nu-enabled path}. Since
  $\ell(\nu|\_)=\ell(\nu)$ this implies that $X\subseteq Y$.
  In the same way it follows that $Z\subseteq Y$.

  Now suppose that $X\cap \bar{Z} \neq\emptyset$.
  Then $\pi_i$ is $\nu_i$-enabled, for $i\mathord=1,2$, for abstract transitions $\nu_i$ with
  $\ell(\nu_1)=c \in \HC$ and $\ell(\nu_2)=\bar c$.
  So by \Def{nu-enabled path} $\pi$ is $\nu_1|\nu_2$-enabled, in contradiction with $\ell(\nu_1|\nu_2)=\tau\not\in Y\subseteq \HC$.
  We therefore conclude that $X\cap \bar{Z} = \emptyset$.

  By definition, $\pi_1$ is $X$-\ejust and $\pi_2$ is $Z$-\ejust.

  \item Suppose $\pi$ is a $Y\!$-\ejust path of a process $P\backslash c$.
  Then $Y$ includes all labels of abstract transitions $\nu$ for which $\pi$ is $\nu$-enabled.
  By \Obs{decompose restriction} there is a path $\pi'$ with $\pi\Rrightarrow\pi'\backslash c$.
  Let $X$ be the set of labels of abstract transitions $\nu$ for which $\pi'$ is $\nu$-enabled.
  If $\pi'$ is $\nu$-enabled and $c\neq \ell(\nu)\neq \bar c$ then $\pi$ is $\nu\backslash c$-enabled by \Def{nu-enabled path}.
  Since $\ell(\nu\backslash c)=\ell(\nu)$ this implies that $X\setminus\{c,\bar c\} \subseteq Y$.
  It follows that $\pi'$ is $X$-\ejust, and hence $Y\mathord\cup\{c,\bar c\}$-\ejust.

  \item Suppose $\pi$ is a $Y\!$-\ejust path of a process $P[f]$.
  Then $Y$ includes all labels of abstract transitions $\nu$ for which $\pi$ is $\nu$-enabled.
  By the remark after \Obs{decompose restriction}, there is a path $\pi'$ with $\pi\Rrightarrow\pi'[f]$.
  Let $X$ be the set of labels of abstract transitions $\nu$ for which $\pi'$ is $\nu$-enabled.
  If $\pi'$ is $\nu$-enabled then $\pi$ is $\nu[f]$-enabled by \Def{nu-enabled path}.
  Since $\ell(\nu[f])=f(\ell(\nu))$ this implies that $f(X) \subseteq Y$.
  It follows that $\pi'$ is $X$-\ejust, and hence $f^{-1}(Y)$-\ejust.

  \item Suppose $\pi'$ is a suffix of an $Y\!$-\ejust path $\pi$.
  Then $Y$ includes all labels of abstract transitions $\nu$ for which $\pi$ is $\nu$-enabled.
  By the last clause of \Def{nu-enabled path}, $Y$ thereby includes all labels of abstract transitions
  $\nu$ for which $\pi'$ is $\nu$-enabled. Hence $\pi'$ is $Y\!$-\ejust.
  \end{itemize}
  Since $Y\!$-justness is the largest family of
  predicates that satisfies those requirements, $Y\!$-justness${}_{en}$ implies $Y\!$-justness.
\qed\end{proof}
Henceforth, we write $\pi\models\phi$ if the LTL formula $\phi$ holds for the path $\pi$ in\/
$\mathcal{U}$, that is, if $\pi$ satisfies $\phi$. 
Note that a finite path satisfies $\mathbf{FG}\,\phi$ if
$\phi$ holds in its last state.
\begin{proposition}\label{prop:nu-enabled}
If $\pi$ is $\nu$-enabled then $\pi\models \mathbf{FG}\,\en(\nu)$.
\end{proposition}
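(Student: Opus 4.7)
The natural plan is to prove \Prop{nu-enabled} by induction on the derivation that $\pi$ is $\nu$-enabled, following the five clauses of \Def{nu-enabled path}. The base case (finite path whose last state $Q$ satisfies $\en(\nu)$) is immediate: a finite path satisfies $\mathbf{FG}\,\phi$ as soon as $\phi$ holds in its final state. The suffix clause is equally immediate, because $\mathbf{FG}\,\en(\nu)$ is preserved when prefixing a path.

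The real work lies in the three ``decomposition'' clauses. For the parallel case, suppose $\pi \Rrightarrow \pi_1|\pi_2$. First I would observe that, by \Obs{decompose parallel}, every state along $\pi$ has the form $u_1|u_2$ with $u_1$ occurring on $\pi_1$ and $u_2$ on $\pi_2$, and that as one advances along $\pi$ the positions on $\pi_1$ and $\pi_2$ are nondecreasing. Now if $\nu$ has the form $\nu_1|\_$ and $\pi_1$ is $\nu_1$-enabled, the induction hypothesis gives $\pi_1 \models \mathbf{FG}\,\en(\nu_1)$, so there is a position $k_1$ from which $\en(\nu_1)$ holds throughout $\pi_1$. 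Pick $k$ large enough that after $k$ steps on $\pi$ the left projection has passed $k_1$; from that point on, every state of $\pi$ has the form $u_1|u_2$ with $u_1 \models \en(\nu_1)$, and \Lem{compose parallel} yields $u_1|u_2 \models \en(\nu_1|\_)$, giving $\pi \models \mathbf{FG}\,\en(\nu_1|\_)$. The symmetric case $\nu = \_|\nu_2$ is analogous, and the synchronisation case $\nu = \nu_1|\nu_2$ combines both, invoking \Lem{compose parallel 2} instead of \Lem{compose parallel}; again one chooses $k$ so that \emph{both} projections have reached their respective eventually-stable regions, which is possible because the projection maps are monotone.

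For $\pi \Rrightarrow \pi'\backslash c$ with $\nu = \nu'\backslash c$, \Obs{decompose restriction} tells us that every state along $\pi$ has the form $u'\backslash c$ for a corresponding state $u'$ on $\pi'$; the induction hypothesis $\pi' \models \mathbf{FG}\,\en(\nu')$ combined with \Lem{compose restriction} (whose side condition $c\neq \ell(\nu')\neq \bar c$ is part of the definition of $\nu'\backslash c$ as an abstract transition) gives $\pi \models \mathbf{FG}\,\en(\nu'\backslash c)$. The relabelling case is entirely parallel, using \Lem{compose relabelling}.

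The main obstacle, and the one point that needs genuine care rather than a routine appeal to the lemmas, is the positional alignment argument in the parallel case: the decomposition contracts runs of repeated components, so a single ``step'' of $\pi_i$ may correspond to several steps of $\pi$ (whenever the action comes solely from the other component). I would handle this by defining, for each position $n$ on $\pi$, the induced positions on $\pi_1$ and $\pi_2$, noting these are monotone and cofinal, and then choosing the threshold $k$ on $\pi$ to be the first position whose induced positions exceed the eventually-stable thresholds $k_1$ (and $k_2$, in the synchronisation subcase) supplied by the induction hypothesis. With that bookkeeping in place, the composition lemmas close the argument.
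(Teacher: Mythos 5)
Your proof is correct and follows essentially the same route as the paper's: induction on the five clauses of \Def{nu-enabled path}, with the base and suffix cases immediate and the three decomposition cases closed by Lemmas~\ref{lem:compose parallel}, \ref{lem:compose parallel 2}, \ref{lem:compose restriction} and \ref{lem:compose relabelling}. The only difference is that you make the positional-alignment bookkeeping in the parallel case explicit, where the paper simply picks suffixes $\pi_1'$ (and $\pi_2'$) satisfying $\mathbf{G}\,\en(\nu_i)$ and observes that every composed state $u_j|v_k$ beyond the corresponding point of $\pi$ then satisfies $\en(\nu)$.
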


\begin{proof}
  We apply induction on $\nu$-enabledness of a path $\pi$ in\/ $\mathcal{U}$, using the
  five clauses of \Def{nu-enabled path}.

  Suppose $\pi$ is $\nu$-enabled because it is finite and its last state $Q\in\T_{\rm\ABC}$ enables $\nu$.
  Then $\pi \models \mathbf{FG}\,\en(\nu)$.

  Suppose $\pi \Rrightarrow \pi_1|\pi_2$ is $\nu$-enabled because $\nu=\nu_1|\_$ and $\pi_1$ is $\nu_1$-enabled.
  By induction $\pi_1 \models \mathbf{FG}\,\en(\nu_1)$.
  Let $\pi_1'= u_0 u_1 u_2 \dots$ be a suffix of $\pi_1$ with $\pi_1' \models \mathbf{G}\,\en(\nu_1)$.
  Then, for all $i\geq 0$, $u_i\models\en(\nu_1)$ and thus $u_i|v\models\en(\nu_1|\_)$ for any state\pagebreak[2] $u_i|v$
  of\/ $\mathcal{U}$ by Lemma~\ref{lem:compose parallel}.
  Therefore $\pi\models \mathbf{FG}\,\en(\nu)$.
  The case that $\nu=\_|\nu_2$ and $\pi_2$ is $\nu_2$-enabled goes likewise.

  Suppose $\pi \Rrightarrow \pi_1|\pi_2$ is $\nu$-enabled because $\nu=\nu_1|\nu_2$ and $\pi_i$
  is $\nu_i$-enabled for $i=1,2$. Then $\ell(\nu_1)=\overline{\ell(\nu_2)}\in\HC$.
  By induction $\pi_i \models \mathbf{FG}\,\en(\nu_i)$.
  Let $\pi_1'= u_0 u_1 u_2 \dots$ and $\pi_2'= v_0 v_1 v_2 \dots$ be (finite or infinite) suffixes of $\pi_1$ and $\pi_2$
  with $\pi_i' \models \mathbf{G}\,\en(\nu_i)$ for $i=1,2$.
  Then, for all $j,k\geq 0$, $u_j\models\en(\nu_1)$ and $v_k\models\en(\nu_2)$ and thus
  $u_j|v_k\models\en(\nu_1|\nu_2)$ by Lemma~\ref{lem:compose parallel 2}, whenever $u_j|v_k$ is a
  state of\/ $\mathcal{U}$.
  Therefore $\pi\models \mathbf{FG}\,\en(\nu)$.

  Suppose $\pi\Rrightarrow\pi'\backslash c$ (with $c\in\HC$) is $\nu$-enabled because $\nu=\nu'\backslash c$ and
  $\pi'$ is $\nu'$-enabled. Then $c \neq \ell(\nu')\neq \bar c$.
  By induction, $\pi' \models \mathbf{FG}\,\en(\nu')$.
  Using \Lem{compose restriction}, one finds $\pi\models \mathbf{FG}\,\en(\nu)$.

  Suppose $\pi\Rrightarrow\pi'[f]$ is $\nu$-enabled because $\nu=\nu[f]$ and $\pi'$ is $\nu'$-enabled.
  By induction $\pi' \models \mathbf{FG}\,\en(\nu')$.
  Using \Lem{compose relabelling}, one finds $\pi\models \mathbf{FG}\,\en(\nu)$.

  Suppose $\pi$ is $\nu$-enabled because it has a suffix $\pi'$ that is $\nu$-enabled.
  Then, by induction, $\pi' \models \mathbf{FG}\,\en(\nu)$.
  Hence $\pi \models \mathbf{FG}\,\en(\nu)$.
\qed\end{proof}
The following result is the ``$\Leftarrow$''-direction of \Thm{just path}.

\begin{proposition}\label{prop:just path}
If $\pi$ is a path in\/ $\mathcal{U}$ with $\pi\models\mathbf{F}\mathbf{G}\,\en(\nu) \Rightarrow \mathbf{G}\mathbf{F}\, \nu$
for each abstract transition $\nu$ with $\ell(\nu)\in \B!\cup\{\tau\}$, then
$\widehat\pi$ is just in the sense of \Def{just path}.
\end{proposition}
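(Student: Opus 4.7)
The plan is to chain \Prop{projection}, \Prop{enabled just}, and \Prop{nu-enabled}, sealing the argument with Lemma~\ref{lm:conc_and_equiv}. Setting $Y := \HC$, I aim to show that $\pi$ itself is $\HC$-just in $\mathcal{U}$, so that \Prop{projection} then delivers justness of $\widehat\pi$ in $\mathcal{S}$. By \Prop{enabled just}, it suffices to prove that $\pi$ is not $\nu$-enabled for any abstract transition $\nu$ with $\ell(\nu) \notin \HC$; since $\ell(\nu) \notin \B?$ by construction, these are precisely the $\nu$ with $\ell(\nu) \in \B!\cup\{\tau\}$, i.e.\ the non-blocking ones for which the hypothesis supplies an LTL formula.

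Suppose, towards a contradiction, that $\pi$ is $\nu$-enabled for some such $\nu$. Then \Prop{nu-enabled} gives $\pi \models \mathbf{F}\mathbf{G}\,\en(\nu)$, and the hypothesis then gives $\pi \models \mathbf{G}\mathbf{F}\,\nu$. The key algebraic step I would establish next is: at any derivation state $\zeta$ on $\pi$ where $\en(\nu)$ holds, one has $[\zeta]_\equiv \neq \nu$. Indeed, by the definition of $\en(\nu)$ at a derivation there is a representative $\chi$ of $\nu$ with $\chi \conc \zeta$, and Lemma~\ref{lm:conc_and_equiv} then forces $\chi \not\equiv \zeta$, so $[\zeta]_\equiv \neq [\chi]_\equiv = \nu$. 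At process states the atomic proposition $\nu$ trivially fails, because $\nu$ is defined to label only derivations.

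To close, pick a suffix $\pi^*$ of $\pi$ witnessing $\mathbf{F}\mathbf{G}\,\en(\nu)$---the one-state tail when $\pi$ is finite, a sufficiently late tail otherwise. The key step yields $\pi^* \models \mathbf{G}\,\neg\nu$, whereas $\pi \models \mathbf{G}\mathbf{F}\,\nu$ specialised to the first position of $\pi^*$ gives $\pi^* \models \mathbf{F}\,\nu$, a contradiction. The only non-routine ingredient is the key step, and this is exactly where the bookkeeping of Appendices~\ref{sec:concurrency} and~\ref{sec:enabling} pays off: Lemma~\ref{lm:conc_and_equiv} formalises the noninstantaneous-readiness principle of \SSect{justice}, forcing $\en(\nu)$ to fail at precisely the derivation that realises $\nu$. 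Once this identification is spotted, the rest is short LTL chasing, and I expect no difficulty with the three preparatory reductions.
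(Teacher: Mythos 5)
Your proposal is correct and follows essentially the same route as the paper's own proof: the same chain of \Prop{nu-enabled}, \Prop{enabled just} and \Prop{projection}, with Lemma~\ref{lm:conc_and_equiv} used in exactly the same way to show that $\en(\nu)$ and $\nu$ cannot hold at the same state of $\mathcal{U}$. The only cosmetic difference is that the paper packages this last observation as the equivalence of $\mathbf{F}\mathbf{G}\,\en(\nu)\Rightarrow\mathbf{G}\mathbf{F}\,\nu$ with $\neg\mathbf{F}\mathbf{G}\,\en(\nu)$, whereas you run the same argument as a direct contradiction.
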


\begin{proof}
By definition no state $P\in\T_{\rm \ABC}$ satisfies $\nu$.
So, any state of $\mathcal{U}$ that could satisfy $\nu$ as well as $en(\nu)$ needs to be a 
derivation $\zeta$. Assume
$\zeta\models en(\nu)$. Then there is a representative $\chi$ of $\nu$, i.e., $\nu=[\chi]_{\equiv}$, with $\chi\conc\zeta$.
Using Lemma~\ref{lm:conc_and_equiv}, we get $\nu=[\chi]_{\equiv}\not=[\zeta]_{\equiv}$.
In case $\zeta$ would also satisfy $\nu$, we have, by definition, $\nu = [\zeta]_\equiv$, which is a contradiction.
Hence there is no abstract transition $\nu$ and
state $u$ in\/ $\mathcal{U}$ for which the propositions $\en(\nu)$ and $\nu$ both hold.
Consequently, the formula $\mathbf{F}\mathbf{G}\,\en(\nu) \Rightarrow \mathbf{G}\mathbf{F}\, \nu$
is equivalent to $\neg\mathbf{F}\mathbf{G}\,\en(\nu)$.

Let $\pi$ be a path in\/ $\mathcal{U}$ with $\pi\models\neg\mathbf{F}\mathbf{G}\,\en(\nu)$
for each $\nu$ with $\ell(\nu)\in \B!\cup\{\tau\}$.
Then, by \Prop{nu-enabled}, $\pi$ is $\nu$-enabled for no $\nu$ with $\ell(\nu)\in \B!\cup\{\tau\}$.
Hence, by \Prop{enabled just}, $\pi$ is $\HC$-just. Therefore, by \Prop{projection}, $\widehat\pi$ is
$\HC$-just, and hence just.
\qed\end{proof}

\subsection{The ``$\Rightarrow$''-direction}

\subsubsection{On the targets of derivations enabling abstract transitions}

\begin{lemma}\label{lem:parallel target}
If $\zeta\models \en(\nu|\_)$, $\zeta\models \en(\_|\nu)$ or $\zeta\models\en(\nu_1|\nu_2)$ for a
derivation $\zeta$ and abstract transitions $\nu,\nu_1,\nu_2$, then $\target(\zeta)$ has the form $P_1|P_2$.
\end{lemma}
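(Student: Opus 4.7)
The plan is to unpack the hypothesis and proceed by structural induction on a witnessing representative. By the definition of $\en$ on derivations, $\zeta\models\en(\mu)$ for $\mu$ one of $\nu|\_$, $\_|\nu$, or $\nu_1|\nu_2$ yields a representative $\chi$ of $\mu$ with $\chi\conc\zeta$. It then suffices to show that for any such $\chi$ and $\zeta$, the target $\target(\zeta)$ has the form $P_1|P_2$.

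I would do structural induction on $\chi$, driven by \Def{concurrency}. If the outermost operator of $\chi$ is $|$, say $\chi=\chi_1|u$, then clauses 1, 2, 4, 5, 6 of \Def{concurrency} force $\zeta$ to have outermost form $v_1|v_2$ as well, so $\target(\zeta)$ is directly a parallel composition. If the outermost form of $\chi$ is one of the ``wrapper'' shapes $A\mathord:\chi'$, $\chi'{+}P$, or $P{+}\chi'$, then \Def{concurrency} (clauses 3 and 7) forces $\zeta$ to carry the matching wrapper, say $\zeta=A\mathord:\zeta'$ with $\chi'\conc\zeta'$; since these wrappers are transparent to $\equiv$ by \Def{equiv}, the inner derivation $\chi'$ is still a representative of the same $\mu$, so the induction hypothesis applied to $\chi'\conc\zeta'$ gives $\target(\zeta')=P_1|P_2$, and $\target(\zeta)=\target(\zeta')$ inherits this shape.

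The remaining outermost forms $\shar{\alpha}P$, $\chi'\backslash c$, and $\chi'[f]$ cannot occur, because no representative of a parallel abstract transition $\mu\in\{\nu|\_,\_|\nu,\nu_1|\nu_2\}$ has these shapes. I would justify this by an auxiliary structural claim: $\equiv$ preserves the outermost ``substantial'' operator, that is, anything other than the wrappers $A\mathord:$, $+P$, $P+$. This follows by a routine induction on the generation of $\equiv$ in \Def{equiv}, using in particular \Obs{equiv_parallel} and \Obs{equiv_rename_restrict} to rule out cross-structure equivalences between a derivation with outermost $|$ and any derivation whose outermost substantial operator is $\shar{\alpha}$, $\backslash c$, or $[f]$.

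The main obstacle I anticipate is precisely this auxiliary claim about the rigidity of $\equiv$ on substantial outermost operators; the rest of the argument is straightforward case analysis driven by the clauses of \Def{concurrency}. The three cases $\en(\nu|\_)$, $\en(\_|\nu)$, and $\en(\nu_1|\nu_2)$ are then handled uniformly, since the induction only relies on $\chi$ being a representative of some abstract transition whose canonical form has outermost $|$.
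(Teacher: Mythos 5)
Your proposal is correct and follows essentially the same route as the paper's proof: unfold $\en$ to obtain a representative $\chi$ with $\chi\conc\zeta$, then do structural induction on $\chi$, with the parallel cases forcing $\zeta$ to be a parallel composition and the wrapper cases ($A\mathord:$, ${+}P$, $P{+}$) handled by the induction hypothesis since they are transparent to $\equiv$ and preserve targets. The paper simply asserts that the remaining shapes cannot occur for a representative of a parallel abstract transition, whereas you spell out the auxiliary rigidity claim for $\equiv$; that is a reasonable elaboration, not a divergence.
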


\begin{proof}
Since $\zeta\models \en(\nu)$ means that $\chi\conc\zeta$ for a representative $\chi$ of $\nu$, the
lemma can be rephrased as: ``If $\chi\conc\zeta$ for a representative $\chi$ of an abstract
transition $\nu|\_$, $\_|\nu$ or $\nu_1|\nu_2$, $\target(\zeta)$ has the form $P_1|P_2$.''
We prove this statement by structural induction on $\chi$.
\begin{itemize}
\vspace{-1ex}
\item Let $\chi$ be $\chi'| P$,  $P|\chi'$ or $\chi_1|\chi_2$.
By the definition of $\conc$, $\zeta$ must then have the form $Q|\zeta'$, $\zeta'|Q$ or $\zeta_1|\zeta_2$.
In each of these cases $\target(\zeta)$ has the form $P_1|P_2$.
\item Let $\chi=\chi'{+}P$.
  Then $\zeta$ must have the form $\zeta'{+}P$ with $\chi' \conc \zeta'$ by the definition of $\conc$.
  Since $\chi$ is a representative of an abstract transition $\nu|\_$ or $\nu_1|\nu_2$, so is $\chi'$.
  By induction, $\target(\zeta')$ has the form $P_1|P_2$.
  By rule \myref{Sum-l} $\target(\zeta)=\target(\zeta')$, and thus of the form $P_1|P_2$.
\item The cases $\chi=P{+}\chi'$ and $\chi=A{:}\chi'$ proceed likewise.
\end{itemize}
As $\chi$ represents an abstract transition $\nu|\_$ or $\nu_1|\nu_2$, it cannot have any other form.
\qed\end{proof}

\begin{lemma}\label{lem:restriction target}
If $\zeta\models \en(\nu\backslash c)$ or $\zeta\models\en(\nu[f])$, then $\target(\zeta)$ has the
form $P\backslash c$, and $P[f]$, resp.
\end{lemma}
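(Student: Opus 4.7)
The plan is to mimic the proof of \Lem{parallel target}: unfold $\zeta\models\en(\nu\backslash c)$ as the existence of a representative $\chi$ of $\nu\backslash c$ with $\chi\conc\zeta$, and then proceed by structural induction on $\chi$. The argument for $\nu[f]$ is completely parallel, so I only describe the restriction case in detail.

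First I would observe that, by inspection of \Def{equiv}, the outer operator of a representative is preserved by $\equiv$ except for the ``transparent'' constructions $A{:}{\cdot}$, ${\cdot}{+}P$ and $P{+}{\cdot}$. Consequently, a representative $\chi$ of the abstract transition $\nu\backslash c$ has one of the following four shapes:
\textbf{(i)} $\chi=\chi'\backslash c$ with $\chi'$ a representative of $\nu$;
\textbf{(ii)} $\chi=\chi'{+}P$, $\chi=P{+}\chi'$, or $\chi=A{:}\chi'$ with $\chi'$ again a representative of $\nu\backslash c$.
All other shapes listed after the \abstr-transition bullets (prefix, parallel, relabelling) would force $[\chi]_\equiv$ to have the wrong outer form, so they are excluded.

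Now the induction on $\chi$. In case \textbf{(i)}, the only clause of \Def{concurrency} whose conclusion has outer shape ${\cdot}\backslash c$ is the one producing $\chi'\backslash c \conc \zeta'\backslash c$; hence $\zeta$ must have the form $\zeta'\backslash c$, and by rule \myref{Res} $\target(\zeta)=\target(\zeta')\backslash c$, which is of the desired form $P\backslash c$. In case \textbf{(ii)}, the relevant clauses of \Def{concurrency} force $\zeta$ to have the matching outer form $\zeta'{+}P$, $P{+}\zeta'$, or $A{:}\zeta'$, respectively, with $\chi'\conc\zeta'$. Since $\chi'$ is still a representative of $\nu\backslash c$, the induction hypothesis yields that $\target(\zeta')$ has the form $P\backslash c$; and rules \myref{Sum-l}, \myref{Sum-r}, \myref{Rec} give $\target(\zeta)=\target(\zeta')$, finishing this case.

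For $\nu[f]$, the same induction applies verbatim, the base case being $\chi=\chi'[f]$, which by \Def{concurrency} (seventh clause) forces $\zeta=\zeta'[f]$, so by rule \myref{Rel} $\target(\zeta)=\target(\zeta')[f]$. The main obstacle is the preliminary form analysis of representatives of $\nu\backslash c$ and $\nu[f]$; once that is in hand, the induction is routine and the result follows immediately.
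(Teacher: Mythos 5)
Your proof is correct and follows essentially the same route as the paper's: both rephrase $\zeta\models\en(\nu\backslash c)$ as $\chi\conc\zeta$ for a representative $\chi$ of $\nu\backslash c$, then do structural induction on $\chi$, with the base case $\chi=\chi'\backslash c$ forcing $\zeta=\zeta'\backslash c$ and the transparent cases $\chi'{+}P$, $P{+}\chi'$, $A{:}\chi'$ handled via \myref{Sum-l}, \myref{Sum-r}, \myref{Rec} and the induction hypothesis. Your up-front classification of the possible shapes of representatives of $\nu\backslash c$ is just a reorganisation of the paper's closing remark that ``$\chi$ cannot have any other form''.
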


\begin{proof}
The first statement can be rephrased as: ``If $\chi\conc\zeta$ for a representative $\chi$ of an abstract
transition $\nu\backslash c$, then $\target(\zeta)$ has the form $P\backslash c$.''
We prove this statement by structural induction on $\chi$.
\begin{itemize}
\vspace{-1ex}
\item Let $\chi=\chi'\backslash c$. By the definition of $\conc$ we have $\zeta=\zeta'\backslash c$
  with $\chi' \conc \zeta'$. Hence $\target(\zeta)$ has the form $P\backslash c$.
\item Let $\chi=A{:}\chi'$.
  Then $\zeta$ must have the form $A{:}\zeta'$ with $\chi' \conc \zeta'$ by the definition of $\conc$.
  Since $\chi$ is a representative of an abstract transition $\nu\backslash c$, so is $\chi'$.
  By induction, $\target(\zeta')$ has the form $P\backslash c$.
  By rule \myref{Rec} $\target(\zeta)=\target(\zeta')$, and thus of the form $P\backslash c$.
\item The cases $\chi=\chi'{+}P$ and $\chi=P{+}\chi'$ proceed likewise, using  
\myref{Sum-l} and \myref{Sum-r}.
\item As $\chi$ represents an abstract transition $\nu\backslash c$, it cannot have any other form.
\vspace{-1ex}
\end{itemize}
The proof of the second statement proceeds likewise.
\qed\end{proof}

\subsubsection{Decomposing enabled abstract transitions}

\begin{observation}\label{obs:cons33}
Using \Obs{decompose parallel}, any representative $\chi$ of $\nu|\_$
such that \mbox{$\src(\chi)=P_1|P_2$}
is of the form $\chi'|P_2$ with $\chi'$ a representative of $\nu$, 
or $\chi'|\varsigma$ with $\chi'$ a representative of $\nu$ and $\ell(\chi')=\ell(\chi)\in\B!$.
Moreover $\src(\chi')=P_1$.
\end{observation}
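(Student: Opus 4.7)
The plan is to combine the shape analysis of \Obs{decompose parallel} with the structural constraint on $\equiv$ recorded in \Obs{equiv_parallel}. I would begin by unpacking the hypothesis that $\chi$ represents $\nu|\_$: by the notation introduced in \Def{equiv}, there exist a representative $\chi_0$ of $\nu$ and a state $u$ of $\mathcal{U}$ with $\chi\equiv\chi_0|u$, where $u$ is either a process or a derivation $\varsigma$ with $\ell(\chi_0)\in\B!$ and $\ell(\varsigma)\in\B?$. In either case $\ell(\chi_0|u)=\ell(\chi_0)$, and since $\ell(\nu)\not\in\B?$ we also have $\ell(\chi_0)\not\in\B?$.

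Next, I would invoke \Obs{decompose parallel}, which tells me that $\chi$ has one of three shapes: $\chi'|P_2$, $\chi'|\chi_2$, or $P_1|\chi_2$, with the associated sources of the derivation components. The third shape is excluded at once by \Obs{equiv_parallel}: the equivalence $\chi_0|u\equiv P_1|\chi_2$ would force $\chi_0\equiv P_1$, impossible since no derivation is related to a process. For the first shape $\chi=\chi'|P_2$, another application of \Obs{equiv_parallel} to $\chi'|P_2\equiv\chi_0|u$ yields $\chi'\equiv\chi_0$, so $\chi'$ represents $\nu$; the source information in \Obs{decompose parallel} already gives $\src(\chi')=P_1$, establishing the first alternative.

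The main obstacle is the second shape $\chi=\chi'|\chi_2$. Here \Obs{equiv_parallel} again delivers $\chi'\equiv\chi_0$, so $\chi'$ represents $\nu$ and $\ell(\chi')\not\in\B?$; but I must also conclude $\ell(\chi')=\ell(\chi)\in\B!$ and that $\chi_2$ is a $\varsigma$ with $\ell(\varsigma)\in\B?$. Since $\chi'|\chi_2$ is a valid derivation, it arises from \myref{Comm} or \myref{Bro-c}. I would rule out \myref{Comm} by a label mismatch: under \myref{Comm} we have $\ell(\chi)=\tau$, yet $\ell(\chi)=\ell(\chi_0|u)=\ell(\chi_0)=\ell(\chi')\in\HC$, contradicting $\tau\notin\HC$. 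Hence \myref{Bro-c} applies, and the $\circ$-table combined with $\ell(\chi')\not\in\B?$ forces $\sharp_1=!$ and $\sharp_2=?$; this yields $\ell(\chi_2)\in\B?$ (so we may take $\chi_2=\varsigma$), $\ell(\chi)=\ell(\chi')\in\B!$, and $\src(\chi')=P_1$ via \Obs{decompose parallel}.
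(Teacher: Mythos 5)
Your argument is correct and follows the same route the paper intends: the paper states this as an unproved observation whose justification is precisely the case split of \Obs{decompose parallel}, and you carry that out, using \Obs{equiv_parallel} to exclude the shape $P_1|\chi_2$ and to transfer representativeness to $\chi'$, and the \myref{Comm}/\myref{Bro-c} distinction to force $\sharp_1=\mathord!$, $\sharp_2=\mathord?$ in the $\chi'|\chi_2$ case. This is a faithful elaboration of the detail the paper leaves implicit.
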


\begin{lemma}\label{lem:parallel derivation}
If $u_1|u_2\models \en(\nu|\_)$ for a state $u_1|u_2$ of\/ $\mathcal{U}$, then $u_1 \models \en(\nu)$.
\end{lemma}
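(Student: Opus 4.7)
The plan is to proceed by case analysis on the form of the state $u_1|u_2$. Since this is a state of\/ $\mathcal{U}$, either it is an \ABC expression---in which case $u_1|u_2 = P_1|P_2$ with $u_i = P_i$ processes---or it is a derivation, which by \Obs{decompose parallel} must have one of the three shapes $\zeta_1|P_2$, $P_1|\zeta_2$ or $\zeta_1|\zeta_2$ where $u_1$ is the left summand and $u_2$ the right. In every case, the hypothesis $u_1|u_2\models\en(\nu|\_)$ unfolds to the existence of a representative $\chi$ of the abstract transition $\nu|\_$ that either has source $u_1|u_2$ (when $u_1|u_2$ is a process) or is concurrent with $u_1|u_2$ (when it is a derivation). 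By \Obs{cons33}, such a $\chi$ must itself have the parallel shape $\chi'|Q$ or $\chi'|\varsigma$, with $\chi'$ a representative of $\nu$ and $\src(\chi')$ the left component of $\src(\chi)$.

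First I would dispatch the process case: if $u_1|u_2 = P_1|P_2$, then $\src(\chi) = P_1|P_2$ forces $\src(\chi') = P_1 = u_1$, yielding $u_1\models\en(\nu)$ immediately. For the derivation cases, the argument inspects \Def{concurrency} to see which clauses can witness $\chi\conc\zeta$ given that $\chi = \chi'|w$ with $w$ either a process or a $\B?$-labelled derivation, and $\zeta$ has one of the three shapes above. In each subcase exactly one family of clauses applies, and extracting the left half yields either $\src(\chi')=P_1$ (when the left component of $\zeta$ is the process $P_1$, so $u_1 = P_1\models\en(\nu)$) or $\chi'\conc\zeta_1$ (when the left component is a derivation $\zeta_1$, so $u_1 = \zeta_1 \models\en(\nu)$).

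The main obstacle is the bookkeeping imposed by \Def{concurrency}: the two-form decomposition of \Obs{cons33} (accounting for broadcast synchronisation between $\nu|\_$ and a receive on the right) must be matched against clauses 1, 2, 4, 5 and 6 of \Def{concurrency}, each with several symmetric instances. However, because \Obs{cons33} forces the outermost constructor of $\chi$ to be $|$ with a derivation on the left, the clauses producing $+$, $\backslash c$, $[f]$, $A\mathord:$, or a $P|\chi''$ shape on the wrong side are ruled out, leaving only the clauses whose left-hand side already has the form $\chi'|Q$ or $\chi'|\varsigma$. The essential verification is then that in clauses 4--6, where the matching requires $\chi'\conc\zeta_1$, this concurrency is exactly the definition of $\zeta_1\models\en(\nu)$; and in clauses 1 and 2, where the left component of $\zeta$ is a process $P_1$, the side-conditions $\src(\chi')=P$ of \Def{concurrency} directly give $\src(\chi') = P_1$. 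In either situation, this is exactly what is needed to conclude $u_1\models\en(\nu)$.
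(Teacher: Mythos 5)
Your proof follows essentially the same route as the paper's: split on whether each $u_i$ is a process or a derivation, use \Obs{cons33} (together with \Obs{source} to see that $\src(\chi)=\src(\zeta)$ in the derivation cases) to force the representative of $\nu|\_$ into the shape $\chi'|Q$ or $\chi'|\varsigma$, and then read off either $\src(\chi')=P_1$ or $\chi'\conc\zeta_1$ from the clauses of \Def{concurrency}. The only quibble is that your list of clause numbers should also include the instance $\chi{|}P \conc \zeta{|}P$ of the third clause of \Def{concurrency} (needed when both right-hand components are the same process), but your verbal criterion---keeping exactly those clauses whose left-hand side has the form $\chi'|Q$ or $\chi'|\varsigma$---already covers it, so this is a labelling slip rather than a gap.
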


\begin{proof}
We make a case distinction based on whether $u_i$ is a process or a derivation.

Suppose $P_1|P_2\models \en(\nu|\_)$. Then $P_1|P_2\mathbin=\src(\chi)$ for a representative $\chi$ of $\nu|\_$.
By \Obs{cons33}, it has the form $\chi'|v$ with $\chi'$ a representative of $\nu$ and $\src(\chi')=P_1$. 
Therefore $P_1\models\en(\nu)$.

Suppose $P_1|\zeta_2\models \en(\nu|\_)$ with $\src(P_1|\zeta_2)=P_1|P_2$. Then $\chi\conc P_1|\zeta_2$ for a
representative $\chi$ of $\nu|\_$. 
By \Obs{source} $\src(\chi)=\src(P_1|\zeta_2)=P_1|P_2$. 
So, by \Obs{cons33} it is has the form $\chi'|v$, with $\chi'$ a representative of $\nu$ and $\src(\chi')=P_1$. 
Therefore $P_1\models\en(\nu)$.

Suppose $\zeta_1|u_2\models \en(\nu|\_)$ with $\src(\zeta_1|u_2)=P_1|P_2$. Then $\chi\conc \zeta_1|u_2$ for a
representative $\chi$ of $\nu|\_$. 
By \Obs{source} $\src(\chi)=\src(\zeta_1|u_2)=P_1|P_2$. So, by \Obs{cons33} it is has either the form $\chi'|P_2$ or $\chi'|\varsigma$ ($\ell(\chi')\in\B!$),
with $\chi'$ a representative of $\nu$.
So, $\chi'|P_2\conc \zeta_1|u_2$ or  $\chi'|\varsigma\conc \zeta_1|u_2$ and hence, by \Def{concurrency}, $\chi'\conc \zeta_1'$. Since $\chi'$ a representative of $\nu$, $\zeta_1\models\en(\nu)$.
\qed
\end{proof}

\begin{lemma}\label{lem:parallel 2 derivation}
If $u_1|u_2\mathbin{\models} \en(\nu_1|\nu_2)$ for a state $u_1|u_2$ of\/ $\mathcal{U}$, then
$u_i\mathbin{\models} \en(\nu_i)$ ($i\mathord=1,2$).
\end{lemma}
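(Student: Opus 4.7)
The plan is to mirror the structure of the proof of \Lem{parallel derivation}, first rephrasing the hypothesis via representatives and then performing a case analysis on whether each $u_i$ is a process or a derivation. The first step is to establish an analog of \Obs{cons33}: any representative $\chi$ of $\nu_1|\nu_2$ with $\src(\chi) = P_1|P_2$ is of the form $\chi_1|\chi_2$, where $\chi_i$ is a representative of $\nu_i$ and $\src(\chi_i) = P_i$ for $i=1,2$. This uses \Obs{decompose parallel} together with \Def{equiv}: because $\ell(\nu_1) = \overline{\ell(\nu_2)} \in \HC$ and thus neither label is in $\B!\cup\B?$, the equivalence rules that absorb a receive $\varsigma$ into a broadcast or replace a derivation by a process cannot fire on the top-level parallel structure, so the synchronising shape of the representative is preserved.

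With this observation in hand, four cases arise according to whether $u_i$ is a process $P_i$ or a derivation $\zeta_i$. If $u_1=P_1$ and $u_2=P_2$, then $P_1|P_2 = \src(\chi)$ for some representative $\chi = \chi_1|\chi_2$ of $\nu_1|\nu_2$, so $P_i = \src(\chi_i)$, hence $P_i \models \en(\nu_i)$. If $u_1=P_1$ and $u_2=\zeta_2$, then $\chi \conc P_1|\zeta_2$ for a representative $\chi$ of $\nu_1|\nu_2$; by \Obs{source} and the auxiliary observation, $\chi = \chi_1|\chi_2$ with $\src(\chi_1)=P_1$ and $\src(\chi_2)=\src(\zeta_2)$. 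Inspecting \Def{concurrency}, the only clause capable of producing $\chi_1|\chi_2 \conc P_1|\zeta_2$, given that $\ell(\chi_1),\ell(\chi_2)\in\HC$ (and hence not in $\B?$) and that the right-hand side has a process on the left, is the fourth clause in the form $\xi|\chi \conc P|\zeta$. This forces $\src(\chi_1)=P_1$ and $\chi_2 \conc \zeta_2$, yielding $P_1 \models \en(\nu_1)$ and $\zeta_2 \models \en(\nu_2)$. The case $u_1=\zeta_1$, $u_2=P_2$ is symmetric.

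Finally, if $u_1=\zeta_1$ and $u_2=\zeta_2$ are both derivations, then $\chi_1|\chi_2 \conc \zeta_1|\zeta_2$ for representatives $\chi_i$ of $\nu_i$. Because all four components are derivations and $\ell(\chi_1),\ell(\chi_2)\in\HC$, none of the clauses of \Def{concurrency} that require a process on one side or a $\B?$-label on $\chi_1$ or $\chi_2$ can apply; only the last clause, $\chi_1 \conc \zeta_1 \wedge \chi_2 \conc \zeta_2 \Rightarrow \chi_1|\chi_2 \conc \zeta_1|\zeta_2$, matches, giving $\zeta_i \models \en(\nu_i)$.

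The main obstacle I expect is the tedious enumeration of the clauses of \Def{concurrency} in each case, and in particular the verification that the residence of $\ell(\nu_1),\ell(\nu_2)$ in $\HC$ precisely eliminates all alternative derivations besides the one I use. This is the same style of case analysis that underlies \Obs{cons33} and \Lem{parallel derivation}; no new conceptual ingredient beyond the auxiliary observation about representatives of synchronisation-shaped abstract transitions should be needed.
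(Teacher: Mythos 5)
Your proof is correct and follows essentially the same route as the paper's: a case split on whether each $u_i$ is a process or a derivation, using \Obs{source} and \Obs{decompose parallel} to show that a representative of $\nu_1|\nu_2$ with source $P_1|P_2$ must have the form $\chi_1|\chi_2$ with $\chi_i$ representing $\nu_i$, and then reading off from \Def{concurrency} (exploiting $\ell(\nu_1)=\overline{\ell(\nu_2)}\in\HC$, hence not in $\B?$) that only the fourth, respectively sixth, clause can have produced the concurrency relation. Your explicit enumeration of which clauses are excluded is somewhat more detailed than the paper's terse ``by \Def{concurrency}'', but the argument is the same.
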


\begin{proof}
We make a case distinction based on whether $u_i$ is a process or a derivation.

Suppose $P_1|P_2\models \en(\nu_1|\nu_2)$. Then $P_1|P_2\mathbin=\src(\chi)$ for a representative $\chi$ of $\nu_1|\nu_2$.
By \Obs{decompose parallel}, it has the form $v_1|v_2$.
Since it is a representative of $\nu_1|\nu_2$, it has the form $\chi_1|\chi_2$, with $\chi_i$ a representative of $\nu_i$ ($i\mathord=1,2$).
Furthermore, $\src(\chi_i)=P_i$. It follows that $P_i\models\en(\nu_i)$ ($i\mathord=1,2$).

Suppose $P_1|\zeta_2\models \en(\nu_1|\nu_2)$ with $\src(P_1|\zeta_2)=P_1|P_2$. Then $\chi\conc P_1|\zeta_2$ for a
representative $\chi$ of $\nu_1|\nu_2$. 
By \Obs{source} $\src(\chi)=\src(P_1|\zeta_2)=P_1|P_2$. So,
by \Obs{decompose parallel}, $\chi$ has the form $v_1|v_2$.
Since it is a representative of $\nu_1|\nu_2$, it has the form $\chi_1|\chi_2$, with $\chi_i$ a representative of $\nu_i$ ($i\mathord=1,2$).
Moreover, $\ell(\chi_1)\mathbin=\ell(\nu_1) \mathbin= \overline{\ell(\nu_2)} \mathbin= \overline{\ell(\chi_2)}\mathbin\in\HC$.
So $\chi_1|\chi_2\conc P_1|\zeta_2$ and by \Def{concurrency} $\src(\chi_1)=P_1$ and $\chi_2\conc \zeta_2$. Since $\chi_i$ ($i\mathord= 1,2$) is a representative of $\nu_i$, $P_1\models\en(\nu_1)$ and $\zeta_2\models\en(\nu_2)$.

The case $\zeta_1|P_2$ follows by symmetry.

Suppose $\zeta_1|\zeta_2\models \en(\nu_1|\nu_2)$ with $\src(\zeta_1|\zeta_2)=P_1|P_2$. 
Then $\chi\conc \zeta_1|\zeta_2$ for a representative $\chi$ of $\nu_1|\nu_2$. 
By \Obs{source} $\src(\chi)=\src(P_1|\zeta_2)=P_1|P_2$. So,
by \Obs{decompose parallel}, $\chi$ has the form $v_1|v_2$.
Since it is a representative of $\nu_1|\nu_2$, it has the form $\chi_1|\chi_2$, with $\chi_i$ a representative of $\nu_i$ ($i\mathord=1,2$).
Moreover, $\ell(\chi_1)\mathbin=\ell(\nu_1) \mathbin= \overline{\ell(\nu_2)} \mathbin= \overline{\ell(\chi_2)}\mathbin\in\HC$.
So $\chi_1|\chi_2\conc \zeta_1|\zeta_2$ and by \Def{concurrency} 
$\chi_i\conc \zeta_i$ ($i\mathord= 1,2$). Since $\chi_i$  is a representative of $\nu_i$, $\zeta_i\models\en(\nu_i)$.
\qed\end{proof}

\begin{lemma}\label{lem:restriction derivation}
If $u\models \en(\nu\backslash c)$ for a state $u$ of\/ $\mathcal{U}$ of the form $u'\backslash c$, then $u' \models \en(\nu)$.
\end{lemma}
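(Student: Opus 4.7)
The plan is to mirror the structure of \Lem{parallel derivation} and \Lem{restriction target}, performing a case distinction on whether $u$ is a process or a derivation (noting that $u$ has the form $u'\backslash c$ in either case).

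For the process case, suppose $u = P'\backslash c$ with $P'\backslash c \models \en(\nu\backslash c)$. Then $P'\backslash c = \src(\chi)$ for some representative $\chi$ of $\nu\backslash c$. By inspection of the operational rules of \Tab{ABC}, a derivation whose source is $P'\backslash c$ must have been built using rule \myref{Res}, and hence has the form $\chi = \chi''\backslash c$ with $\src(\chi'') = P'$ and $c\neq \ell(\chi'')\neq\bar c$. Since $\chi\backslash c$ is in the equivalence class $\nu\backslash c$, whose canonical form is $\chi_0\backslash c$ for any representative $\chi_0$ of $\nu$, we have $\chi''\backslash c \equiv \chi_0\backslash c$. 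Applying \Obs{equiv_rename_restrict} gives $\chi''\equiv \chi_0$, so $\chi''$ is a representative of $\nu$, and $P' \models \en(\nu)$.

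For the derivation case, suppose $u = \zeta'\backslash c$ (necessarily built via rule \myref{Res}) and $\zeta'\backslash c \models \en(\nu\backslash c)$. Then $\chi \conc \zeta'\backslash c$ for some representative $\chi$ of $\nu\backslash c$. The key observation here is that among the seven clauses of \Def{concurrency}, only the last clause yields a pair whose right-hand component has outermost operator $\backslash c$; all other clauses produce right-hand sides of the form $P|\zeta$, $\zeta|P$, $\zeta|\xi$, $\zeta+P$, $P+\zeta$, $\zeta[f]$ or $A\mathord{:}\zeta$. Consequently $\chi$ must have the form $\chi''\backslash c$ with $\chi'' \conc \zeta'$. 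As before, \Obs{equiv_rename_restrict} applied to $\chi''\backslash c \equiv \chi_0\backslash c$ yields $\chi''\equiv \chi_0$, so $\chi''$ is a representative of $\nu$ with $\chi'' \conc \zeta'$, whence $\zeta' \models \en(\nu)$.

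There is no real obstacle; the only mildly subtle step is the syntactic inversion in the derivation case, confirming that no other clause of \Def{concurrency} can produce a right-hand side of the form $\zeta'\backslash c$. This is completely analogous to the case analyses already carried out in \Lem{parallel derivation} and in the proof of \Lem{restriction target}, and requires no new machinery beyond \Def{concurrency}, \Def{equiv}, and \Obs{equiv_rename_restrict}.
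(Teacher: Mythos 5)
Your proof is correct and follows essentially the same route as the paper's: a case distinction on whether $u$ is a process or a derivation, inversion to expose the outermost $\backslash c$ on the representative $\chi$, and \Obs{equiv_rename_restrict} to pass from a representative of $\nu\backslash c$ to one of $\nu$. The only (harmless) variation is that in the derivation case you obtain the form $\chi''\backslash c$ by inverting \Def{concurrency} directly, whereas the paper first applies \Obs{source} and \Obs{decompose restriction} to $\src(\chi)=\src(\zeta)=P'\backslash c$ and only then inverts the concurrency clause; both yield $\chi''\conc\zeta'$ with $\chi''$ a representative of $\nu$.
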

\begin{proof}
We make a case distinction based on whether $u$ is a processes $P$ or a derivation $\zeta$.

Suppose $P\models \en(\nu\backslash c)$. Then $P\mathbin=\src(\chi)$ for a representative $\chi$ of $\nu\backslash c$.
Since, by assumption, $P$ is of the form $P'\backslash c$, \Obs{decompose restriction} says that $\chi$ is of
the form $\chi'\backslash c$ with $\src(\chi')=P'$. Since $\chi=\chi'\backslash c$ is a representative of $\nu\backslash c$,
$\chi'$ must be a representative of $\nu$.  Hence $P'\models\en(\nu)$.

Suppose $\zeta\models \en(\nu\backslash c)$. Then $\chi\conc\zeta$ for a
representative $\chi$ of $\nu\backslash c$. Since, by assumption, $\zeta$ is of the form $\zeta'\backslash c$,
$\src(\zeta)$ must be of the form $P'\backslash c$. By \Obs{source} $\src(\chi)=\src(\zeta)$, so by
\Obs{decompose restriction} $\chi$ is of the form $\chi'\backslash c$. Since $\chi=\chi'\backslash c$
is a representative of $\nu\backslash c$, $\chi'$ must be a representative of $\nu$.
As $\chi'\backslash c \conc \zeta'\backslash c$, we have $\chi'\conc\zeta'$.
Thus $\zeta' \models \en(\nu)$.
\qed\end{proof}

\begin{lemma}\label{lem:relabelling derivation}
If $u[f]\models \en(\nu[f])$ for a state $u$ of\/ $\mathcal{U}$, then $u \models \en(\nu)$.
\end{lemma}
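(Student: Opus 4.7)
The proof will follow the template of \Lem{restriction derivation} with only cosmetic changes, so the plan is simply to replicate that two-case structure with relabelling in place of restriction. The underlying facts I would invoke are (i) the analogue of \Obs{decompose restriction} for relabelling, namely that any derivation $\chi$ with $\src(\chi) = P[f]$ is of the form $\chi'[f]$ with $\src(\chi') = P$ (a direct consequence of rule \myref{Rel} being the only rule that can produce a transition out of an expression whose outermost operator is $[f]$), and (ii) the clause of \Def{concurrency} stating that $\chi'[f]\conc\zeta'[f]$ holds exactly when $\chi'\conc\zeta'$.

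First I would split on whether $u$ is a process $P$ or a derivation $\zeta$. In the process case, $P[f]\models\en(\nu[f])$ unfolds to $P[f]=\src(\chi)$ for some representative $\chi$ of $\nu[f]$; fact (i) forces $\chi=\chi'[f]$ with $\src(\chi')=P$, and since $\chi'[f]$ represents $\nu[f]$, $\chi'$ represents $\nu$, giving $P\models\en(\nu)$. In the derivation case, $\zeta[f]\models\en(\nu[f])$ means $\chi\conc\zeta[f]$ for a representative $\chi$ of $\nu[f]$; by \Obs{source} we have $\src(\chi)=\src(\zeta[f])=\src(\zeta)[f]$, so fact (i) again forces $\chi=\chi'[f]$, and then fact (ii) gives $\chi'\conc\zeta$. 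Since $\chi'$ represents $\nu$, we conclude $\zeta\models\en(\nu)$.

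The only step that requires a little care is the justification of fact (i), but this is exactly the observation already used silently in the relabelling case of \Lem{restriction target} (the ``proceeds likewise'' remark): rules \myref{Sum-l}, \myref{Sum-r} and \myref{Rec} do not apply when the outermost operator of the source is $[f]$, and \myref{Rel} has a single premise whose source is the argument of $[f]$. Everything else is bookkeeping, and no new technical ingredient beyond what already appears in the preceding lemma is needed; the argument is fully symmetric to the restriction case because $[f]$, like $\backslash c$, is a unary operator whose SOS rule has a single positive premise and whose interaction with $\equiv$ and $\conc$ is the expected congruence-like behaviour. Hence I expect no genuine obstacle, and the proof can simply end with ``Similar to the proof of \Lem{restriction derivation}.''
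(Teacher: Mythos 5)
Your proof is correct and matches the paper's, which simply reads ``Exactly as above, using an analogue of \Obs{decompose restriction} for relabelling'': the same two-case split on whether $u$ is a process or a derivation, the same decomposition of the representative as $\chi'[f]$, and the same inversion of the $\conc$-clause for relabelling. You have merely spelled out what the paper leaves implicit.
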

\begin{proof}
Exactly as above, using an analogue of \Obs{decompose restriction} for relabelling.\!\!\qed
\end{proof}

\subsubsection{$\nu$-enabled paths in\/ $\mathcal{S}$}

\begin{definition}\rm
A path $\rho$ in\/ $\mathcal{S}$ is \emph{$\nu$-enabled} for an abstract transition $\nu$, if
either it is finite and its last state $Q\in\T_{\rm\ABC}$ satisfies $Q\models\en(\nu)$, or
it is infinite and has a suffix $\rho'$ such that 
$\zeta\models\en(\nu)$ for all derivations $\zeta$ with $\widehat\zeta$ a transition in $\rho'$.
\end{definition}

\begin{lemma}\label{lem:nu-enabled}
If a path $\rho$ in\/ $\mathcal{S}$ is $Y\!$-just and $\nu$-enabled then $\ell(\nu)\in Y$.
\end{lemma}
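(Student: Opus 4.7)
The proof will proceed by structural induction on the abstract transition $\nu$ (as enumerated after \Def{equiv}), and in each inductive step I will treat separately the two sub-cases from the definition of $\nu$-enabledness of $\rho$. The finite sub-case can be handled uniformly across all $\nu$: whenever $\rho$ ends in a state $Q\models\en(\nu)$, one has $Q=\src(\chi)$ for some representative $\chi$ of $\nu$, so $Q$ admits a transition labelled $\ell(\nu)$; requirement~1 of \Def{just path} then forces $\ell(\nu)\in Y\cup\B?$, and since $\ell(\nu)\notin\B?$ by \Def{equiv}, $\ell(\nu)\in Y$.

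For the infinite sub-case I will pass to the suffix $\rho'$ on which every derivation $\zeta$ of every transition satisfies $\zeta\models\en(\nu)$; by requirement~5 of \Def{just path} this $\rho'$ is itself $Y$-just. The base case $\nu=\shar\alpha Q$ is vacuous, since no derivation is $\conc$-related to $\shar\alpha Q$ (by inspection of \Def{concurrency}), so $\rho'$ cannot contain any transition, contradicting infiniteness. When $\nu=\nu'\backslash c$, \Obs{decompose restriction} forces $\rho'$ to visit only states of the form $P\backslash c$, and requirement~3 decomposes $\rho'$ as $\rho''\backslash c$ for some $(Y\cup\{c,\bar c\})$-just path $\rho''$ of $P$; \Lem{restriction derivation} then lets me conclude that $\rho''$ is $\nu'$-enabled, and the inductive hypothesis delivers $\ell(\nu')\in Y\cup\{c,\bar c\}$, whence $\ell(\nu)=\ell(\nu')\in Y$ since $c\neq\ell(\nu')\neq\bar c$. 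The relabelling case $\nu=\nu'[f]$ is symmetric via \Lem{relabelling derivation}. For $\nu=\nu_1|\_$ (symmetrically $\_|\nu_2$), \Obs{cons33} forces $\rho'$ into parallel-form states, requirement~2 yields $\rho'\Rrightarrow\rho_1'|\rho_2'$ with $\rho_1'$ an $X$-just path of $P_1$ and $X\subseteq Y$, and \Lem{parallel derivation} will show $\rho_1'$ is $\nu_1$-enabled, so the IH gives $\ell(\nu)=\ell(\nu_1)\in X\subseteq Y$. Finally, when $\nu=\nu_1|\nu_2$ (so $\ell(\nu)=\tau$), the only applicable clause of \Def{concurrency} for a representative $\chi_1|\chi_2$ with $\ell(\chi_i)\in\HC$ is clause~6, so every derivation $\zeta$ in $\rho'$ takes the form $\zeta_1|\zeta_2$; \Lem{parallel 2 derivation} then makes both $\rho_i'$ $\nu_i$-enabled, induction yields $\ell(\nu_1)\in X$ and $\ell(\nu_2)=\overline{\ell(\nu_1)}\in Z$, contradicting $X\cap\bar Z=\emptyset$. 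This sub-case is therefore impossible, so the lemma holds vacuously here (consistently with $\tau\notin Y\subseteq\HC$).

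The key difficulty will be verifying the $\nu_i$- or $\nu'$-enabledness of each projected sub-path, because $\nu$-enabledness requires \emph{every} derivation of each sub-transition to witness the enabling condition---not merely the ones arising from the particular decomposition chosen to witness $Y$-justness. The plan for handling this is combinatorial: given an arbitrary derivation of a sub-transition, I will reassemble it with a suitable partner (a state, a $\B?$-derivation, or a matching component representative) into a derivation of the corresponding transition of $\rho'$, invoke the global hypothesis there, and then descend to the sub-transition via the appropriate enabling-decomposition lemma (\Lem{parallel derivation}, \Lem{parallel 2 derivation}, \Lem{restriction derivation}, or \Lem{relabelling derivation}). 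A secondary subtlety that will need to be addressed is that if $\rho'$ starts at an agent identifier whose defining equation is of parallel/restricted/relabelled form, the syntactic form needed to apply the decomposition requirements of \Def{just path} only becomes manifest after the first (Rec)-step; this I will resolve by passing to the appropriate later suffix.
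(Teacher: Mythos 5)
Your plan matches the paper's own proof essentially step for step: the finite case via the first clause of \Def{just path}, and the infinite case by structural induction on $\nu$, passing to the suffix $\rho'$, applying the decomposition clauses of $Y\!$-justness and the enabling-decomposition lemmas (Lemmas \ref{lem:parallel derivation}, \ref{lem:parallel 2 derivation}, \ref{lem:restriction derivation}, \ref{lem:relabelling derivation}), and ruling out the case $\nu=\nu_1|\nu_2$ by the contradiction with $X\cap\bar{Z}=\emptyset$; the two subtleties you flag (recomposing an arbitrary derivation of a sub-transition with a partner so as to invoke the global hypothesis on $\rho'$, and passing to a later suffix so that the decomposition clauses become applicable) are exactly what the paper does, the latter by restarting at $\target(\zeta_0)$ for $\zeta_0$ a derivation of the first transition of $\rho'$. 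Two harmless slips: the parallel/restricted form of the states is established by \Lem{parallel target} and \Lem{restriction target} applied to $\zeta_0$, not by \Obs{cons33} or \Obs{decompose restriction} (which presuppose that form), and in the $\nu_1|\nu_2$ case clause~4 of \Def{concurrency} also admits derivations of the forms $\zeta_1|P$ and $P|\zeta_2$, which \Lem{parallel 2 derivation} covers in any event.
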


\begin{proof}
If a finite path $\rho$ in\/ $\mathcal{S}$ is $\nu$-enabled, its last state $Q\in\T_{\rm\ABC}$
satisfies $\en(\nu)$, and thus $Q\ar{\ell(\nu)}$. The first clause of \Def{just path}
($Y\!$-justness) tells that $\ell(\nu)\in Y$.

For infinite paths $\rho$, we apply structural induction on $\nu$.
Let $\rho$ be an infinite path that is $Y\!$-just and $\nu$-enabled, and let $\rho'$ be a suffix of
$\rho$, such that $\zeta\models\en(\nu)$ for each derivation $\zeta$ with $\widehat\zeta$ a transition in $\rho'$.
Moreover, $P\models\en(\nu)$ for each state of $P$ on $\rho'$, using \Prop{target} and the definition of $\,\widehat{\ }$.
Let $\zeta_0$ be a derivation of the first transition in $\rho'$,
and let $\rho''$ be the suffix of $\rho'$ starting from $Q:=\target(\zeta_0)$.

Let $\nu=\shar{\alpha}P$ for $\alpha\in Act$ and $P\in\T_{\rm\ABC}$.
Since no representative $\chi$ of $\nu$ is concurrent with any derivation $\zeta$, it follows that
$\rho'$ contains no transitions, and hence consists of a single state only.
This contradicts the presumed infinity of $\rho$.

Let $\nu\mathbin=\nu_1|\_$. Since $\zeta_0\models\en(\nu_1|\_)$, by \Lem{parallel target} $Q$ has the form $P_1|P_2$.
By \Def{just path} $\rho''$ can be decomposed into an $X$-just path $\rho_1$ of $\widehat P_1=P_1$ and a $Z$-just
path $\rho_2$ of $\widehat P_2=P_2$ such that $Y\supseteq X\mathord\cup Z$ and $X\mathord\cap \bar{Z}\mathbin=\emptyset$.
By \Obs{decompose parallel} and the definition of $\,\widehat\ $, all states $u$ of $\rho''$ as well as all derivations $u$ of transitions in $\rho''$ have the form $u_1|u_2$.
Since each such $u_1|u_2$ satisfies $\en(\nu_1|\_)$, by \Lem{parallel derivation} $u_1\models\en(\nu_1)$.
It follows that $u_1\models\en(\nu_1)$ for each state $u_1$ in $\rho_1$ and for each derivation $u_1$ of a transition in $\rho_1$.
Hence $\rho_1$ is $\nu_1$-enabled. 
By induction $\ell(\nu_1)\in X$. So $\ell(\nu)=\ell(\nu_1)\in X\subseteq Y$.

The case $\nu\mathbin=\_|\nu_2$ follows by symmetry.

Let $\nu\mathbin=\nu_1|\nu_2$.  Then $\ell(\nu_1)=\overline{\ell(\nu_2)}\in\HC$.
Since $\zeta_0\models\en(\nu_1|\nu_2)$, by \Lem{parallel target} $Q$ has the form $P_1|P_2$.
By \Def{just path} $\rho''$ can be decomposed into an $X$-just path $\rho_1$ of $\widehat P_1=P_1$ and a $Z$-just
path $\rho_2$ of $\widehat P_2=P_2$ such that $Y\supseteq X\mathord\cup Z$ and $X\mathord\cap \bar{Z}\mathbin=\emptyset$.
By \Obs{decompose parallel} and the definition of $\,\widehat\ $, all states $u$ of $\rho''$ as well as all derivations $u$ of transitions in $\rho''$ have the form $u_1|u_2$.
Since each such $u_1|u_2$ satisfies $\en(\nu_1|\nu_2)$, by \Lem{parallel 2 derivation} $u_i\models\en(\nu_i)$ ($i\mathord=1,2$).
It follows that $u_i\models\en(\nu_i)$ for each state $u_i$ in $\rho_i$ and for each derivation $u_i$ of a transition in $\rho_i$.
Hence $\rho_i$ is $\nu_i$-enabled. By induction $\ell(\nu_1)\in X$ and
$\ell(\nu_1)=\overline{\ell(\nu_2)}\in \bar{Z}$, in
contradiction with $X\mathord\cap \bar{Z}\mathbin=\emptyset$. Therefore, this case cannot occur.

Let $\nu\mathbin=\nu'\backslash c$ (with $c\mathbin\in\HC$). Then $c \neq\ell(\nu')\neq \bar c$.
Since $\zeta_0\models\en(\nu'\backslash c)$, by \Lem{restriction target} $Q$ has the form $P\backslash c$.
By \Def{just path} $\rho''$ can be decomposed into a $Y\mathord\cup\{c\}$-just path $\rho'''$ of $\widehat P=P$.
By \Obs{decompose restriction} all derivations $\zeta$ of
transitions in $\rho''$ have the form $\zeta'\backslash c$.
Since each such $\zeta'\backslash c$ satisfies $\en(\nu'\backslash c)$, by \Lem{restriction derivation} $\zeta'\models\en(\nu')$.
It follows that $\zeta'\models\en(\nu')$ for each derivation $\zeta'$ of a transition in $\rho'''$.
Hence $\rho'''$ is $\nu'$-enabled. By induction $\ell(\nu')\in Y\cup\{c\}$. Since $\ell(\nu')\neq c$
we obtain $\ell(\nu)=\ell(\nu')\in Y$.

Let $\nu\mathbin=\nu'[f]$.
Since $\zeta_0\models\en(\nu'[f])$, by \Lem{restriction target} $Q$ has the form $P[f]$.
By \Def{just path} $\rho''$ can be decomposed into a $f^{\-1}(Y)$-just path $\rho'''$ of $\widehat P=P$.
By the relabelling variant of \Obs{decompose restriction} all derivations $\zeta$ of
transitions in $\rho''$ have the form $\zeta'[f]$.
Since each such $\zeta'[f]$ satisfies $\en(\nu'[f])$, by \Lem{relabelling derivation} $\zeta'\models\en(\nu')$.
It follows that $\zeta'\models\en(\nu')$ for each derivation $\zeta'$ of a transition in~$\rho'''$.
Hence $\rho'''$ is $\nu'$-enabled. By induction $\ell(\nu')\in f^{\-1}(Y)$. So
$\ell(\nu)=f(\ell(\nu'))\in Y$.
\qed
\end{proof}

\noindent
The following result directly implies the ``$\Rightarrow$''-direction of \Thm{just path}.

\begin{proposition}
Let $\rho$ be a just path in\/ $\mathcal{S}$. Then $\rho=\widehat\pi$ for a path $\pi$ in\/ $\mathcal{U}$ that satisfies
$\pi\not\models \mathbf{FG}\,\en(\nu)$ for each abstract transition $\nu$ with $\ell(\nu)\in\B!\cup\{\tau\}$.
\end{proposition}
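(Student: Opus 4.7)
The plan is to lift $\rho$ to a suitable $\pi$ in $\mathcal{U}$ (so that $\widehat\pi=\rho$) by carefully choosing, for each $\mathcal{T}$-transition in $\rho$, which derivation to use; the two cases of finite and infinite $\rho$ will be treated separately. If $\rho$ is finite, it is $\HC$-just, so by \Prop{progressing} its last state $Q\in\T_{\rm\ABC}$ admits only actions in $\HC\cup\B?$. Consequently, for any abstract transition $\nu$ with $\ell(\nu)\in\B!\cup\{\tau\}$, no representative of $\nu$ has source $Q$, whence $Q\not\models\en(\nu)$. Any lift $\pi$ of $\rho$ ends in $Q$, so $\pi\not\models\mathbf{F}\mathbf{G}\,\en(\nu)$, using that $\mathbf{F}\mathbf{G}\,\phi$ on a finite path reduces to $\phi$ holding in its last state.

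For infinite $\rho$, \Lem{nu-enabled} applied with $Y=\HC$ gives that $\rho$ is not $\nu$-enabled for any abstract transition $\nu$ with $\ell(\nu)\in\B!\cup\{\tau\}$, since $\ell(\nu)\notin\HC$. Unfolding the definition of $\nu$-enabledness for paths in $\mathcal{S}$, this means that every suffix of $\rho$ contains a $\mathcal{T}$-transition admitting some derivation $\zeta$ with $\zeta\not\models\en(\nu)$. We shall construct $\pi$ by a diagonal argument: enumerate the (countably many) abstract transitions with label in $\B!\cup\{\tau\}$ as $\nu_1,\nu_2,\dots$ and fix $f\colon\IN\to\IN$ hitting each value infinitely often. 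Inductively pick increasing $\mathcal{T}$-transition positions $m_1<m_2<\cdots$ in $\rho$, together with derivations $\zeta_k$ of the transition at position $m_k$ satisfying $\zeta_k\not\models\en(\nu_{f(k)})$; such an $m_k>m_{k-1}$ exists by applying the previous sentence to the suffix of $\rho$ starting after $m_{k-1}$. In $\pi$ place the derivation $\zeta_k$ at position $m_k$, and at every other transition position place any derivation of the corresponding transition. Then for each $\nu_i$ the set of $k$ with $f(k)=i$ is infinite, so $\pi$ visits infinitely many states $\zeta_k$ failing $\en(\nu_i)$, and hence $\pi\not\models\mathbf{F}\mathbf{G}\,\en(\nu_i)$.

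The main obstacle is the combinatorial care required in the infinite case: a naively chosen lift may accidentally satisfy $\mathbf{F}\mathbf{G}\,\en(\nu)$ for some $\nu$, so a single lift must simultaneously defeat this for every relevant $\nu$. Diagonalization succeeds because countably many such constraints can be interleaved, exploiting that each individual $\nu$ has ``bad'' derivations available arbitrarily late --- this last fact being exactly the content of \Lem{nu-enabled} together with the $\HC$-justness of $\rho$.
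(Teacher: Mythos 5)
Your proof is correct and follows essentially the same route as the paper's: the finite case is handled identically via the first clause of the justness definition, and the infinite case uses the same diagonalization, with your function $f$ hitting each index infinitely often playing exactly the role of the paper's enumeration in which each abstract transition occurs infinitely often, and \Lem{nu-enabled} (with $Y=\HC$) supplying the arbitrarily late derivations $\zeta_k\not\models\en(\nu_{f(k)})$.
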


\begin{proof}
First, suppose that $\rho$ is a finite just path in\/ $\mathcal{S}$.
By \Def{just path} it ends in a state $Q\in\T_{\rm\ABC}$ that admits actions from $\HC\cup\B?$ only.
Pick any path $\pi$ in\/ $\mathcal{U}$ with $\widehat\pi =\rho$.
Then $\pi$ ends in $Q$ as well.
Hence $\pi\models \mathbf{FG}\,\en(\nu)$ for no abstract transition $\nu$ with $\ell(\nu)\in\B!\cup\{\tau\}$.

Next, consider the case that $\rho$ is infinite.
There are countably many abstract transitions.
Let $(\nu_i)_{i=0}^\infty$ be an enumeration of the abstract transitions $\nu$ with $\ell(\nu)\in\B!\cup\{\tau\}$,
such that each such $\nu$ occurs infinitely often in this sequence.

With induction on $i\in \IN$, we construct finite paths
$\pi_i$ in\/ $\mathcal{U}$ such that $\pi_i$ will be a strict prefix of $\pi_j$ when $i<j$, and
$\widehat\pi_i$ is a prefix of $\rho$ for each $i\in\IN$.

Let $\pi_0$ be an arbitrary finite path in\/ $\mathcal{U}$ with $\widehat\pi_0$ a prefix of $\rho$.
Given $\pi_i$, let $\zeta_i$ be an arbitrary derivation 
such that $\zeta_i\not\models \en(\nu_i)$ and $\widehat \zeta_i$ occurs in $\rho$ past
the prefix $\widehat\pi_i$.
Such a $\zeta_i$ must exists, as otherwise $\rho$ would be $\nu_i$-enabled, which contradicts \Lem{nu-enabled}.
(Remember that by assumption $\rho$ is just.)

We obtain $\pi_{i+1}$ by extending $\pi_i$ in a way such that $\widehat\pi_{i+1}$ is a prefix of 
$\rho$ up to and including $\widehat\zeta_i$ and its target state; the last derivation of $\pi_{i+1}$ is set  to $\zeta_i$.
All derivations different from $\zeta_i$ that are not part of $\pi_i$ can be chosen
arbitrarily, under the restriction that $\widehat\pi_{i+1}$ is a prefix of $\rho$.

Now $\pi:=\lim_{i\rightarrow\infty}\pi_i$ exists and satisfies $\rho=\widehat\pi$. By construction,
$\pi\models \neg\mathbf{FG}\,\en(\nu)$ for any abstract transition $\nu$ with $\ell(\nu)\in\B!\cup\{\tau\}$.
\qed\end{proof}

\end{appendix}
\end{document}